\documentclass[acmsmall,screen,nonacm]{acmart}

\renewcommand\footnotetextcopyrightpermission[1]{}
\usepackage{mathtools}

\usepackage{listings}
\usepackage[ruled,vlined,linesnumbered]{algorithm2e}
\SetKwProg{Fn}{function}{:}{}
\SetKwFunction{TABLED}{Tabled}
\SetKwFunction{TABLEDWHNF}{TabledWHNF}
\SetKwFunction{RESOLVE}{Resolve}
\SetKwFunction{WHNF}{WHNF}
\SetKwFunction{Subst}{Subst}
\SetKw{Break}{break}

\usepackage{tikz}
\usetikzlibrary{arrows.meta, positioning, fit}

\usepackage{autobreak}

\newcommand{\outm}{\mathsf{out}}             

\newcommand{\Lamsh}{\mathsf{Lam}}
\newcommand{\Appsh}{\mathsf{App}}
\newcommand{\Varsh}{\mathsf{Var}}

\AtBeginDocument{%
  \theoremstyle{acmdefinition}%
}

\makeatletter
\@ifundefined{ifTablambdaBody}{\newif\ifTablambdaBody\TablambdaBodytrue}{}
\@ifundefined{ifTablambdaAppendix}{\newif\ifTablambdaAppendix\TablambdaAppendixtrue}{}
\@ifundefined{ifTablambdaChinese}{\newif\ifTablambdaChinese\TablambdaChinesefalse}{}
\makeatother

\newcommand{\bilingual}[2]{#1}
\ifTablambdaChinese
\usepackage{CJKutf8}
\renewcommand{\bilingual}[2]{\texorpdfstring{#2}{#1}}
\fi

\acmJournal{PACMPL}
\begin{document}
\ifTablambdaChinese
\begin{CJK*}{UTF8}{gbsn}
  \renewcommand{\proofname}{证明}
  \makeatletter
  \newcommand{\tablambdaThmName}[1]{\@ifundefined{tablambda@thm@#1}{#1}{\csname tablambda@thm@#1\endcsname}}
  \renewcommand{\thmname}[1]{\tablambdaThmName{#1}}
  \@namedef{tablambda@thm@Theorem}{定理}
  \@namedef{tablambda@thm@Lemma}{引理}
  \@namedef{tablambda@thm@Corollary}{推论}
  \@namedef{tablambda@thm@Proposition}{命题}
  \@namedef{tablambda@thm@Conjecture}{猜想}
  \@namedef{tablambda@thm@Definition}{定义}
  \@namedef{tablambda@thm@Example}{例}
  \@namedef{tablambda@thm@Remark}{注}
  \makeatother
  \fi

  \title[Cyclic Graphs and Memoization in Pure $\lambda$-Calculus]{\bilingual{Cyclic Graphs and Memoization in Pure $\lambda$-Calculus}{纯 $\lambda$-演算中的环图与记忆化}}
  \ifTablambdaBody\else
  \subtitle{Supplementary Material}
  \fi

  \author{Bo Yang}
  \orcid{0000-0003-2757-9115}
  \affiliation{%
    \institution{Figure AI Inc.}
    \city{San Jose}
    \state{California}
    \country{USA}
  }
  \email{yang-bo@yang-bo.com}
  \thanks{The author completed this work independently and outside the scope of employment at Figure AI. No Figure AI time, funding, or other resources were used in support of this work.}

\begin{CCSXML}
<ccs2012>
<concept>
<concept_id>10003752.10003753.10003754.10003733</concept_id>
<concept_desc>Theory of computation~Lambda calculus</concept_desc>
<concept_significance>500</concept_significance>
</concept>
<concept>
<concept_id>10003752.10010124.10010131.10010134</concept_id>
<concept_desc>Theory of computation~Operational semantics</concept_desc>
<concept_significance>500</concept_significance>
</concept>
<concept>
<concept_id>10011007.10011006.10011008</concept_id>
<concept_desc>Software and its engineering~General programming languages</concept_desc>
<concept_significance>300</concept_significance>
</concept>
<concept>
<concept_id>10003752.10010124.10010138.10010143</concept_id>
<concept_desc>Theory of computation~Program analysis</concept_desc>
<concept_significance>300</concept_significance>
</concept>
</ccs2012>
\end{CCSXML}

  \ifTablambdaBody
  \begin{abstract}
    \ifTablambdaChinese
    纯函数式编程把不可变与非严格求值作为默认;我们更进一步,把内化(interning)与记忆化作为计算的默认。在以往对纯 $\lambda$-演算的操作语义中,表示并变换环状与无限数据需要额外的递归构造,如 $\texttt{letrec}$、$\mu$-绑定子,或为图归约内置的 $Y$;而要共享一个记忆化函数或动态规划函数中重复的计算,又需要一个非纯的缓存。我们证明无需任何此类扩展。我们把 tabling,即求解最小不动点方程的标准方法,应用于弱头归约(weak-head reduction),由此为纯 $\lambda$-演算定义出一种新的操作语义,且保持每个项原有的惰性(lazy)含义。一个只到达有限多个不同状态、且每个状态都在有限步内得解的项,会得到一张有限图,可能含环;演算保持纯性,所得图是可靠的(sound),且与归约顺序无关。我们将这一操作语义实现为一个 $\lambda$-演算解释器。它自动完成动态规划,无需记忆化表即可共享重复的子问题;它创建并变换环图,无需任何额外的递归构造;它还能判定非生产性(unproductive)循环,在有限时间内为 $\Omega$ 返回 $\bot$。

    求值器返回的是一张图,于是 $\lambda$-演算成为图计算的领域专用语言(DSL):动态规划的记忆表、博弈搜索的置换表(transposition table),以及 Datalog 可达性与指向分析(points-to analysis)的导出事实表,全都是在状态同一性上做 tabling,而它们无一需要手写。编译不过是又一个这样的问题:我们写出一个自举编译器(bootstrap compiler),它编译自身的源代码,而这一切都是一个纯 $\lambda$-项。
    \else
    Purely functional programming makes immutability and non-strict evaluation its defaults; we push further, making interning and memoization the defaults of computation. In prior operational semantics for the pure $\lambda$-calculus, representing and transforming cyclic and infinite data requires an added recursion construct, a $\texttt{letrec}$, a $\mu$-binder, or a built-in $Y$ for graph reduction, and sharing the repeated work of a memoized or dynamic-programming function requires an impure cache. We show that no extension is needed. We apply tabling, the standard method for solving a least-fixpoint equation, to weak-head reduction; this defines a new operational semantics for the pure $\lambda$-calculus that keeps each term's standard lazy meaning. A term that reaches finitely many distinct states, each solved in finitely many steps, comes out as a finite graph, possibly cyclic; the calculus stays pure, and the graph is sound and independent of reduction order. We implemented this operational semantics as a $\lambda$-calculus interpreter. It does dynamic programming automatically, sharing repeated subproblems with no memoization table. It creates and transforms cyclic graphs with no added recursion construct. And it decides an unproductive loop, returning $\bot$ for $\Omega$ in finite time.

    What the evaluator returns is a graph, so the $\lambda$-calculus becomes a DSL for graph computation: the memo table of dynamic programming, the transposition table of game search, and the derived-fact table of Datalog reachability and points-to analysis are all tabling on state identity, and none of them is written by hand. Compilation is one more such problem: we write a bootstrap compiler that compiles its own source, all as a pure $\lambda$-term.
    \fi
  \end{abstract}

  \ccsdesc[500]{Theory of computation~Lambda calculus}
  \ccsdesc[500]{Theory of computation~Operational semantics}
  \ccsdesc[300]{Software and its engineering~General programming languages}
  \ccsdesc[300]{Theory of computation~Program analysis}

  \keywords{\bilingual{tabling, weak-head reduction, cyclic graphs, memoization, least fixpoint,
      rational trees, structure as identity, Datalog, program analysis, domain-specific language,
    \texorpdfstring{$\lambda$-calculus}{lambda-calculus}}{tabling, 弱头归约, 环图, 记忆化, 最小不动点,
      有理树, 结构即同一性, Datalog, 程序分析, 领域专用语言,
  \texorpdfstring{$\lambda$-演算}{lambda-calculus}}}
  \fi 

  \maketitle

  \ifTablambdaBody
  \section{\bilingual{Introduction}{引言}}\label{sec:intro}

  \subsection{\bilingual{Manipulating Graphs in $\lambda$-Calculus}{在 $\lambda$-演算中操作图}}

  \ifTablambdaChinese
  纯 $\lambda$-演算中的普通归约从不构造环状数据结构。考虑零的流,即纯项 $r = Y\,(\mathtt{cons}\,0)$,其中 $Y$ 是不动点组合子,$\mathtt{cons}\,h\,t$ 表示在流 $t$ 前面接上 $h$。项 $r$ 归约为 $\mathtt{cons}\,0\,r$,于是它的尾部又是 $r$:这条流在自身上循环。然而 $r$ 没有有限的范式:普通归约把它无穷地展开成 $0,0,0,\dots$,从不把它折叠成一张有限的环形图。要得到一个有限的环,人们只能伸手到演算之外,取用一个额外的递归构造,比方说一个为该循环命名的 \texttt{letrec}~\cite{ariola1997-explicit-recursion}。

  造出环只是困难的一半。以教科书里的 $\mathtt{map}$ 为例,它把一个函数作用到列表的每个元素上,而它本身就是一个普通的纯 $\lambda$-项(附录~\ref{app:dsl})。在环状流 $r$ 上,项 $\mathtt{map}\;\mathtt{succ}\;r$(其中 $\mathtt{succ}$ 加一)归约为 $\mathtt{cons}\,1\,(\mathtt{map}\;\mathtt{succ}\;r)$,其内部的 $\mathtt{map}\;\mathtt{succ}\;r$ 又是同一个项,于是普通求值器会沿着这个环无穷地走下去,从不把结果折回成一个环。项中没有任何东西记录 $r$ 已被访问过。要让遍历停下来,函数自身就必须具备环感知能力:它必须记住已经访问过的节点,这需要一个可变的集合或字典;它还必须识别出重新访问到的节点,这需要一个按指针同一性比较节点的引用相等(reference equality)原语。而这正是纯语言所禁止的:引用相等能区分值相等的结构,因此破坏了引用透明性(referential transparency),即值相等者可以自由替换这一规则。

  因此,标准做法要付两次代价:一次扩展用来造环,一次非纯用来变换它。
  \else
  Ordinary reduction in the pure $\lambda$-calculus never builds a cyclic data structure. Consider the stream of zeros, the pure term $r = Y\,(\mathtt{cons}\,0)$, with the fixed-point combinator $Y$ and $\mathtt{cons}\,h\,t$ the stream $t$ with $h$ prepended. The term $r$ reduces to $\mathtt{cons}\,0\,r$, so its tail is $r$ again: the stream that loops on itself. Yet $r$ has no finite normal form: ordinary reduction unfolds it to $0,0,0,\dots$ forever and never folds it into a finite circular graph. To obtain a finite cycle one reaches outside the calculus for an added recursion construct, a \texttt{letrec}, say, that names the loop~\cite{ariola1997-explicit-recursion}.

  Creating the cycle is only half of the difficulty. Take the textbook $\mathtt{map}$, which applies a function to every element of a list and is itself an ordinary pure $\lambda$-term (Appendix~\ref{app:dsl}). On the cyclic stream $r$, the term $\mathtt{map}\;\mathtt{succ}\;r$ (with $\mathtt{succ}$ adding one) reduces to $\mathtt{cons}\,1\,(\mathtt{map}\;\mathtt{succ}\;r)$, whose inner $\mathtt{map}\;\mathtt{succ}\;r$ is the same term again, so an ordinary evaluator walks the cycle forever and never folds the result back into a cycle. Nothing in the term records that $r$ was already visited. To make the traversal terminate the function must itself become cycle-aware: it has to remember the nodes already visited, which needs a mutable set or dictionary, and it has to recognize a revisited node, which needs a reference-equality primitive that compares nodes by pointer identity. That primitive is what a pure language forbids: reference equality tells apart structures of equal value, so it breaks referential transparency, the rule that equal values may be substituted freely.

  The standard account therefore pays twice, an extension to create the cycle and an impurity to transform it.
  \fi

  \subsection{\bilingual{Automating Dynamic Programming}{自动化动态规划}}

  \ifTablambdaChinese
  这一障碍并非无限数据所特有。编辑距离,即把一个字符串变成另一个所需的单字符插入、删除、替换的最少次数,在两个字符串的后缀上有教科书式的递推~\cite{wagner1974-string-correction}:逐个头部比较,头部相同则在两个尾部上递归、不计代价,头部不同则计一次代价并取替换、删除、插入三者中的最优,而一旦某个字符串为空,代价就是另一个字符串剩余部分的长度。把它写成一个纯 $\lambda$-项,其中每个字符串是一个列表,对一个头/尾处理函数 $\lambda h.\lambda t$ 求值,为空时则取一个默认值,\footnote{这里的数据被编码为它自身的情形分析(Scott 编码与 Church 编码:前者是通常归于 Dana Scott 的未发表的 folklore,后者是 Church 的~\cite{church1941-calculi};二者在自解释中的运用见~\cite{mogensen1992-self-interpretation}):一个列表作用于两个参数时,非空则对其头与尾运行第一个参数,为空则返回第二个;一个布尔值作用于两个参数时,为真返回第一个、为假返回第二个,于是 $\mathtt{eq}\,h_a\,h_b$ 作用于其后两个项时便选出其一。}并以 $\mathtt{eq}$ 比较两个字符,$\mathtt{min}_3$ 取三者最小,$\mathtt{succ}$ 加一,$\mathtt{len}$ 取长度:
  \else
  The obstruction is not special to infinite data. Edit distance, the fewest single-character insertions, deletions, and substitutions that turn one string into another, has the textbook recurrence~\cite{wagner1974-string-correction} on the strings' suffixes: reading head by head, equal heads recurse on both tails at no cost, unequal heads pay one and take the best of a substitution, a deletion, and an insertion, and once a string is empty the cost is the length of what remains of the other. Written as a pure $\lambda$-term, with each string a list that applies a head/tail handler $\lambda h.\lambda t$ or, when empty, a default,\footnote{Data here are encoded as their own case analysis (the Scott and Church encodings: the former unpublished folklore usually attributed to Dana Scott, the latter Church's~\cite{church1941-calculi}; see~\cite{mogensen1992-self-interpretation} for their use in self-interpretation): a list applied to two arguments runs the first on its head and tail or returns the second when empty, and a boolean applied to two arguments returns the first if true and the second if false, so $\mathtt{eq}\,h_a\,h_b$ applied to the two terms that follow it selects one of them.} and with $\mathtt{eq}$ comparing two characters, $\mathtt{min}_3$ the three-way minimum, $\mathtt{succ}$ adding one, and $\mathtt{len}$ the length:
  \fi
  \[
    \begin{aligned}
      &\mathtt{ed} = Y\,\big(\lambda\mathit{ed}.\,\lambda a.\,\lambda b.\\
        &\quad a\,\big(\lambda h_a.\lambda t_a.\\
          &\qquad b\,\big(\lambda h_b.\lambda t_b.\\
            &\qquad\quad (\mathtt{eq}\,h_a\,h_b)
            &&\text{\bilingual{compare the two heads}{比较两个头部}}\\
            &\qquad\qquad (\mathit{ed}\,t_a\,t_b)
            &&\text{\bilingual{equal: recurse on both tails, no cost}{相同:在两个尾部上递归,不计代价}}\\
            &\qquad\qquad \mathtt{succ}\,\big(\mathtt{min}_3\,
              \underbrace{(\mathit{ed}\,t_a\,t_b)}_{\text{\bilingual{substitute}{替换}}}\,
              \underbrace{(\mathit{ed}\,t_a\,b)}_{\text{\bilingual{delete}{删除}}}\,
          \underbrace{(\mathit{ed}\,a\,t_b)}_{\text{\bilingual{insert}{插入}}}\big)\big)\\
        &\qquad (\mathtt{len}\,a)\big)
        &&\text{\bilingual{$b$ empty: delete all of $a$}{$b$ 为空:删除 $a$ 的全部}}\\
      &\quad (\mathtt{len}\,b)\big).
      &&\text{\bilingual{$a$ empty: insert all of $b$}{$a$ 为空:插入 $b$ 的全部}}
    \end{aligned}
  \]
  \ifTablambdaChinese
  这个项算出的距离是对的,但它的朴素求值是指数级的:三个递归调用反复展开相互重叠的后缀对。子问题是后缀对 $(t_a,t_b)$,而后缀是输入的共享子项、并非副本,因此若每个重复的对只计算一次,工作量就是长度为 $m$ 和 $n$ 的输入所对应的 $(m{+}1)(n{+}1)$ 个不同的对,也就是经典的 $O(mn)$ 动态规划表。但普通归约无从察觉某个对已经再次出现,于是重复计算。与环状的 $\mathtt{map}$ 一样,这个显然的纯程序正确却不可用:一个发散,另一个爆炸。
  \else
  This term computes the right distance, but its naive evaluation is exponential, the three recursive calls re-expanding overlapping suffix pairs. The subproblems are pairs of suffixes $(t_a,t_b)$, and a suffix is a shared sub-term of the inputs, not a copy, so if each repeated pair were computed once the work would be the $(m{+}1)(n{+}1)$ distinct pairs for inputs of lengths $m$ and $n$, the classic $O(mn)$ dynamic-programming table. But ordinary reduction has no way to notice that a pair has recurred, so it recomputes. Like the cyclic $\mathtt{map}$, the obvious pure program is correct yet unusable: one diverges, the other explodes.
  \fi

  \subsection{\bilingual{Tabling Weak-Head Reduction}{对弱头归约做 tabling}}

  \ifTablambdaChinese
  我们在不扩展演算的前提下解决这些问题。我们为纯 $\lambda$-演算给出一个解释器,以及一个编译器,在它们之下,正是这些程序得以高效运行而含义不变:$Y\,(\mathtt{cons}\,0)$ 求值为一张有限的环图,其上的 $\mathtt{map}\,\mathtt{succ}$ 求值为由一构成的有限的环,而 $\mathtt{ed}$ 以动态规划的 $O(mn)$ 次算术运算完成,每一个都是表~\ref{tab:cases} 所列出的某种情形的实例。本文附有一份实现\anon[,作为补充材料]{,公开可获取~\cite{yang2026-tablambda}}。唯一的想法是 tabling:解释器把每个子项读作一个状态,该状态在其后继状态之上暴露一层数据,把见过的状态制表(table),并在内化(interned)项的可判定的结构同一性上折叠一个递推,而该项所计算的那个函数上唯一的同一性,即行为相等,是不可判定的。这一识别是 tabling 所做的,而非项本身执行的操作,因此演算保持纯性。而且这张图并非拿去与含义核对,它\emph{就是}含义:一个程序所指称的,按定义就是其各层展开成的那棵树,因此构造这张图不会改变语义(第~\ref{sec:bridge} 节)。

  由于诸如列表和树这样的数据本身就是纯 $\lambda$-项(Scott 编码),所有在这类数据上的有限状态计算又都是 $\lambda$-项,而这一个解释器在每个计算所到达的有限多状态的那一部分上统统处理它们;超出有理数据(即只有有限多个不同子树的那些数据)之外,任何过程都根本无法返回一张有限的环图,因为压根没有可返回的(定理~\ref{thm:ceiling})。纯 $\lambda$-演算于是成为一门小型的领域专用语言,其记忆化由状态同一性自然得出:Datalog 查询、图可达性与指向分析、博弈树(极小化极大,minimax)搜索,以及动态规划(第~\ref{sec:graphs} 节,附录~\ref{app:dsl})。
  \else
  We solve these problems without extending the calculus. We give an interpreter, and a compiler, for the pure $\lambda$-calculus under which these very programs run efficiently, their meaning unchanged: $Y\,(\mathtt{cons}\,0)$ evaluates to a finite cyclic graph, $\mathtt{map}\,\mathtt{succ}$ over it to the finite circle of ones, and $\mathtt{ed}$ in the dynamic program's $O(mn)$ arithmetic operations, each an instance of a regime Table~\ref{tab:cases} sets out. An implementation accompanies the paper \anon[as supplementary material]{openly available~\cite{yang2026-tablambda}}. The one idea is tabling: the interpreter reads each sub-term as a state exposing one layer of data over its successor states, tables the states it has seen, and folds a recurrence on the decidable structural identity of the interned term, whereas the only identity on the function that term computes, behavioral equality, is undecidable. The recognition is tabling's, not an operation a term performs, so the calculus stays pure. And the graph is not checked against the meaning, it \emph{is} the meaning: what a program denotes is by definition the tree its layers unfold to, so building the graph cannot change the semantics (Section~\ref{sec:bridge}).

  Since data such as lists and trees are themselves pure $\lambda$-terms (the Scott encoding), all finite-state computations over such data are again $\lambda$-terms, and the one interpreter handles them all on the fragment each reaches with finitely many states; beyond the rational data, those with finitely many distinct subtrees, no procedure can return a finite cyclic graph at all, since there is none to return (Theorem~\ref{thm:ceiling}). The pure $\lambda$-calculus becomes a small domain-specific language whose memoization falls out of state identity: Datalog queries, graph reachability and points-to analysis, game-tree (minimax) search, and dynamic programming (Section~\ref{sec:graphs}, Appendix~\ref{app:dsl}).
  \fi

  \begin{table*}
    \caption{\bilingual{The five regimes a term can present. \emph{Previous} is ordinary reduction, which unfolds the
        solution without tabling; \emph{Ours} is the tabled evaluation of Algorithm~\ref{alg:solvewhnf}. A state is
        \emph{productive} when reduction there exposes a layer rather than continuing silently. The two
        semantics realize the same denotation and differ only in representation, folding a rational infinite
        tree into a finite cyclic graph, and in termination, deciding an unproductive loop as $\bot$ in finite
        time. The two non-terminating regimes differ observably: in the second every layer demand halts yet the
        graph is infinite, so its enumeration keeps producing forever, while in the fourth the first demand
      produces nothing.}{一个项可能呈现的五种情形。\emph{以往}指普通归约,它不做 tabling 而直接展开解;\emph{本文}指算法~\ref{alg:solvewhnf}
        的 tabled 求值。当在某状态处归约暴露出一层、而非静默地继续下去时,该状态是\emph{生产性的}。两种语义实现同一个指称,
        差别只在表示(把一棵有理的无限树折叠成一张有限的环图)与停机性(在有限时间内把非生产性循环判定为 $\bot$)上。
        两种不终止的情形可观察地不同:第二行里每一次层的求取都停机,但图是无限的,对它的枚举永远产出下去;而第四行里
    第一次求取就毫无产出。}}
    \label{tab:cases}
    \centering\footnotesize
    \setlength{\tabcolsep}{5pt}
    \begin{tabular}{@{}p{0.26\textwidth}p{0.13\textwidth}p{0.15\textwidth}p{0.33\textwidth}@{}}
      \hline
      \bilingual{Condition}{条件} & \bilingual{Previous}{以往} & \bilingual{Ours}{本文} & \bilingual{Example}{示例} \\
      \hline
      \bilingual{Productive, finitely many states, acyclic}{生产性,有限多状态,无环} & \bilingual{Finite tree}{有限树} & \bilingual{Finite graph}{有限图} & $\mathtt{ed}\,a\,b$ \\
      \bilingual{Productive, infinitely many distinct states}{生产性,无限多不同状态} & \bilingual{Infinite tree}{无限树} & \bilingual{Infinite graph}{无限图} & $Y\,(\lambda s.\,\mathtt{cons}\,0\,(\mathtt{map}\,\mathtt{succ}\,s))$ \\
      \bilingual{Productive, finitely many states, cyclic}{生产性,有限多状态,有环} & \bilingual{Infinite tree}{无限树} & \bilingual{Finite cyclic graph}{有限环图} & $Y\,(\mathtt{cons}\,0)$ \\
      \bilingual{Unproductive, no repeated state}{非生产性,无重复状态} & \bilingual{Diverges}{发散} & \bilingual{Diverges}{发散} & $Y\,(\lambda f.\lambda x.\,f\,(\mathtt{succ}\,x))\,0$ \\
      \bilingual{Unproductive, repeated state}{非生产性,有重复状态} & \bilingual{Diverges}{发散} & \bilingual{Unproductive $\bot$}{非生产性 $\bot$} & $\Omega$ \\
      \hline
    \end{tabular}
  \end{table*}

  \subsection{\bilingual{Contributions}{贡献}}
  \ifTablambdaChinese
  \begin{description}
    \item[构造] 弱头归约的\emph{tabled 求值}:一次一个项地跟随那张一层映射,每当某个项
      再次出现便复用已制表的项,于是折叠出环的回边(back edge)不额外花费代价。一个非生产性
      循环,即回到起点却未暴露任何一层者,会在有限时间内被判定为 $\bot$(未定义)
      (一层映射见算法~\ref{alg:whnfmap},通用驱动器见算法~\ref{alg:whnf},二者组合成的求值器见算法~\ref{alg:solvewhnf})。
    \item[理论] 我们证明该求值器是\emph{可靠的}(它返回的有限图展开为 Lévy--Longo 树,
      定理~\ref{thm:correctness})、\emph{未达有理性上限}(求值器只在有理树的一个严格子片段上给出
      有限图,而任何过程都无法返回超出该片段的有限图,
      定理~\ref{thm:termination}、\ref{thm:ceiling}),并且解是\emph{唯一的}(最小不动点等于
      最大不动点,定理~\ref{thm:unique})。证明见附录~\ref{app:faithful}。
    \item[应用] 一个普通的 $\mathtt{map}$ 把一个环形列表折叠成一个环形列表,而一次朴素的
      遍历会在求值发散之处停机(第~\ref{sec:bridge}、\ref{sec:application} 节);这一个解释器同时也是
      一门小型领域专用语言,求解 Datalog、图可达性与指向分析、极小化极大搜索,以及动态规划
      (附录~\ref{app:dsl});而编译本身就是一种图变换,同一个解释器把它求解成一个到 Python 的
      自举编译器(第~\ref{sec:graphs} 节)。
  \end{description}
  \else
  \begin{description}
    \item[The construction] \emph{Tabled evaluation} of weak-head reduction: follow the
      one-layer map a term at a time and reuse a tabled term whenever one recurs, so the back
      edges that fold a cycle cost nothing extra. An unproductive loop, one returning to its
      start without exposing a layer, is decided as $\bot$ (undefined) in finite time
      (the one-layer map is Algorithm~\ref{alg:whnfmap}, the general driver Algorithm~\ref{alg:whnf}, and the evaluator they compose Algorithm~\ref{alg:solvewhnf}).
    \item[The theory] We prove the evaluator \emph{sound} (the finite graph it returns
      unfolds to the L\'evy--Longo tree, Theorem~\ref{thm:correctness}), \emph{short of the rationality ceiling}
      (it yields finite graphs on a strict sub-fragment of the rational trees, and no
        procedure can return a finite graph beyond the fragment,
        Theorems~\ref{thm:termination},
      \ref{thm:ceiling}), and the solution
      \emph{unique} (least equals greatest fixpoint,
      Theorem~\ref{thm:unique}). Proofs are in Appendix~\ref{app:faithful}.
    \item[Applications] An ordinary
      $\mathtt{map}$ folds a circular list into a circular list and a naive walk terminates where
      evaluation diverges (Sections~\ref{sec:bridge}, \ref{sec:application}); the one interpreter is also a small
      domain-specific language solving Datalog, graph reachability and points-to analysis, minimax
      search, and dynamic programming (Appendix~\ref{app:dsl}); and compilation is itself a graph
      transformation, the same interpreter solving it into a bootstrap compiler to Python
      (Section~\ref{sec:graphs}).
  \end{description}
  \fi

  \section{\bilingual{A Tabled Operational Semantics}{一种 tabled 操作语义}}\label{sec:bridge}
  \ifTablambdaChinese
  本节讲的是如何把一个 $\lambda$-项的惰性指称语义求解成一张可能有环的图。
  \else
  This section is about how to solve a $\lambda$-term's lazy denotational semantics into a graph that may be cyclic.
  \fi

  \paragraph{\bilingual{The L\'evy--Longo tree}{Lévy--Longo 树}}
  \ifTablambdaChinese
  这套惰性语义就是该项的 Lévy--Longo 树~\cite{longo1983-set-theoretical-models, levy1978-reductions-lambda-calcul};我们回顾它,并随之回顾弱头归约。这里的项是 de Bruijn 形式的纯 $\lambda$-项:一个变量 $\Varsh_i$、一个抽象 $\Lamsh(t)$,或一个应用 $\Appsh(t, t')$,不含递归绑定子。\emph{弱头归约}反复收缩一个项的头部 redex,即沿其函数脊(function spine)最左的那个 redex:一个 redex $\Appsh(\Lamsh(b), a)$ 收缩为 $\textsf{Subst}(b, a)$,即把实参 $a$ 代入该抽象所绑定的那个变量的避免捕获(capture-avoiding)替换。当不再有头部 redex 时,该项处于\emph{弱头范式}(weak-head normal form),即一个变量、一个抽象,或一个以变量为首的应用,除非始终到不了范式而归约永不停止。一个项的 \emph{Lévy--Longo 树}是它的遗传式(hereditary)弱头范式:暴露范式的顶层构造子,递归进入直接子项,并在到不了范式之处放上 $\bot$~\cite{barendregt1984-lambda-calculus, wadsworth1971-semantics-lambda-calculus, abramsky1993-full-abstraction-lazy-lambda}。
  \else
  This lazy semantics is the term's L\'evy--Longo tree~\cite{longo1983-set-theoretical-models, levy1978-reductions-lambda-calcul}; we recall it, and with it weak-head reduction. The terms are pure $\lambda$-terms in de Bruijn form, a variable $\Varsh_i$, an abstraction $\Lamsh(t)$, or an application $\Appsh(t, t')$, with no recursion binder. \emph{Weak-head reduction} repeatedly contracts the head redex of a term, the leftmost redex along its function spine: a redex $\Appsh(\Lamsh(b), a)$ contracts to $\textsf{Subst}(b, a)$, the capture-avoiding substitution of the argument $a$ for the variable the abstraction binds. When no head redex remains the term stands in \emph{weak-head normal form}, a variable, an abstraction, or an application headed by a variable, unless no normal form is reached and reduction runs forever. The \emph{L\'evy--Longo tree} of a term is its hereditary weak-head normal form: expose the normal form's top constructor, recurse into the immediate sub-terms, and place $\bot$ where no normal form is reached \cite{barendregt1984-lambda-calculus, wadsworth1971-semantics-lambda-calculus, abramsky1993-full-abstraction-lazy-lambda}.
  \fi

  \paragraph{Tabling}
  \ifTablambdaChinese
  我们回顾 \emph{tabling},即从一个根出发~\cite{tamaki1986-tabled-resolution, chen1996-tabled-evaluation-delaying}求解最小不动点方程~\cite{vanemden1976-predicate-logic-semantics}的标准方法。这样的方程把每个状态映射到其后继状态之上的一层数据,或映射到 $\bot$,而它的解把每个状态映射到这一层所展开成的那棵树。Tabling 计算出该解:暴露根的那一层,递归进入其后继状态,并保存一张已到达状态的表,以一个可判定的同一性为键;已在表中的状态被复用为同一个共享顶点,而仍在处理中的状态则成为一条闭合出环的回边。这里没有任何东西是 $\lambda$-演算所特有的。
  \else
  We recall \emph{tabling}, the standard method for solving a least-fixpoint equation~\cite{vanemden1976-predicate-logic-semantics} from a root~\cite{tamaki1986-tabled-resolution, chen1996-tabled-evaluation-delaying}. Such an equation maps each state to one layer of data over successor states, or to $\bot$, and its solution maps each state to the tree this unfolds to. Tabling computes that solution: expose the root's layer, recurse into its successor states, and keep a table of the states reached, keyed by a decidable identity; a state already tabled is reused as one shared vertex, and a state still being processed becomes a back edge that closes a cycle. Nothing here is specific to the $\lambda$-calculus.
  \fi

  \paragraph{\bilingual{Tabling weak-head reduction}{对弱头归约做 tabling}}
  \ifTablambdaChinese
  两者得以结合,是因为弱头归约让每个项都成为这样一个方程:一个项暴露一层,即其弱头范式的顶层构造子覆于其直接子项之上,或在没有范式时暴露 $\bot$,而这些子项又是项。我们把这暴露出的一层记作 $\outm(t)$;当静默归约重访某个项却不暴露任何一层时,$\outm(t)$ 就是 $\bot$,即该归约的最小不动点(引理~\ref{lem:mono-llt})。从一个根项出发对 $\outm$ 做 tabling,便算出该项的 Lévy--Longo 树(算法~\ref{alg:solvewhnf}),并把它折叠成一张可能有环的图。那棵树就是我们固定下来的指称,因此这张图改变的是表示、而非含义,而 $\outm(t)$ 是其中的一层。树与图是树的逼近序(approximation order)上的一个点,其中 $\bot$ 最小,而一棵树通过暴露更多层得到细化;我们证明这构成一个域(domain)(附录~\ref{sec:order})。
  \else
  The two combine because weak-head reduction makes each term such an equation: a term exposes one layer, the top constructor of its weak-head normal form over its immediate sub-terms, or $\bot$ when it has no normal form, and the sub-terms are again terms. We write $\outm(t)$ for this exposed layer; where the silent reduction revisits a term without exposing one, $\outm(t)$ is $\bot$, the least fixpoint of that reduction (Lemma~\ref{lem:mono-llt}). Tabling $\outm$ from a root term computes the term's L\'evy--Longo tree (Algorithm~\ref{alg:solvewhnf}) and folds it into a graph that may be cyclic. That tree is the denotation we fixed, so the graph changes the representation, not the meaning, and $\outm(t)$ is one layer of it. The tree and the graph are one point of the approximation order on trees, $\bot$ least and a tree refined by exposing further layers, which we show is a domain (Appendix~\ref{sec:order}).
  \fi

  \begin{algorithm*}
    \DontPrintSemicolon
    \KwData{\bilingual{$\mathit{approx}$, each term's layer, kept across rounds and across demands, $\bot$ until it is exposed; $\mathit{cache}$, the layers solved in the current round; $\mathit{stack}$, the terms whose layer is being computed}{$\mathit{approx}$,每个项的层,跨轮且跨求取保留,在被暴露前为 $\bot$;$\mathit{cache}$,本轮已求解的层;$\mathit{stack}$,其层正在计算中的那些项}}
    \Fn{\TABLED{$\outm$, $t$}}{
      \Fn{\RESOLVE{$u$}}{
        \lIf{$u \in \mathit{cache}$}{\Return $\mathit{cache}[u]$ \tcp*{\bilingual{solved this round}{本轮已求解}}}
        \lIf{$u \in \mathit{stack}$}{\Return $\mathit{approx}[u]$ \tcp*{\bilingual{re-entry: $\bot$ on the first round}{重入:首轮为 $\bot$}}}
        $\mathit{stack} \gets \mathit{stack} \cup \{u\}$\;
        $\mathit{layer} \gets \outm(u,\; \textsf{Resolve})$\;
        $\mathit{stack} \gets \mathit{stack} \setminus \{u\}$\;
        $\mathit{merged} \gets \mathit{approx}[u] \sqcup \mathit{layer}$ \tcp*{\bilingual{deep merge in the approximation order; a conflict crashes}{在逼近序上深度合并;冲突则崩溃}}
        \lIf{$\mathit{merged} \neq \mathit{approx}[u]$}{$\mathit{changed} \gets \mathbf{true}$}
        $\mathit{approx}[u] \gets \mathit{merged}$;\; $\mathit{cache}[u] \gets \mathit{merged}$\;
        \Return $\mathit{merged}$\;
      }
      \Repeat{$\lnot\,\mathit{changed}$}{
        $\mathit{cache} \gets \emptyset$;\; $\mathit{stack} \gets \emptyset$;\; $\mathit{changed} \gets \mathbf{false}$\;
        \RESOLVE{$t$}\;
      }
      \Return $\mathit{approx}[t]$ \tcp*{\bilingual{the layer of the demanded term}{被求取项的层}}
    }
    \caption{\bilingual{The general tabling driver. \textsf{Tabled} computes the layer of the demanded term $t$, the least fixpoint of a one-layer map $\outm$, by Kleene iteration from $\bot$: each round, through \textsf{Resolve}, the memoized resolution of $\outm$, it recomputes one layer per term it visits and merges that layer into the term's running approximation, a re-entry into a term still on the $\mathit{stack}$ returning that approximation, until a round adds nothing. The merge $\sqcup$ is the join in the approximation order, a deep merge that ascends rather than overwrites and crashes on a conflict, two incompatible non-$\bot$ layers at one term; for the \textsf{WHNF} instance a layer only ascends from $\bot$ to one value, so the merge degenerates to that flat join. A recurring term is a $\mathit{cache}$ hit; the memo is keyed by the term itself, by structural identity, and $\mathit{approx}$ persists across demands, in the implementation a per-term cached property, so a term solved by an earlier demand is reused rather than re-solved. The one-layer map $\outm$ is a parameter; \textsf{WHNF} (Algorithm~\ref{alg:whnfmap}) is the instance for weak-head reduction.}{通用 tabling 驱动器。\textsf{Tabled} 通过从 $\bot$ 出发的 Kleene 迭代计算被求取项 $t$ 的层,即一层映射 $\outm$ 的最小不动点:每一轮经由 \textsf{Resolve}(即 $\outm$ 的记忆化求解)对每个访问到的项重算一层,并把该层并入该项的运行中逼近,而对仍在 $\mathit{stack}$ 上的项的重入返回该逼近,直到某一轮不再增加任何东西。合并 $\sqcup$ 是逼近序上的并(join),一种向上攀升而非覆盖的深度合并,遇冲突即崩溃,即一个项处出现两个互不相容的非 $\bot$ 层;对 \textsf{WHNF} 这个实例,一个层只会从 $\bot$ 攀升到一个值,因此该合并退化为平坦的并。再次出现的项是一次 $\mathit{cache}$ 命中;备忘以项本身为键,即按结构同一性,且 $\mathit{approx}$ 跨求取保留(在实现中是每个项上的一个缓存属性),因此先前求取已解出的项被直接复用而非重解。一层映射 $\outm$ 是参数;\textsf{WHNF}(算法~\ref{alg:whnfmap})是弱头归约的那个实例。}}
    \label{alg:whnf}
  \end{algorithm*}

  \begin{algorithm*}
    \DontPrintSemicolon
    \Fn{\WHNF{$t$, \textsf{Resolve}}}{
      \Switch{$t$}{
        \uCase{$\Varsh\,i$}{\Return $\Varsh\,i$}
        \uCase{$\Lamsh\,b$}{\Return $\Lamsh\,b$ \tcp*{\bilingual{an abstraction is already a whnf; weak head stops at $\lambda$}{抽象本身已是 whnf;弱头在 $\lambda$ 处停止}}}
        \Case{$\Appsh\,f\,a$}{
          \Switch{$\textsf{Resolve}(f)$}{
            \uCase{$\Lamsh\,b$}{\Return $\textsf{Resolve}(\Subst{b, a})$ \tcp*{\bilingual{head redex: $\beta$-contract and continue}{头部 redex:做 $\beta$-收缩并继续}}}
            \uCase{$\Varsh\,\_ \mid \Appsh\,\_$}{\Return $\Appsh\,f\,a$ \tcp*{\bilingual{rigid head: expose the application}{刚性头部:暴露该应用}}}
            \Case{$\bot$}{\Return $\bot$ \tcp*{\bilingual{the function has no whnf}{函数没有 whnf}}}
          }
        }
      }
    }
    \caption{\bilingual{The weak-head one-layer map, passed to \textsf{Tabled} as $\outm$. \textsf{WHNF} exposes one layer of $t$ and is the only function that reads the calculus; it calls back through \textsf{Resolve} to reduce the head, $\beta$-contracting a head redex and continuing, stopping at a rigid head or an abstraction, or returning $\bot$ when the head has no whnf. At a rigid head the exposed layer is the application itself, over its own function and argument, a tree-equal representative of its normal form (Lemma~\ref{lem:mono-llt}).}{弱头一层映射,作为 $\outm$ 传给 \textsf{Tabled}。\textsf{WHNF} 暴露 $t$ 的一层,且是唯一读取演算的函数;它经由 \textsf{Resolve} 回调来归约头部,对头部 redex 做 $\beta$-收缩并继续,在刚性头部或抽象处停止,或当头部没有 whnf 时返回 $\bot$。在刚性头部处,暴露的层是该应用本身,覆于它自己的函数与参数之上,即其范式的一个树相等的代表元(引理~\ref{lem:mono-llt})。}}
    \label{alg:whnfmap}
  \end{algorithm*}

  \begin{algorithm}
    \DontPrintSemicolon
    \Fn{\TABLEDWHNF{$t$}}{
      \Return \TABLED{\textsf{WHNF}, $t$}\;
    }
    \caption{\bilingual{The L\'evy--Longo evaluator: the weak head map \textsf{WHNF} tabled by the driver \textsf{Tabled}. $\textsf{TabledWHNF}(t)$ returns the layer of $t$; the graph of $t$ is the layer assignment this induces, one layer per term hereditarily reachable from $t$ through exposed layers, each computed by a further demand, and its unfolding is the L\'evy--Longo tree of $t$ (Theorem~\ref{thm:correctness}).}{Lévy--Longo 求值器:弱头映射 \textsf{WHNF} 经驱动器 \textsf{Tabled} 做 tabling。$\textsf{TabledWHNF}(t)$ 返回 $t$ 的层;$t$ 的图就是由此诱导的层赋值,即从 $t$ 沿已暴露的层遗传可达的每个项各一层,每层由进一步的求取算出,而这张图的展开即 $t$ 的 Lévy--Longo 树(定理~\ref{thm:correctness})。}}
    \label{alg:solvewhnf}
  \end{algorithm}

  \paragraph{\bilingual{Soundness}{可靠性}}
  \ifTablambdaChinese
  该求值器是可靠的:从一个项 $t$ 出发对 $\outm$ 做 tabling,会算出一张展开为 $t$ 的 Lévy--Longo 树的图。算法~\ref{alg:solvewhnf} 就是这个 tabling:$\textsf{TabledWHNF} = \textsf{Tabled}(\textsf{WHNF}, \cdot)$ 运行通用驱动器 \textsf{Tabled}(算法~\ref{alg:whnf}),后者在被求取的项处把一层映射从 $\bot$ 迭代到它的最小不动点,作用于暴露静默头部归约之一层的弱头映射 \textsf{WHNF}(算法~\ref{alg:whnfmap});仍在归约中的项以其运行中逼近重入。$t$ 的图就是由此诱导的层赋值:其节点是从 $t$ 沿已暴露的层遗传可达的内化项,各自携带进一步求取所算出的那一层;而再次出现的项按可判定的结构同一性就是同一个节点,这正是把一个递推折叠成共享节点或回边的机制。枚举或画出这张图,如第~\ref{sec:application} 节的 trace 与配图以及随附测试所做的那样,是元语言中的观察:逐节点求取一层,并按节点同一性检测环;演算自身做不出这种观察,因为检测环恰是它无法读回的那种共享(附录~\ref{app:compiler})。Tabling 是最小不动点语义的标准忠实方法,而弱头归约为它提供了对状态所需的三样东西:它所产生的树上的一个逼近序、一个单调的展开,以及可达项都是有限、无环的一阶项且具有可判定的结构同一性,\footnote{实现通过在项构造时对其做内化(intern,即 hash-consing),把这一同一性变成常数时间的指针测试;内化会终止,因为每个可达项都是有限且无环的(引理~\ref{lem:finite});它是一项优化,加速 tabling 所需的同一性检查,而不影响 tabling 所计算出的结果。}因此再次出现的项会被认出并折叠为回边,而非永远展开下去。证明见附录~\ref{app:faithful}。
  \else
  The evaluator is sound: tabling $\outm$ from a term $t$ computes a graph that unfolds to the L\'evy--Longo tree of $t$. Algorithm~\ref{alg:solvewhnf} is this tabling: $\textsf{TabledWHNF} = \textsf{Tabled}(\textsf{WHNF}, \cdot)$ runs the general driver \textsf{Tabled} (Algorithm~\ref{alg:whnf}), which iterates a one-layer map from $\bot$ to its least fixpoint at the demanded term, on the weak head map \textsf{WHNF} (Algorithm~\ref{alg:whnfmap}) that exposes one layer of the silent head reduction, a term still being reduced re-entering at its running approximation. The graph of $t$ is the layer assignment this induces: its nodes are the interned terms hereditarily reachable from $t$ through exposed layers, each carrying the layer a further demand computes, and a recurring term is the same node by the decidable structural identity, which is what folds a recurrence into a shared node or a back edge. Enumerating or drawing that graph, as the traces and figures of Section~\ref{sec:application} and the accompanying tests do, is observation in the metalanguage, one demand per node with cycle detection on node identity; the calculus itself cannot make that observation, since detecting a cycle is exactly the sharing it cannot read back (Appendix~\ref{app:compiler}). Tabling is the standard faithful method for a least-fixpoint semantics, and weak-head reduction supplies the three things it needs of its states: an approximation order on the trees it produces, a monotone unfolding, and reachable terms that are finite, acyclic first-order terms with a decidable structural identity,\footnote{The implementation makes this identity a constant-time pointer test by interning (hash-consing) terms as they are built; interning terminates because every reachable term is finite and acyclic (Lemma~\ref{lem:finite}), and it is an optimization that speeds the identity check tabling needs without affecting what tabling computes.} so a recurring term is recognized and folded into a back edge rather than unfolded forever. The proofs are in Appendix~\ref{app:faithful}.
  \fi

  \paragraph{\bilingual{The solution is unique}{解是唯一的}}
  \ifTablambdaChinese
  求值器从 $\bot$ 向上攀升,每次多暴露一层,并把再次出现的项折叠为一条回边。要使这能算出指称,有两件事必须成立:把一条回边读作它所代表的那棵无限树,必须与这种自底向上的攀升相一致;而且答案不能取决于先收缩哪个 redex。解释器与编译器在两条互相独立的轴上不同:归约的次序,与据以制表的同一性;下面的定理覆盖前者,后者(编译器更粗的同一性)超出该定理,由测试对照解释器检验(第~\ref{sec:design-space} 节)。
  \else
  The evaluator ascends from $\bot$, exposing one further layer at a time and folding a recurring term into a back edge. Two things must hold for that to compute the denotation: reading a back edge as the infinite tree it stands for must agree with this bottom-up ascent, and the answer must not depend on which redex is contracted first. An interpreter and a compiler differ on two independent axes, the order in which they reduce and the identity they table on; the theorem below covers the first, while the second, the compiler's coarser identity, is beyond it and is tested against the interpreter instead (Section~\ref{sec:design-space}).
  \fi

  \begin{theorem}[\bilingual{The solution is unique}{解是唯一的}]\label{thm:unique}
    \ifTablambdaChinese
    一个项的 Lévy--Longo 树是其 tabled 弱头归约的唯一解:它既是求值器从 $\bot$ 向上攀升所达到的最小不动点,又是有环图所展开成的那棵余归纳的树,即最大不动点。证明见附录~\ref{sec:order}。
    \else
    The L\'evy--Longo tree of a term is the unique solution of its tabled weak-head reduction: it is at once the least fixpoint the evaluator reaches, ascending from $\bot$, and the greatest, the coinductive tree the cyclic graph unfolds to. Proof in Appendix~\ref{sec:order}.
    \fi
  \end{theorem}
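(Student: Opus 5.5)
The plan is to set up the equation whose solutions are in question, and then show that the least and greatest fixpoints of its associated operator both equal the Lévy--Longo tree.

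\textbf{Setup.} Fix a root term $t$ and let $R$ be the set of terms hereditarily reachable from $t$ through exposed layers. By Lemma~\ref{lem:finite} these are finite first-order terms with decidable identity. An assignment $\sigma$ maps each $u\in R$ to a tree in the approximation order of Appendix~\ref{sec:order}, where $\bot$ is least and a tree is refined by exposing further layers. Define $\Phi(\sigma)(u)$ as follows:
\begin{itemize}
\item it is $\bot$ when $\outm(u)=\bot$;
\item otherwise it is the constructor of $\outm(u)$ placed over the trees $\sigma(u_1),\dots,\sigma(u_k)$ of its immediate sub-terms.
\end{itemize}
This is the one-layer equation that \textsf{Tabled} iterates. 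Since $\outm$ is fixed and only the children vary, $\Phi$ is monotone and Scott-continuous; continuity holds because each output layer depends on finitely many child approximations. The ordered set of assignments is a pointwise domain, by the result of Appendix~\ref{sec:order}. Kleene's theorem then gives a least fixpoint $\mu\Phi=\bigsqcup_n\Phi^n(\bot)$. The driver's rounds compute exactly the iterates $\Phi^n(\bot)$ on the visited terms, with re-entries on the stack returning the current approximation. So the value \textsf{Tabled} returns is $\mu\Phi(t)$ whenever it terminates.

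\textbf{Least fixpoint is the LLT.} By induction on depth I will show that $\Phi^n(\bot)(u)$ agrees with $\mathrm{LLT}(u)$ truncated at depth $n$. The inductive step uses Lemma~\ref{lem:mono-llt}, which says $\outm(u)$ is the top layer of $\mathrm{LLT}(u)$ and its children are tree-equal representatives. The only delicate point is that $\outm(u)=\bot$ must coincide with $u$ having no whnf. For a silent loop, the lemma identifies $\bot$ as the least fixpoint of the silent reduction. For a divergent reduction with no repeated state, no layer is ever assigned, which is again $\bot$. Taking the sup over $n$ gives $\mu\Phi=\mathrm{LLT}$, because trees are determined by their finite truncations.

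\textbf{Greatest fixpoint is the LLT.} The cyclic graph read coinductively is any $\sigma$ with $\sigma=\Phi(\sigma)$; its unfolding is a greatest fixpoint. I will show that every fixpoint equals $\mu\Phi$. Let $\sigma$ be a fixpoint. Because a layer is either $\bot$ or a fully determined constructor, with no partial information inside a node, unfolding $\sigma$ $n$ times gives $\sigma(u)|_n=\Phi^n(\sigma)(u)|_n$. The value at depth $\le n$ depends only on $\outm$ along paths of length $\le n$. The same computation gives $\Phi^n(\bot)(u)|_n$, since the $\bot$ inserted at depth $n$ is cut away by truncation. Hence $\sigma|_n=\mu\Phi|_n$ for every $n$, and so $\sigma=\mu\Phi$.

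This is the guardedness argument for $\Phi$: each application of $\Phi$ contributes one constructor before any recursion. It also yields $\nu\Phi=\mu\Phi$, the statement of the theorem.

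\textbf{Main obstacle.} The hardest step is guardedness, because unproductive states break it: a state with $\outm(u)=\bot$ has no constructor to guard the recursion. In the naive equation $u=u$, which is what a silent reduction loop produces, every tree is a fixpoint. The greatest fixpoint would then be wrong unless $\outm$ has already collapsed such loops to $\bot$ before $\Phi$ is formed.

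I would handle this by defining $\Phi$ over $\outm$ rather than over single reduction steps. Silent reduction is thereby internalized, and Lemma~\ref{lem:mono-llt} supplies $\bot$ as its least fixpoint. With that choice, every equation of $\Phi$ is either the constant $\bot$ or guarded by a constructor, and the truncation argument goes through.

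Independence from reduction order follows separately. $\mathrm{LLT}$ is invariant under $\beta$-conversion by the standard results of \cite{barendregt1984-lambda-calculus}. Any reduction order that reaches a whnf therefore exposes a tree-equal layer, so it yields the same $\Phi$ up to tree equality and hence the same unique fixpoint.
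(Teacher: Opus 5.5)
Your proof is correct and is essentially the paper's argument. Your depth-$n$ comparison $\sigma|_n=\Phi^n(\sigma)|_n=\Phi^n(\bot)|_n$ is just the iterated form of the guardedness estimate $\mathrm{d}(\Psi(g),\Psi(h))\le\tfrac12\,\mathrm{d}(g,h)$, and Kleene's theorem supplies existence where the paper instead uses Banach's theorem on the complete tree metric and then identifies the Kleene limit with the least upper bound.

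One side claim should be corrected, though uniqueness does not depend on it. \textsf{Tabled} returns a layer, not the tree $\mu\Phi(t)$, and its rounds do not compute the tree iterates $\Phi^n(\bot)$: they iterate the flat silent-step operator up to $\outm(t)$ (Lemma~\ref{lem:mono-llt}). The iterate $\Phi^n(\bot)$ corresponds instead to enumerating the graph to depth $n$ by further demands.
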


  \ifTablambdaChinese
  \noindent 最小不动点等于最大不动点,是因为这个归约是有护卫的(guarded):一个项在其子项被读取之前,就先确定了它那一层的根,于是在「两棵树之间的距离随其首个相异处的深度而减半」的度量下,展开是一个压缩映射,其各不动点重合(附录~\ref{sec:order})。因此把再次出现的项折叠为回边是可靠的,且结果与归约次序无关:任何到达该展开某个不动点的过程,到达的都是这同一个解。构建于解释器之上的编译运行时确实到达它这一点,不是被证明、而是以测试对照解释器检验的(附录~\ref{app:compiler})。
  \else
  \noindent Least equals greatest because the reduction is guarded: a term commits the root of its layer before its sub-terms are read, so under the metric in which the distance between two trees halves with the depth of their first disagreement the unfolding is a contraction, and its fixpoints coincide (Appendix~\ref{sec:order}). Folding a recurring term into a back edge is therefore sound, and the result is independent of reduction order: any procedure that reaches a fixpoint of the unfolding reaches this same solution. That the compiled runtime built over the interpreter does reach it is not proved but tested against the interpreter (Appendix~\ref{app:compiler}).
  \fi

  \section{\bilingual{Case Study: Three Regimes}{案例研究:三种情形}}\label{sec:application}

  \ifTablambdaChinese
  我们用 Claude Code 把算法~\ref{alg:solvewhnf} 的 tabled 求值实现为一个纯 $\lambda$-演算的解释器,\anon[作为补充材料提供]{可获取~\cite{yang2026-tablambda}},并在表~\ref{tab:cases} 所列情形中的三种上运行它,即解释器与普通归约分道扬镳的那三种:一个会终止的计算,其重复的工作被它共享成一张有限图;一个生产性的环,被它折叠成一张有限的环图;以及一个非生产性循环,被它在有限时间内判定为 $\bot$。
  \else
  We implemented the tabled evaluation of Algorithm~\ref{alg:solvewhnf} as an interpreter for the pure $\lambda$-calculus with Claude Code, available\anon[ as supplementary material]{~\cite{yang2026-tablambda}}, and run it on three of the regimes Table~\ref{tab:cases} sets out, the three in which the interpreter parts from ordinary reduction: a terminating computation whose repeated work it shares into a finite graph, a productive cycle it folds into a finite cyclic graph, and an unproductive loop it decides as $\bot$ in finite time.
  \fi

  \subsection{\bilingual{Sharing Subproblems: Edit Distance}{共享子问题:编辑距离}}\label{sec:application-sharing}

  \ifTablambdaChinese
  这是表~\ref{tab:cases} 中生产性、无环的那一情形:有限多个状态,一棵有限的调用树被解释器共享成一张有限图。我们回顾这个问题及其标准动态规划。编辑距离是把一个字符串变成另一个所需的单字符插入、删除、替换的最少次数,即 Levenshtein 距离~\cite{levenshtein1966-binary-codes}。教科书式的递推在两个字符串的后缀上逐个头部地读~\cite{wagner1974-string-correction}:头部相同则在两个尾部上递归、不计代价,头部不同则计一次代价并取替换、删除、插入三者中的最小,而空串的代价是另一个字符串剩余部分的长度。照字面理解,这个递推是指数级的,它的三个调用反复展开相同的后缀对;把每个不同的对只计算一次,便剩下经典 $O(mn)$ 表的 $(m{+}1)(n{+}1)$ 个对。

  项 $\mathtt{ed}$ 就是把这个递推写成纯 $\lambda$-演算;附录~\ref{app:editdistance-code} 连同其库一并完整列出它,直到 $Y$ 与 BinNat 算术。它算出的距离是对的,可一旦朴素求值就会爆炸;那个驯服它的唯一观察,即相同的子问题只需计算一次,恰恰正是解释器所做的,而且是免费做到的:$\mathtt{ed}$ 中任何地方都不出现记忆化表、不出现已访问集合,也不出现引用相等测试。本节追踪它是如何做到的。
  \else
  This is the productive, acyclic regime of Table~\ref{tab:cases}: finitely many states, a finite call tree the interpreter shares into a finite graph. We recall the problem and its standard dynamic program. Edit distance is the fewest single-character insertions, deletions, and substitutions that turn one string into another, the Levenshtein distance~\cite{levenshtein1966-binary-codes}. The textbook recurrence reads the two strings head by head over their suffixes~\cite{wagner1974-string-correction}: equal heads recur on both tails at no cost, unequal heads pay one and take the least of a substitution, a deletion, and an insertion, and an empty string costs the length of what remains of the other. Read literally the recurrence is exponential, its three calls re-expanding the same suffix pairs; computing each distinct pair once leaves the $(m{+}1)(n{+}1)$ pairs of the classic $O(mn)$ table.

  The term $\mathtt{ed}$ is this recurrence written in the pure $\lambda$-calculus; Appendix~\ref{app:editdistance-code} lists it in full with its library, down to $Y$ and the BinNat arithmetic. It computes the right distance, yet evaluated naively it explodes; the one observation that tames it, that identical subproblems need be computed only once, is exactly what the interpreter makes, and makes for free: no memoization table, no visited set, and no reference-equality test appears anywhere in $\mathtt{ed}$. This section traces how.
  \fi

  \paragraph{\bilingual{Subproblems are shared sub-terms}{子问题是共享的子项}}
  \ifTablambdaChinese
  一个子问题就是一次调用 $\mathit{ed}\,t_a\,t_b$,而它的两个参数是输入的后缀。Scott 编码列表的后缀并不是新建的列表,而是原列表的一个子项:尾处理函数 $\lambda h_a.\lambda t_a$ 把 $t_a$ 绑定到那个已经代表 $a$ 之剩余部分的节点本身。因此一个给定的后缀对,在每一次出现时,都是同一对节点,而由它们组装出的应用 $\mathit{ed}\,t_a\,t_b$ 是同一个内化节点。\textsf{Tabled} 以第~\ref{sec:bridge} 节确立的内化项的常数时间结构同一性为表的键,因此一个后缀对第二次出现时,它的调用已是一个制了表的状态,被直接读出而非重新进入。
  \else
  A subproblem is a call $\mathit{ed}\,t_a\,t_b$, and its two arguments are suffixes of the inputs. A suffix of a Scott-encoded list is not a freshly built list but a sub-term of the original: the tail handler $\lambda h_a.\lambda t_a$ binds $t_a$ to the very node that already represents the rest of $a$. So a given suffix pair is, on every occurrence, the same pair of nodes, and the application $\mathit{ed}\,t_a\,t_b$ assembled from them is one and the same interned node. \textsf{Tabled} keys its table by the constant-time structural identity of interned terms established in Section~\ref{sec:bridge}, so the second time a suffix pair arises its call is already a tabled state, read off rather than re-entered.
  \fi

  \paragraph{\bilingual{The collapse, on a tiny instance}{在一个小例子上的坍缩}}
  \ifTablambdaChinese
  取 $a=\mathtt{ab}$ 与 $b=\mathtt{cd}$(图~\ref{fig:editdistance})。头部不同,因此 $\mathit{ed}\,\mathtt{ab}\,\mathtt{cd}$ 等于一加上三个调用中的最小者:替换对应 $\mathit{ed}\,\mathtt{b}\,\mathtt{d}$,删除对应 $\mathit{ed}\,\mathtt{b}\,\mathtt{cd}$,插入对应 $\mathit{ed}\,\mathtt{ab}\,\mathtt{d}$。每个内部调用又各自分出三支,而同一个子问题会沿其中几条路径被到达:例如 $\mathit{ed}\,\mathtt{b}\,\mathtt{d}$ 就被 $\mathit{ed}\,\mathtt{ab}\,\mathtt{cd}$、$\mathit{ed}\,\mathtt{ab}\,\mathtt{d}$、$\mathit{ed}\,\mathtt{b}\,\mathtt{cd}$ 三者全都调用。朴素递归会重新展开每一次这样的重复,这里一共 $19$ 次调用。解释器不会:第一个参数取遍 $a$ 的三个后缀 $\{\mathtt{ab},\mathtt{b},\varepsilon\}$,第二个取遍 $b$ 的三个后缀 $\{\mathtt{cd},\mathtt{d},\varepsilon\}$,因此只有 $3\times 3=9$ 个不同的调用。沿几条路径被到达的调用,是一个带几条入边、只求值一次的制表节点,即图中高亮的状态;这九个节点承载着 $a$ 的某后缀与 $b$ 的某后缀之间的距离,这正是我们熟悉的编辑距离表。
  \else
  Take $a=\mathtt{ab}$ and $b=\mathtt{cd}$ (Figure~\ref{fig:editdistance}). The heads differ, so $\mathit{ed}\,\mathtt{ab}\,\mathtt{cd}$ is one plus the least of three calls: $\mathit{ed}\,\mathtt{b}\,\mathtt{d}$ for a substitution, $\mathit{ed}\,\mathtt{b}\,\mathtt{cd}$ for a deletion, and $\mathit{ed}\,\mathtt{ab}\,\mathtt{d}$ for an insertion. Each interior call branches into three again, and the same subproblem is reached along several of these paths: $\mathit{ed}\,\mathtt{b}\,\mathtt{d}$, for instance, is called by all three of $\mathit{ed}\,\mathtt{ab}\,\mathtt{cd}$, $\mathit{ed}\,\mathtt{ab}\,\mathtt{d}$, and $\mathit{ed}\,\mathtt{b}\,\mathtt{cd}$. The naive recursion re-expands every such repeat, $19$ calls in all here. The interpreter does not: the first argument ranges over the three suffixes $\{\mathtt{ab},\mathtt{b},\varepsilon\}$ of $a$ and the second over the three suffixes $\{\mathtt{cd},\mathtt{d},\varepsilon\}$ of $b$, so there are only $3\times 3=9$ distinct calls. A call reached along several paths is one tabled node with several incoming edges, evaluated once, the highlighted states in the figure; the nine nodes carry the distances between a suffix of $a$ and a suffix of $b$, which is the familiar edit-distance table.
  \fi

  \begin{figure}
    \centering
\begin{tikzpicture}[
  font=\footnotesize,
  state/.style={draw, rounded corners, minimum width=10mm, minimum height=8mm, align=center, inner sep=1pt},
  shared/.style={draw=red!65, fill=red!10, rounded corners, minimum width=10mm, minimum height=8mm, align=center, inner sep=1pt},
  axis/.style={font=\footnotesize\itshape},
  call/.style={-Stealth, gray!55, shorten >=1pt, shorten <=1pt},
]
  \node[axis] at (0.00,1.01) {$\mathtt{cd}$};
  \node[axis] at (1.75,1.01) {$\mathtt{d}$};
  \node[axis] at (3.50,1.01) {$\varepsilon$};
  \node[axis] at (-1.43,0.00) {$\mathtt{ab}$};
  \node[axis] at (-1.43,-1.30) {$\mathtt{b}$};
  \node[axis] at (-1.43,-2.60) {$\varepsilon$};
  \node[state] (s0x0) at (0.00,0.00) {$2$};
  \node[state] (s0x1) at (1.75,0.00) {$2$};
  \node[state] (s0x2) at (3.50,0.00) {$2$};
  \node[state] (s1x0) at (0.00,-1.30) {$2$};
  \node[shared] (s1x1) at (1.75,-1.30) {$1$};
  \node[shared] (s1x2) at (3.50,-1.30) {$1$};
  \node[state] (s2x0) at (0.00,-2.60) {$2$};
  \node[shared] (s2x1) at (1.75,-2.60) {$1$};
  \node[state] (s2x2) at (3.50,-2.60) {$0$};
  \draw[call] (s0x0) -- (s1x1);
  \draw[call] (s0x0) -- (s1x0);
  \draw[call] (s0x0) -- (s0x1);
  \draw[call] (s0x1) -- (s1x2);
  \draw[call] (s0x1) -- (s1x1);
  \draw[call] (s0x1) -- (s0x2);
  \draw[call] (s1x0) -- (s2x1);
  \draw[call] (s1x0) -- (s2x0);
  \draw[call] (s1x0) -- (s1x1);
  \draw[call] (s1x1) -- (s2x2);
  \draw[call] (s1x1) -- (s2x1);
  \draw[call] (s1x1) -- (s1x2);
  \node[anchor=north, align=center] at (1.75,-3.71) {\footnotesize naive recursion: 19 calls $\;\Longrightarrow\;$ tabled: 9 states};
\end{tikzpicture}
    \caption{\bilingual{The subproblem call graph of $\mathtt{ed}$ on $a=\mathtt{ab}$, $b=\mathtt{cd}$, generated by running
        the interpreter. Each node is one of the $(m{+}1)(n{+}1)=9$ distinct suffix-pair states, labeled with
        its edit distance; rows are suffixes of $a$, columns suffixes of $b$. An interior state calls the three
        subproblems it points to. Because a suffix is a shared sub-term, a subproblem reached along several
        paths is one interned state, here the highlighted nodes with more than one incoming edge; the interpreter
      evaluates each of the nine once, where the naive recursion would make $19$ calls.}{$\mathtt{ed}$ 在 $a=\mathtt{ab}$、$b=\mathtt{cd}$
        上的子问题调用图,由运行解释器生成。每个节点是 $(m{+}1)(n{+}1)=9$ 个不同的后缀对状态之一,标注其编辑距离;
        行是 $a$ 的后缀,列是 $b$ 的后缀。一个内部状态调用它所指向的三个子问题。因为后缀是共享的子项,沿几条路径
        被到达的子问题是同一个内化状态,即此处带不止一条入边的高亮节点;解释器对这九个各求值一次,而朴素递归
    则会做 $19$ 次调用。}}
    \label{fig:editdistance}
  \end{figure}

  \paragraph{\bilingual{Stepping through the interpreter}{逐步走过解释器}}
  \ifTablambdaChinese
  这种坍缩并非由 $\mathtt{ed}$ 安排;它是算法~\ref{alg:solvewhnf} 的 tabled 求值在运行时所做的。为求解 $\mathtt{ed}\,(\mathtt{cons}\,a\,(\mathtt{cons}\,b\,\mathtt{nil}))\,(\mathtt{cons}\,c\,(\mathtt{cons}\,d\,\mathtt{nil}))$(以 $a,b,c,d$ 记 BinNat 字符码),一层步骤收缩头部 redex,展开 $Y$ 并用 $\mathtt{eq}$ 比较两个头部;它们不同,因此暴露出的一层是覆于三个递归调用之上的 $\mathtt{succ}\,(\mathtt{min}\,\dots)$。其中第一个,$\mathtt{ed}\,(\mathtt{cons}\,b\,\mathtt{nil})\,(\mathtt{cons}\,d\,\mathtt{nil})$,尚不在表中,于是解释器求解它并记录其层。当删除分支稍后请求同一个项时,它是同一个内化节点、已在表中,其层被直接返回而无需再次求解:一次 cache 命中,即图中入到某一节点的第二条边。附录~\ref{app:editdistance-trace} 是整个运行过程,每个子问题的求解与之后每一次出现的命中,按解释器到达它们的次序排列,由对解释器插桩记录而来。这段 trace 与附录~\ref{app:editdistance-code} 的列表都是生成的:trace 由运行解释器得到,列表由源项得到,均非手工誊写。
  \else
  The collapse is not arranged by $\mathtt{ed}$; it is what the tabled evaluation of Algorithm~\ref{alg:solvewhnf} does as it runs. To solve $\mathtt{ed}\,(\mathtt{cons}\,a\,(\mathtt{cons}\,b\,\mathtt{nil}))\,(\mathtt{cons}\,c\,(\mathtt{cons}\,d\,\mathtt{nil}))$, writing $a,b,c,d$ for the BinNat character codes, the one-layer step contracts the head redex, unfolding $Y$ and comparing the two heads with $\mathtt{eq}$; they differ, so the exposed layer is $\mathtt{succ}\,(\mathtt{min}\,\dots)$ over the three recursive calls. The first of them, $\mathtt{ed}\,(\mathtt{cons}\,b\,\mathtt{nil})\,(\mathtt{cons}\,d\,\mathtt{nil})$, is not yet in the table, so the interpreter solves it and records its layer. When the deletion branch later asks for that same term, it is the same interned node, already in the table, and its layer is returned without solving it again: a cache hit, the second edge into one node in the figure. Appendix~\ref{app:editdistance-trace} is the whole run, every subproblem solved and every later occurrence hit, in the order the interpreter reaches them, recorded by instrumenting the interpreter. Both this trace and the listing of Appendix~\ref{app:editdistance-code} are generated, the trace by running the interpreter and the listing from the source terms, not transcribed by hand.
  \fi

  \paragraph{\bilingual{The cost}{代价}}
  \ifTablambdaChinese
  对于长度为 $m$ 和 $n$ 的输入,后缀对是 $a$ 的 $m{+}1$ 个后缀与 $b$ 的 $n{+}1$ 个后缀的 $(m{+}1)(n{+}1)$ 个乘积。解释器把这些状态各计算一次,即 trace 中的 \texttt{compute} 行,并把每次重复的调用折叠为一条回边,即 \texttt{hit} 行。每个状态做的算术运算次数有界:它的两个头部作为固定字母表上的定宽二进制码比较,其代价是对二进制自然数的一次后继与一次三路最小。以这些运算为计数单位(经典账目正是以此陈述其界),这 $(m{+}1)(n{+}1)$ 个状态在 $O(mn)$ 的时间与空间内完成动态规划表的工作;在项的层面,一个距离至多为 $m+n$,一次后继或最小至多走过其操作数的 $\log(m+n)$ 个二进制位,因此求解器花费 $O(mn\log(m+n))$ 的步数与制表条目,这个对数因子是算术本身也是纯项的代价。上面回顾的递推被直接誊写下来;使之高效的那张表完全来自 tabling 对重复状态的识别,而非来自项中写下的任何东西。$\mathtt{ed}$ 算出的确实是真正的 Levenshtein 距离,这一点由一个通过的测试再现,该测试在一系列输入上把它与那个递推交叉核对\anon[~(补充材料)]{~\cite{yang2026-tablambda}}。

  于是,那个显然的纯递推与动态规划表不过是同一个程序的两副面孔:一旦相同的子问题被认作同一个状态,那棵指数级的调用树就成了这张表。
  \else
  For inputs of lengths $m$ and $n$ the suffix pairs are the $(m{+}1)(n{+}1)$ products of the $m{+}1$ suffixes of $a$ with the $n{+}1$ suffixes of $b$. The interpreter computes each of these states once, the \texttt{compute} lines of the trace, and folds every repeated call into a back edge, the \texttt{hit} lines. Each state does a bounded number of arithmetic operations: its two heads are compared as fixed-width binary codes over a fixed alphabet, and its cost is a single successor and one three-way minimum on binary naturals. Counted in these operations, the unit in which the classical account states its bound, the $(m{+}1)(n{+}1)$ states do the work of the dynamic-programming table in $O(mn)$ time and space; at the term level a distance is at most $m+n$, a successor or minimum walks at most the $\log(m+n)$ bits of its operands, and the solver spends $O(mn\log(m+n))$ steps and table entries, the logarithmic factor the price of the arithmetic itself being a pure term. The recurrence recalled above is transcribed directly; the table that makes it efficient comes entirely from tabling's recognition of a repeated state, not from anything written in the term. That $\mathtt{ed}$ computes the genuine Levenshtein distance is reproduced as a passing test that cross-checks it against that recurrence over a range of inputs\anon[~(supplementary material)]{~\cite{yang2026-tablambda}}.

  The obvious pure recurrence and the dynamic-programming table are thus the same program seen twice: the table is what the exponential call tree becomes once identical subproblems are recognized as one state.
  \fi

  \subsection{\bilingual{Folding a Cycle: The Stream of Zeros}{折叠一个环:零的流}}\label{sec:application-cycles}

  \ifTablambdaChinese
  零的流 $r = Y\,(\mathtt{cons}\,0)$ 是表~\ref{tab:cases} 中生产性、有环的那一情形:有限多个状态,其无限树被解释器折叠成一张有限的环图。它由普通的函数调用构成,项中并未写入任何环;附录~\ref{app:cyclic-zeros-code} 连同其库一并完整列出它。我们跟随解释器计算它,看着这个环成形。
  \else
  The stream of zeros $r = Y\,(\mathtt{cons}\,0)$ is the productive, cyclic regime of Table~\ref{tab:cases}: finitely many states whose infinite tree the interpreter folds into a finite cyclic graph. It is built from ordinary function calls, with no cycle written into the term; Appendix~\ref{app:cyclic-zeros-code} lists it in full with its library. We follow the interpreter computing it, and watch the cycle form.
  \fi

  \paragraph{\bilingual{What the tail is}{尾部是什么}}
  \ifTablambdaChinese
  一个 Scott cons 单元 $\lambda c.\lambda n.\,c\,h\,t$ 是一个延续(continuation):作用于一个 cons 处理函数 $c$ 和一个 nil 处理函数 $n$ 时,它对其头 $h$ 与尾 $t$ 调用 $c$,因此尾部就是该单元交给 $c$ 的那个项。对 $r$ 而言,这个项不是「又一个 $r$」,而是不动点组合子产生的自应用 $W\,W$,其中 $W = \lambda x.\,\mathtt{cons}\,0\,(x\,x)$:把 $Y$ 展开一次会把 $r$ 改写成 $W\,W$,而 $W\,W$ 暴露出单元 $\mathtt{cons}\,0\,(W\,W)$,其尾部是 $W\,W$。
  \else
  A Scott cons cell $\lambda c.\lambda n.\,c\,h\,t$ is a continuation: applied to a cons handler $c$ and a nil handler $n$, it calls $c$ on its head $h$ and tail $t$, so the tail is whatever term the cell hands to $c$. For $r$ that term is not ``$r$ again'' but the self-application $W\,W$ that the fixed-point combinator produces, with $W = \lambda x.\,\mathtt{cons}\,0\,(x\,x)$: unfolding $Y$ once rewrites $r$ to $W\,W$, and $W\,W$ exposes the cell $\mathtt{cons}\,0\,(W\,W)$, whose tail is $W\,W$.
  \fi

  \paragraph{\bilingual{How the interpreter folds it}{解释器如何折叠它}}
  \ifTablambdaChinese
  \textsf{Tabled} 以它所求值的内化项为表的键(算法~\ref{alg:whnf})。求取 $r$ 会运行弱头步骤(算法~\ref{alg:whnfmap}),它收缩头部 redex 直到 cons 单元被暴露;该单元内的尾部是 $W\,W$,而求取 $W\,W$ 暴露出同一个单元,其尾部又一次是 $W\,W$。这次折叠系于一个事实:第二个 $W\,W$ 是由替换从头重建的,是一个全新的项,call-by-name 或 call-by-need 求值器会不停地重建它,无尽地流出 $0,0,0,\dots$。解释器按结构内化项,因此重建出的 $W\,W$ 正是其层刚被求取的那个节点,已经是图上的一个节点,于是这个尾部是指回它的一条回边,而不是又一个待求取的层。$r$ 的图就是一个尾部指回自身的 cons 单元(图~\ref{fig:cyclic-zeros}),即展开为 $0,0,0,\dots$ 的那张有限环图。整个运行过程见附录~\ref{app:cyclic-zeros-trace},由运行解释器生成。
  \else
  \textsf{Tabled} keys its table by the interned term it evaluates (Algorithm~\ref{alg:whnf}). Demanding $r$ runs the weak head step (Algorithm~\ref{alg:whnfmap}), which contracts head redexes until the cons cell is exposed; the tail inside that cell is $W\,W$, and demanding $W\,W$ exposes the same cell, whose tail is $W\,W$ once more. The fold turns on one fact: that second $W\,W$ is rebuilt from scratch by the substitution, a fresh term a call-by-name or call-by-need evaluator would keep rebuilding, streaming $0,0,0,\dots$ without end. The interpreter interns terms by their structure, so the rebuilt $W\,W$ is the very node whose layer was just demanded, already a node of the graph, and that tail is a back edge to it rather than a further layer to demand. The graph of $r$ is one cons cell whose tail points back to it (Figure~\ref{fig:cyclic-zeros}), the finite cyclic graph that unfolds to $0,0,0,\dots$. The whole run is Appendix~\ref{app:cyclic-zeros-trace}, generated by running the interpreter.
  \fi

  \paragraph{\bilingual{The same recognition as edit distance}{与编辑距离相同的识别}}
  \ifTablambdaChinese
  这正是第~\ref{sec:application-sharing} 节中共享子问题的那种识别,只是读法上差了一档。在那里,重复的调用是一个已经求解完的调用,形成一个菱形,几个调用方在一个制表节点处汇合;而在这里,重复的调用是一个仍在进行中的调用,因此复用它就是一条回边,一个环。两种情形里,项本身都不携带任何共享或环:每一步都是函数调用,而那张图,无论是菱形还是圆环,都是 tabling 对哪些调用再次出现的记录。
  \else
  This is the recognition that shared a subproblem in Section~\ref{sec:application-sharing}, read one notch differently. There a repeated call was one already solved, a diamond where several callers meet at one tabled node; here the repeated call is one still in progress, so reusing it is a back edge, a cycle. In neither does the term carry sharing or a cycle of its own: every step is a function call, and the graph, diamond or circle, is tabling's record of which calls recur.
  \fi

  \begin{figure}
    \centering
\begin{tikzpicture}[
  font=\footnotesize,
  state/.style={draw, rounded corners, minimum width=9mm, minimum height=8mm, align=center, inner sep=1pt},
  shared/.style={draw=red!65, fill=red!10, rounded corners, minimum width=9mm, minimum height=8mm, align=center, inner sep=1pt},
  call/.style={-Stealth, gray!60, shorten >=1pt, shorten <=1pt},
  back/.style={-Stealth, red!65, shorten >=1pt, shorten <=1pt},
]
  \node[state] (s0) at (0.00,0) {$0$};
  \node[font=\footnotesize\itshape, below=1.5pt of s0] {$r$};
  \node[shared] (s1) at (2.80,0) {$0$};
  \node[font=\footnotesize\itshape, below=1.5pt of s1] {$W\,W$};
  \draw[call] (s0) -- node[above]{\itshape tail} (s1);
  \draw[back] (s1) to[out=40,in=-40,looseness=8] node[right]{\itshape tail} (s1);
  \node[anchor=north, align=center] at (1.40,-1.2) {\footnotesize each state exposes $\mathtt{cons}\;0\;\cdot$; unfolds to $0,0,0,\ldots$};
\end{tikzpicture}
    \caption{\bilingual{The graph of $r = Y\,(\mathtt{cons}\,0)$, generated by running the interpreter. Its nodes are $r$ and the
        self-application $W\,W$ (with $W = \lambda x.\,\mathtt{cons}\,0\,(x\,x)$), each exposing the cons cell with head $0$;
        the demands also table the head-chain terms they pass through, such as $\mathtt{cons}\,0\,(W\,W)$, memo entries
        rather than nodes of the graph, and not drawn. The drawing is data-level: a cell's head and tail are read off
        the exposed abstraction's body, compressing the cell's per-layer abstraction and application nodes into one
        drawn state. $r$ leads in, and the tail of $W\,W$ is $W\,W$ again, the back edge drawn on
        meeting that node a second time, the highlighted self-loop. The finite cyclic graph unfolds to
      $0,0,0,\dots$, which ordinary reduction would expand forever.}{$r = Y\,(\mathtt{cons}\,0)$ 的图,由运行解释器生成。
        其节点是 $r$ 与自应用 $W\,W$(其中 $W = \lambda x.\,\mathtt{cons}\,0\,(x\,x)$),各自暴露出头为 $0$ 的 cons 单元;
        各次求取途中经过的头链项(如 $\mathtt{cons}\,0\,(W\,W)$)也会被制表,但那是备忘条目而非图的节点,故不画出。
        画法取数据层:单元的头与尾从暴露的抽象体上读出,单元自身逐层的抽象与应用节点被压缩为一个画出的状态。
        $r$ 引入,而 $W\,W$ 的尾部又是 $W\,W$,即再次遇到该节点时画下的回边,即高亮的自环。这张有限环图展开为
    $0,0,0,\dots$,而普通归约会把它永远展开下去。}}
    \label{fig:cyclic-zeros}
  \end{figure}

  \subsection{\bilingual{Deciding an Unproductive Loop: $\Omega$}{判定一个非生产性循环:$\Omega$}}\label{sec:application-bottom}

  \ifTablambdaChinese
  项 $\Omega = (\lambda x.\,x\,x)\,(\lambda x.\,x\,x)$ 是表~\ref{tab:cases} 中非生产性的那一情形:一个重复的状态,在此普通归约发散,而解释器在有限时间内返回 $\bot$。它是 $\omega = \lambda x.\,x\,x$ 的自应用 $\omega\,\omega$,列于附录~\ref{app:omega-code},没有任何构造子可暴露。它唯一的头部 $\beta$-收缩把 $x\,x$ 中的 $x$ 替换为 $\lambda x.\,x\,x$,又得到 $\Omega$:头部归约回到它出发之处,却从未暴露任何一层。
  \else
  The term $\Omega = (\lambda x.\,x\,x)\,(\lambda x.\,x\,x)$ is the unproductive regime of Table~\ref{tab:cases}: a repeated state, where ordinary reduction diverges and the interpreter returns $\bot$ in finite time. It is the self-application $\omega\,\omega$ of $\omega = \lambda x.\,x\,x$, listed in Appendix~\ref{app:omega-code}, and has no constructor to expose. Its one head $\beta$-contraction substitutes $\lambda x.\,x\,x$ for $x$ in $x\,x$, giving $\Omega$ again: the head reduction returns to where it started without ever exposing a layer.
  \fi

  \paragraph{\bilingual{The contractum is the same term}{收缩结果就是同一个项}}
  \ifTablambdaChinese
  因为项是内化的,收缩结果 $(\lambda x.\,x\,x)\,(\lambda x.\,x\,x)$ 并不是一份全新的副本,而是与 $\Omega$ 同一个节点。因此当解释器把头部收缩一次、并请求其结果的弱头范式时,它请求的正是 $\Omega$,即它已经在计算的那个项。
  \else
  Because terms are interned, the contractum $(\lambda x.\,x\,x)\,(\lambda x.\,x\,x)$ is not a fresh copy but the same node as $\Omega$. So when the interpreter contracts the head once and asks for the weak head normal form of the result, it is asking for $\Omega$, the term it is already computing.
  \fi

  \paragraph{\bilingual{Stepping through the interpreter}{逐步走过解释器}}
  \ifTablambdaChinese
  求解 $\Omega$(算法~\ref{alg:whnf})会把它压入栈并运行弱头步骤(算法~\ref{alg:whnfmap}),后者收缩头部 redex;收缩结果是 $\Omega$,即仍在栈上的那个项,且什么都没暴露,于是这次重入返回它的运行中逼近 $\bot$。重新计算仍得 $\bot$,迭代已收敛,解释器在普通归约会无尽收缩同一 redex 之处停机(图~\ref{fig:omega})。这又是第~\ref{sec:application-cycles} 节那种栈上重入,只是以另一种方式收场:在那里,尾部重入之前已经暴露了一个 cons 单元,因此运行中逼近就是那一层,复用便把它带回成一个环;而在这里,$\Omega$ 重入之前什么都没暴露,因此逼近仍是 $\bot$,循环是非生产性的。判定它,正是第~\ref{sec:bridge} 节那个从 $\bot$ 起的攀升发挥作用之处。附录~\ref{app:omega-trace} 是该运行过程,由解释器生成。
  \else
  Solving $\Omega$ (Algorithm~\ref{alg:whnf}) puts it on the stack and runs the weak head step (Algorithm~\ref{alg:whnfmap}), which contracts the head redex; the contractum is $\Omega$, the term still on the stack, and nothing has been exposed, so the re-entry returns its running approximation, $\bot$. Recomputing leaves it $\bot$, the iteration has converged, and the interpreter halts where ordinary reduction would contract the same redex without end (Figure~\ref{fig:omega}). This is the stack re-entry of Section~\ref{sec:application-cycles} once more, settled the other way: there a cons cell was exposed before the tail re-entered, so the running approximation was that layer and the reuse carried it back as a cycle; here nothing is exposed before $\Omega$ re-enters, so the approximation is still $\bot$ and the loop is unproductive. Deciding it is where the ascent from $\bot$ of Section~\ref{sec:bridge} does its work. Appendix~\ref{app:omega-trace} is the run, generated by the interpreter.
  \fi

  \begin{figure}
    \centering
\begin{tikzpicture}[
  font=\footnotesize,
  state/.style={draw=red!65, fill=red!10, rounded corners, minimum width=12mm, minimum height=8mm, align=center, inner sep=2pt},
  back/.style={-Stealth, red!65, shorten >=1pt, shorten <=1pt},
]
  \node[state] (omega) at (0,0) {$\Omega$};
  \draw[back] (omega) to[out=40,in=-40,looseness=8] node[right]{\itshape $\beta$} (omega);
  \node[anchor=north, align=center] at (0,-1.3) {\footnotesize re-enters on the stack, no layer exposed $\;\Longrightarrow\;$ $\bot$};
\end{tikzpicture}
    \caption{\bilingual{The graph of $\Omega = (\lambda x.\,x\,x)\,(\lambda x.\,x\,x)$, generated by running
        the interpreter. The single state's head reduction contracts to the same interned term, so the interpreter
        re-enters it on the stack with no layer exposed and decides it $\bot$, in finite time, where ordinary
      reduction would contract forever.}{$\Omega = (\lambda x.\,x\,x)\,(\lambda x.\,x\,x)$ 的图,由运行解释器生成。
        这个单一状态的头部归约收缩到同一个内化项,因此解释器在栈上重入它、未暴露任何层,并在有限时间内把它判定为
    $\bot$,而普通归约会在此永远收缩下去。}}
    \label{fig:omega}
  \end{figure}

  \section{\bilingual{Discussion}{讨论}}\label{sec:discussion}

  \subsection{\bilingual{The \texorpdfstring{$\lambda$}{lambda}-Calculus as a DSL for Graphs}{把 \texorpdfstring{$\lambda$}{lambda}-演算用作图的 DSL}}\label{sec:graphs}
  \ifTablambdaChinese
  解释器把一个项的展开折叠到它重复的状态上,于是一棵无限树成为一张有限图,而项的结构\emph{就是}那张图。因此,程序员如何编码一个算法,就决定了哪些子问题会重合,并随之决定图的大小以及它所耗的时间与空间;什么会折叠,取决于编码,而非指称。\footnote{零的环状流写成 $r = Y\,(\mathtt{cons}\,0)$ 时会折叠,其状态反复出现;但写成 $Y\,F\,\underline{0}$(其中 $F = \lambda s.\lambda n.\,\mathtt{cons}\,0\,(s\,(\mathtt{succ}\,n))$)时却不会:它的状态 $Y\,F\,\underline{0}, Y\,F\,\underline{1}, \dots$ 在结构上两两相异,尽管它们指称同一条流(附录~\ref{sec:tabling})。}在整个有理片段上,即有限状态、正则、以及有限格上不动点的那些计算中,每个领域平时为管理一张图而手写的簿记,正是在状态同一性上做 tabling 所免费给出的同一样东西:动态规划的记忆表、博弈搜索的置换表、Datalog 可达性与指向分析的导出事实表,以及合一(unification)与递归类型的环检测相等,都是同一个机制,即对重复状态的识别;附录~\ref{app:dsl} 把每一个都做成解释器所运行的一个纯项。项的结构是这一成本上的一个杠杆,而它据以制表的相等性是另一个。于是,编译就是一种图变换,把源图映射到一张编译后的图,同一个解释器把它求解成一个到 Python 的自举编译器;它据以制表的相等性,以及由此打开的相等性与归约的设计空间,我们在第~\ref{sec:design-space} 节展开。
  \else
  The interpreter folds a term's unfolding onto its repeated states, so an infinite tree becomes a finite graph and the term's structure \emph{is} that graph. How a programmer encodes an algorithm therefore fixes which subproblems coincide, and with them the size of the graph and the time and space it costs; what folds depends on the encoding, not the denotation.\footnote{The cyclic stream of zeros folds when written as $r = Y\,(\mathtt{cons}\,0)$, whose states recur, but not as $Y\,F\,\underline{0}$ with $F = \lambda s.\lambda n.\,\mathtt{cons}\,0\,(s\,(\mathtt{succ}\,n))$: its states $Y\,F\,\underline{0}, Y\,F\,\underline{1}, \dots$ are all structurally distinct, though they denote the same stream (Appendix~\ref{sec:tabling}).} Across the rational fragment, the finite-state, regular, and finite-lattice-fixpoint computations, the bookkeeping each domain ordinarily writes by hand to manage a graph is exactly what tabling on state identity supplies for free: the memo table of dynamic programming, the transposition table of game search, the derived-fact table of Datalog reachability and points-to analysis, and the cycle-detecting equality of unification and recursive types are one mechanism, the recognition of a repeated state, and Appendix~\ref{app:dsl} works each as a pure term the interpreter runs. The structure of the term is one lever on this cost, and the equality it tables by is another. Compilation is then a graph transformation, mapping the source graph to a compiled one that the same interpreter solves into a bootstrap compiler to Python; the equality it tables by, and the design space of equalities and reductions it opens, we take up in Section~\ref{sec:design-space}.
  \fi

  \subsection{\bilingual{Alternative Operational Semantics}{替代的操作语义}}\label{sec:design-space}
  \ifTablambdaChinese
  \textsf{Tabled} 折叠两个它能判定为相同的状态,而它据以折叠的可判定同一性,始终是状态的某个一阶编码的结构同一性;所选择的就是这个编码,即结构同一性\emph{是关于什么的}。解释器取 de Bruijn 项本身,即第~\ref{sec:bridge} 节的内化,也是这类同一性中最细的一种,由一次指针测试判定。编译器取标准的去函数化(defunctionalized)形式,以「函数体的散列值,加上按位置而非按绑定深度编号的自由变量」来命名每个闭包,于是仅在捕获绑定深度上不同的抽象会共享同一编码而被折叠到一起:还是结构同一性,只是如今针对一个更粗的表示,仍是常数时间的测试,并假定散列足够长、碰撞概率可以忽略(附录~\ref{app:compiler})。在一切编码之外是行为相等(behavioral equality),它折叠行为相同的状态、能达到整个有理片段,但在 $\lambda$-项上不可判定,即附录~\ref{sec:tabling} 的有理性上限——该附录也表明 de Bruijn 同一性够不到这个上限。因此,一个可判定的同一性能粗到何种程度,取决于计算如何被编码为一阶数据;而编译器的编码会折叠解释器所分开的状态:附录~\ref{app:compiler} 直接测量了这更高的命中率,在自编译自举上内化对象约少五倍,于是自举在 CPython 3.11 上随之快约五倍、内存只用其中一小部分。

  归约函数是第二个选择。\textsf{Tabled} 从任何呈现这张一层映射的归约中读出一套树语义;弱头归约呈现的是 Lévy--Longo 树,而另一种归约则固定另一套惰性树语义,全都由同一个 \textsf{Tabled} 计算。于是纯 $\lambda$-演算不是一门语言,而是一族,以它所制表的相等性和它所运行的归约为索引,建立在单一的 tabled 求值之上。
  \else
  \textsf{Tabled} folds two states it can decide are the same, and the decidable identity it folds by is always the structural identity of some first-order encoding of the state; the choice is the encoding, structural identity \emph{of what}. The interpreter takes the de Bruijn term itself, the interning of Section~\ref{sec:bridge} and the finest such identity, decided by a pointer test. The compiler takes the standard defunctionalized form, naming each closure by a hash of its body with its free variables numbered by position rather than by binding depth, so abstractions differing only in the depth at which they bind a capture share an encoding and fold together: the same structural identity, now of a coarser representation, still a constant-time test, the hash assumed long enough for collisions to be negligible (Appendix~\ref{app:compiler}). Beyond every encoding lies behavioral equality, which folds states with equal behavior and attains the whole rational fragment but is undecidable on $\lambda$-terms, the rational ceiling of Appendix~\ref{sec:tabling}, which also shows the de Bruijn identity falling short of it. How coarse a decidable identity reaches is thus set by how the computation is encoded as first-order data, and the compiler's encoding folds states the interpreter keeps apart: Appendix~\ref{app:compiler} measures the higher hit rate directly, about five times fewer interned objects on the self-compilation bootstrap, and on CPython 3.11 the bootstrap then runs about five times faster and in a fraction of the memory.

  The reduction function is the second choice. \textsf{Tabled} reads a tree semantics off whatever reduction presents the one-layer map; weak-head reduction presents the L\'evy--Longo tree, and a different reduction fixes a different lazy tree semantics, all computed by the same \textsf{Tabled}. The pure $\lambda$-calculus is thus not one language but a family, indexed by the equality it tables and the reduction it runs, over a single tabled evaluation.
  \fi

  \section{\bilingual{Related Work}{相关工作}}\label{sec:related}

  \paragraph{\bilingual{Domain-specific cycle detection.}{特定领域的环检测。}}
  \ifTablambdaChinese
  通过检测重入状态来让环状结构上的计算终止,这在若干领域早已各自独立地确立。逻辑编程中的 tabling~\cite{tamaki1986-tabled-resolution, chen1996-tabled-evaluation-delaying} 在逻辑程序结论算子的最小不动点~\cite{vanemden1976-predicate-logic-semantics}中补全非生产性的环,而非陷入循环。有理树与环状合一~\cite{colmerauer1984-trees, courcelle1983-infinite-trees} 处理带回边的一阶项。等递归(equirecursive)类型论通过跟踪已访问的类型对,使子类型与相等检查终止~\cite{amadio1993-subtyping-recursive-types, brandt1998-coinductive-type-equality}。二元决策图~\cite{bryant1986-bdd} 通过内化共享子图来把布尔函数规范化。这些方法各自把某一个领域的签名写死:Herbrand 项、一阶树、类型,或布尔函数。它们对数据并不参数化;每一种都是一门独立的方法,其环检测专属于它所处理的数据。
  \else
  Terminating computation over cyclic structures by detecting re-entrant states is long established in several domains independently. Tabling in logic programming~\cite{tamaki1986-tabled-resolution, chen1996-tabled-evaluation-delaying} completes unproductive cycles in the least fixpoint of a logic program's consequence operator~\cite{vanemden1976-predicate-logic-semantics} instead of looping. Rational trees and cyclic unification~\cite{colmerauer1984-trees, courcelle1983-infinite-trees} handle first-order terms with back-edges. Equirecursive type theory terminates subtyping and equality checks by tracking visited type pairs~\cite{amadio1993-subtyping-recursive-types, brandt1998-coinductive-type-equality}. Binary decision diagrams~\cite{bryant1986-bdd} canonicalize Boolean functions by interning shared sub-graphs. Each of these hard-wires one domain's signature: Herbrand terms, first-order trees, types, or Boolean functions. They are not parametric in the data; each is a separate discipline whose cycle detection is specific to the data it handles.
  \fi

  \paragraph{\bilingual{Sharing as an added construct.}{把共享作为附加构造。}}
  \ifTablambdaChinese
  图归约~\cite{wadsworth1971-semantics-lambda-calculus} 在可能含环的图上对项求值,但环只系结在递归是原语之处,即一个 $\mathtt{letrec}$ 或一个内置不动点;纯不动点组合子在每个 $\beta$-步都把其函数体复制一份,展开成一条无界的脊。带显式递归的环状 $\lambda$-项~\cite{ariola1997-explicit-recursion} 与 call-by-need~\cite{ariola1997-call-by-need} 把共享作为一个附加构造来提供。Morris 利用 $\mathtt{letrec}$ 来创建不可变的循环列表,以编程实现无限项合一~\cite{morris1978-list-structures-unification},展示了语言现有机制的一种应用。纯函数式语言若没有诸如可观察共享(observable sharing)~\cite{gill2009-observable-sharing} 之类的扩展,或一种把惰性系结的环穿过数据结构的环形程序写法~\cite{bird1984-circular-programs},甚至无法观察到共享。在所有这些之中,一个在环状列表上的普通递归函数,比如一个 $\mathtt{map}$,都归约成一条无界的脊,而非一个环。没有一个能让演算保持不扩展:每一个都添加了某种演算原生没有的机制,即指针同一性、$\mathtt{letrec}$,或一个可观察共享组合子。
  \else
  Graph reduction~\cite{wadsworth1971-semantics-lambda-calculus} evaluates terms over possibly cyclic graphs, but a cycle is tied only where recursion is primitive, a $\mathtt{letrec}$ or a built-in fixpoint; a pure fixpoint combinator has its body copied at each $\beta$-step and unfolds into an unbounded spine. Cyclic $\lambda$-terms with explicit recursion~\cite{ariola1997-explicit-recursion} and call-by-need~\cite{ariola1997-call-by-need} provide sharing as an added construct. Morris used $\mathtt{letrec}$ to create immutable cyclic lists to program unification over infinite terms~\cite{morris1978-list-structures-unification}, demonstrating one application of an existing language mechanism. A pure functional language cannot even observe sharing without an extension such as observable sharing~\cite{gill2009-observable-sharing} or a circular-program idiom that threads a lazily tied knot through a data structure~\cite{bird1984-circular-programs}. Across all of these, an ordinary recursive function over a cyclic list, such as a $\mathtt{map}$, reduces to an unbounded spine rather than to a cycle. None leaves the calculus unextended: each adds a mechanism, pointer identity, $\mathtt{letrec}$, or an observable-sharing combinator, that the calculus does not natively have.
  \fi


  \section{\bilingual{Conclusion}{结论}}\label{sec:conclusion}
  \ifTablambdaChinese
  一个 $\lambda$-项的每个子项都是一个状态,而对其弱头归约做 tabling,便把该项的含义算成一张有限的、可能含环的图;而标准做法则要伸到演算之外,取一个递归构造来造环、一处非纯来遍历它。这正是从两侧读到的纯性的代价:tabling 凭一种项自身未必能做出的识别来折叠环,因此这识别必须可判定、只能停在结构同一性,达不到完整的有理片段(附录~\ref{sec:tabling});而项这一侧被扣住不给,因此一个项无法把它自己的共享读回成一个 $\mathtt{letrec}$。

  解释一个项的那同一个 tabling 也编译它,而编译器本身就是它所求解的一个项(附录~\ref{app:compiler}):同一个对再次出现状态的可判定识别,把环折叠成一张有限图,并把非生产性循环判定为 $\bot$。这些保证是对解释器证明的;编译后的形式以实验受检于同样的保证:在所测试的每个输入上,它都产生与解释执行相同的有限图。

  若作操作式的读法,一个 $\lambda$-项\emph{祈使}一场计算,一次归约接着一次;而 tabling 让它转而去\emph{声明}一场计算,只声明要构建什么样的图、如何转换它,其中值相等者可以自由替换,而对一个已经见过的状态的识别则落在求解器身上。这正是我们对 $\lambda$-演算所怀的理想:一种用来\emph{声明}计算、而非用来\emph{祈使}计算的语言。
  \else
  Each sub-term of a $\lambda$-term is a state, and tabling its weak-head reduction computes the term's meaning as a finite, possibly cyclic graph, where the standard account would reach outside the calculus for a recursion construct to build the cycle and an impurity to traverse it. This is the price of purity, read from two sides: tabling folds a cycle by a recognition the term may not perform, so the recognition must be decidable and stays structural identity, short of the full rational fragment (Appendix~\ref{sec:tabling}), while the term, denied it, cannot read its own sharing back out as a $\mathtt{letrec}$.

  The same tabling that interprets a term also compiles it, the compiler being itself a term it solves (Appendix~\ref{app:compiler}): one decidable recognition of a recurring state folds a cycle into a finite graph and decides an unproductive loop as $\bot$. Those guarantees are proved for the interpreter; the compiled form is held to them experimentally, producing the same finite graph as the interpreted run on every input tested.

  Read operationally, a $\lambda$-term \emph{imperates} a computation, one reduction after another; tabling lets it \emph{declare} one instead, only what graph is built and how it is transformed, with equal values freely substitutable, while the recognition of a state already seen falls to the solver. This is the ideal we hold for the $\lambda$-calculus, a language in which a computation is \emph{declared}, not \emph{imperated}.
  \fi

  \fi 

  \ifTablambdaBody
  \bibliographystyle{ACM-Reference-Format}
  \bibliography{references}
  \fi

  \ifTablambdaAppendix
  \appendix

  \section{\bilingual{Proofs}{证明}}\label{app:faithful}
  \ifTablambdaChinese
  本附录证明第~\ref{sec:bridge} 节的三个论断:求值器是可靠的(它构造的图展开为 Lévy--Longo 树),解是唯一的(最小不动点等于最大不动点),以及当有限多状态可达、且从每个可达项出发的静默头部归约都终止时,求值器会终止。三者都针对 $\lambda$-项的弱头归约陈述并证明。
  \else
  This appendix proves the three claims of Section~\ref{sec:bridge}: the evaluator is sound (the graph it builds unfolds to the L\'evy--Longo tree), the solution is unique (least equals greatest fixpoint), and the evaluator terminates when finitely many states are reachable and the silent head reduction from each terminates. All three are stated and proved for weak-head reduction of $\lambda$-terms.
  \fi

  \subsection{\bilingual{Prerequisites}{预备}}

  \ifTablambdaChinese
  所涉及的项与树的三条性质。
  \else
  Three properties of the terms and trees involved.
  \fi

  \begin{lemma}[\bilingual{The L\'evy--Longo trees form a domain}{Lévy--Longo 树构成一个域}]\label{lem:order-llt}
    \ifTablambdaChinese
    设 $D$ 为这样的有限与无限树:其每个节点是一个变量 $\Varsh_i$、一个抽象 $\Lamsh$、一个应用 $\Appsh$,或者一个 $\bot$ 叶;$\bot$ 叶可以出现在任何位置,特别是应用的函数位置,截断尚未暴露之脊的逼近元里正是如此,而在解中,应用的函数侧展开为以变量为首的脊。按 $t \sqsubseteq u$ 当且仅当 $u$ 由 $t$ 把若干 $\bot$ 叶替换为子树而得,来对它们排序。则 $(D, \sqsubseteq)$ 是一个带最小元的 $\omega$-完备偏序:单个 $\bot$ 叶最小,且每条递增链都有最小上界,逐层计算。
    \else
    Let $D$ be the finite and infinite trees whose every node is a variable $\Varsh_i$, an abstraction
    $\Lamsh$, an application $\Appsh$, or a $\bot$ leaf; a $\bot$ leaf may stand anywhere, in particular
    at the function position of an application, as it does in the approximants that truncate a spine not
    yet exposed, while in the solutions the function side of an application unfolds to a variable-headed
    spine. Order them by
    $t \sqsubseteq u$ iff $u$ is obtained from $t$ by replacing $\bot$ leaves with subtrees. Then
    $(D, \sqsubseteq)$ is a pointed $\omega$-complete partial order: the single $\bot$ leaf is least, and
    every increasing chain has a least upper bound, computed level by level.
    \fi
  \end{lemma}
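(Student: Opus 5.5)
I will represent each tree in $D$ as a partial labelling function on addresses, which makes the order and the chain limits concrete. Addresses are finite words over $\{0,1\}$ (child $0$ of an abstraction is its body; children $0,1$ of an application are function and argument). A tree is a map $T$ from a prefix-closed set of addresses to the labels $\{\Varsh_i, \Lamsh, \Appsh, \bot\}$. Its arity must be respected: an address labelled $\Lamsh$ has exactly the child $0$, one labelled $\Appsh$ has exactly $0$ and $1$, and $\Varsh_i$ and $\bot$ are leaves. Finite and infinite trees are both covered, since the domain of $T$ may be infinite.

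In this encoding, ``$u$ is obtained from $t$ by replacing $\bot$ leaves with subtrees'' becomes a pointwise condition. For every address $p$ in the domain of $t$, either $t(p)\neq\bot$ and $u(p)=t(p)$, or $t(p)=\bot$ and $p$ is in the domain of $u$ with any label. I first check that this pointwise reading agrees with the informal definition. The subtree substituted at a $\bot$ leaf $p$ is exactly $u$ restricted to the addresses extending $p$, and the arity constraints force $u$ and $t$ to have the same domain away from the $\bot$ leaves of $t$.

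The partial-order axioms then follow quickly.
\begin{itemize}
\item Reflexivity: take no replacements.
\item Transitivity: a replacement inside a replaced subtree is itself a replacement at the original $\bot$ leaf. Pointwise, if $t(p)=\bot$ then $p$ is in the domain of $v$, since it is in the domain of $u$; if $t(p)\ne\bot$ then $v(p)=u(p)=t(p)$.
\item Antisymmetry: if $t\sqsubseteq u\sqsubseteq t$ and $t(p)=\bot$, then $u(p)$ must be $\bot$. Otherwise $u(p)\neq\bot$ would force $t(p)=u(p)\neq\bot$. So the labels agree everywhere and, by induction on address length, so do the domains.
\item Least element: the single $\bot$ leaf has domain $\{\varepsilon\}$ with label $\bot$, and every tree contains $\varepsilon$, so it lies below everything.
\end{itemize}

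For a chain $t_0\sqsubseteq t_1\sqsubseteq\cdots$ I define the lub level by level. Along the chain, the label at a fixed address $p$ can only change from $\bot$ to a non-$\bot$ label, and after that it never changes again. Also, once $p$ enters a domain it stays in every later domain. So I set $\operatorname{dom}(t_\infty)=\bigcup_n\operatorname{dom}(t_n)$. I let $t_\infty(p)$ be the eventual label, which is the non-$\bot$ label if one ever appears and $\bot$ otherwise.

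Three properties of $t_\infty$ remain to be checked.
\begin{itemize}
\item $t_\infty$ is a tree. The domain is a union of prefix-closed sets, so it is prefix-closed. For arity, if $t_\infty(p)=\Lamsh$ or $\Appsh$, this label holds in some $t_n$, which therefore contains the required children; the children of a $\bot$ or variable address are never added to the domain. This union-of-increasing-domains step is where I expect the only real care is needed.
\item $t_\infty$ is an upper bound: each $t_n\sqsubseteq t_\infty$ pointwise, since a non-$\bot$ label in $t_n$ is the eventual label.
\item $t_\infty$ is least. Suppose every $t_n\sqsubseteq u$. If $t_\infty(p)\neq\bot$, the label appears in some $t_n$, so $u(p)$ equals it. If $t_\infty(p)=\bot$, then $p$ lies in $\operatorname{dom}(t_0)$ or in some $t_n$, and hence in $\operatorname{dom}(u)$. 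So $t_\infty\sqsubseteq u$.
\end{itemize}

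Finally, the lub agrees with the statement's ``computed level by level''. For each depth $k$, the truncation of $t_\infty$ to addresses of length at most $k$ is determined by the truncations of the $t_n$ to that depth.
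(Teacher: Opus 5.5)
Your proof is correct and takes the same route as the paper. Both read a tree as a partial labelling of positions and take the chain's least upper bound to be the eventual label at each position, which is well defined because refinement never relabels a constructor; you additionally verify the partial-order axioms and that the limit is a well-formed tree (prefix-closure and arity), which the paper's brief argument leaves implicit.
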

  \begin{proof}
    \ifTablambdaChinese
    把一棵树读作位置上的一个部分标注,每个位置带 $\Varsh_i, \Lamsh, \Appsh$ 之一及其子节点,或带 $\bot$ 而无子节点。那个孤零零的 $\bot$ 叶被每棵树细化,故它最小。对一条链,把每个位置标注为任一成员赋予它的那个构造子,若所有成员都赋 $\bot$ 则标 $\bot$;细化从不重标一个构造子,因此这是良定义的,且它就是最小上界,逐层由第一个暴露该位置的成员确定。
    \else
    Read a tree as a partial labeling of positions, each carrying one of $\Varsh_i, \Lamsh, \Appsh$
    with its children, or $\bot$ with none. The lone $\bot$ leaf is refined by every tree, so it is
    least. For a chain, label each position by the constructor any member assigns it, and $\bot$ if all
    do; refinement never relabels a constructor, so this is well defined, and it is the least upper
    bound, fixed level by level by the first member that exposes each position.
    \fi
  \end{proof}

  \begin{lemma}[\bilingual{The weak-head unfolding is monotone}{弱头展开是单调的}]\label{lem:mono-llt}
    \ifTablambdaChinese
    设 $\outm(t)$ 暴露 $t$ 的弱头范式的顶层构造子,或当从 $t$ 出发的静默头部归约不暴露任何东西时为 $\bot$;在变量或抽象处,暴露的层就是该项本身,而在头部具有刚性弱头范式的应用处,暴露的层是该应用本身,覆于它自己的函数与参数之上(算法~\ref{alg:whnfmap}),即范式的一个代表元而非范式本身:一个项的树由其弱头范式算出,而项与其头归约结果共享同一个弱头范式,因此这一子项选取使每棵树、从而解,都不受影响。则把该静默步骤读作项索引的层赋值上的算子(取平坦的逐点序,$\bot$ 低于每个已暴露的层),$\outm$ 就是它从 $\bot$ 达到的最小不动点;而那个从 $\outm$ 再读一层的映射 $\Psi$——在 $\outm(t) = \bot$ 处把 $g$ 送到 $\bot$,在 $\outm(t)$ 暴露 $\sigma(t_1, \dots, t_k)$ 处把 $g$ 送到 $\sigma(g(t_1), \dots, g(t_k))$——是单调的。
    \else
    Let $\outm(t)$ expose the top constructor of $t$'s weak-head normal form, or $\bot$ when the silent head reduction from $t$ exposes none; at a variable or an abstraction the exposed layer is the term itself, and at an application whose head has a rigid weak-head normal form it is the application itself, over its own function and argument (Algorithm~\ref{alg:whnfmap}), a representative of the normal form rather than the normal form: the tree of a term is computed from its weak-head normal form, which a term shares with its head reduct, so this choice of children leaves every tree, and the solution, unchanged. Then $\outm$ is the least fixpoint, reached from $\bot$, of the silent step read as an operator on term-indexed layers under the flat pointwise order ($\bot$ below every exposed layer), and the map $\Psi$ that reads one further layer off $\outm$, sending $g$ to $\bot$ where $\outm(t) = \bot$ and to $\sigma(g(t_1), \dots, g(t_k))$ where $\outm(t)$ exposes $\sigma(t_1, \dots, t_k)$, is monotone.
    \fi
  \end{lemma}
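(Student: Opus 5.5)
I plan to prove the two claims of Lemma~\ref{lem:mono-llt} separately: first that $\outm$ is a least fixpoint, then that $\Psi$ is monotone.

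\emph{The least fixpoint.} Work in the space $L$ of term-indexed layer assignments $h$. Each $h(t)$ is either $\bot$ or one layer $\sigma(t_1,\dots,t_k)$. Order $L$ pointwise by the flat order, with $\bot$ below every exposed layer and distinct exposed layers incomparable. This is a pointed $\omega$-cpo, because each coordinate is a flat domain.

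Define the silent-step operator $S\colon L\to L$ exactly as Algorithm~\ref{alg:whnfmap} does:
\[
S(h)(t)=
\begin{cases}
t & \text{if } t \text{ is } \Varsh_i \text{ or } \Lamsh(b),\\
h(\Subst(b,a)) & \text{if } t=\Appsh(f,a) \text{ and } h(f)=\Lamsh(b),\\
t & \text{if } t=\Appsh(f,a) \text{ and } h(f) \text{ is } \Varsh \text{ or } \Appsh,\\
\bot & \text{if } t=\Appsh(f,a) \text{ and } h(f)=\bot.
\end{cases}
\]
$S$ is monotone and in fact continuous: each output coordinate reads at most two input coordinates, and on a flat domain a coordinate can only rise from $\bot$ to a single value. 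By Kleene's theorem the least fixpoint is $\bigsqcup_n S^n(\bot)$.

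I then show this supremum equals $\outm$ by two inclusions.

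\textbf{(i)} By induction on $n$: if $S^n(\bot)(t)\neq\bot$, then the head reduction of $t$ terminates in a whnf, and $S^n(\bot)(t)$ is the representative layer prescribed in the statement. The value at $t$ is either $t$ itself, or it is taken from $S^{n-1}(\bot)$ at the head reduct, or it requires $S^{n-1}(\bot)(f)$ to be a rigid whnf. In each case the induction hypothesis applies.

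\textbf{(ii)} Conversely, if $t$ reaches a whnf in $m$ head steps, I induct on $m$ together with the length of the spine. The head step of $\Appsh(f,a)$ first reduces $f$ to a whnf, in fewer steps, and then contracts. This yields some $n$ with $S^n(\bot)(t)=\outm(t)$.

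Where head reduction never reaches a whnf, (i) forces the value $\bot$, which matches $\outm(t)=\bot$. It remains to justify that the rigid-head layer is "the same" as the normal form. Since $t$ and its head reduct have the same whnf, and the rigid application $\Appsh(f,a)$ has its head chain $f$ reducing to a variable-headed spine, unfolding $f$ further reproduces the spine of the normal form. So the trees coincide, and this is exactly the sense in which the statement says the choice of children is harmless.

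\emph{Monotonicity of $\Psi$.} Here $\outm$ is fixed, so $\Psi(g)(t)$ depends on $g$ only through the children $g(t_i)$ of the fixed layer $\outm(t)$. Take $g\sqsubseteq g'$ pointwise in $(D,\sqsubseteq)$ from Lemma~\ref{lem:order-llt}. Where $\outm(t)=\bot$, both images are $\bot$. Otherwise $\sigma(g(t_1),\dots,g(t_k))\sqsubseteq\sigma(g'(t_1),\dots,g'(t_k))$, because refinement by replacing $\bot$ leaves acts componentwise under a common root constructor. This step is routine.

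The main obstacle is inclusion (ii) together with the representative issue. A rigid head of an application may itself be an application whose own head needs reducing, so the induction must use a measure combining the number of head-reduction steps with the spine length, rather than either one alone. I would first try a lexicographic induction on that pair.
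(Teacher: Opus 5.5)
Your proposal is correct and follows the paper's proof. Both use a monotone silent-step operator on the flat pointwise order whose iterates from $\bot$ are characterized by head reduction, and both get monotonicity of $\Psi$ because the root of $\Psi(g)(t)$ is fixed by $\outm(t)$ alone while $g$ only reaches the children.

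The one difference is bookkeeping. You take the operator literally from Algorithm~\ref{alg:whnfmap}, resolving the head through $h(f)$, so iterates no longer count head steps one-to-one. That is why you need the two inclusions and the lexicographic (steps, spine length) measure, which does work, including the rigid case where $f$ takes as many steps as $t$. The paper's operator instead reads the value at the head reduct, which makes ``the $k$-th iterate exposes what head reduction reaches within $k$ steps'' a direct induction on $k$.
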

  \begin{proof}
    \ifTablambdaChinese
    该静默步骤作为层赋值上的算子,把一个赋值 $f$ 送到这样的赋值:一旦头部的弱头范式确定了构造子就暴露该层(在变量、抽象或刚性头部的应用处),其余处取 $f$ 在头部归约结果处的值;它在平坦的逐点序上单调,其从 $\bot$ 出发的第 $k$ 次迭代在 $t$ 处暴露头部归约在 $k$ 步内所到达的构造子,因此最小不动点暴露这条运行首个到达的构造子,而 $\outm(t) = \bot$ 恰好发生在从不暴露任何构造子之时。$\Psi$ 仅由 $\outm(t)$ 就确定 $\Psi(g)(t)$ 的根,而只把 $g$ 传给子节点,因此 $g \sqsubseteq h$ 给出 $\Psi(g) \sqsubseteq \Psi(h)$。
    \else
    The silent step, as an operator on term-indexed layer assignments, sends an assignment $f$ to the assignment that exposes the layer as soon as the head's weak-head normal form fixes it, at a variable, an abstraction, or a rigid-headed application, and elsewhere takes $f$'s value at the head reduct; it is monotone on the flat pointwise order, and its $k$-th iterate from $\bot$ exposes at $t$ the constructor the head reduction from $t$ reaches within $k$ steps, so the least fixpoint exposes the first constructor the run reaches, with $\outm(t) = \bot$ exactly when none is ever exposed. $\Psi$ fixes the root of $\Psi(g)(t)$ from $\outm(t)$ alone and passes $g$ only to the children, so $g \sqsubseteq h$ gives $\Psi(g) \sqsubseteq \Psi(h)$.
    \fi
  \end{proof}

  \begin{lemma}[\bilingual{Finiteness and acyclicity}{有限性与无环性}]\label{lem:finite}
    \ifTablambdaChinese
    从一个有限、无环的项出发,经弱头归约可达的每个项又都是有限且无环的,因此可达项上的结构同一性是可判定的。
    \else
    Every term reachable by weak-head reduction from a finite, acyclic term is again finite and acyclic, so structural identity on reachable terms is decidable.
    \fi
  \end{lemma}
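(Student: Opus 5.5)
The plan is an induction on the number of reduction steps, with one invariant: every term reachable from the root is a finite, acyclic syntax tree over the three constructors $\Varsh_i$, $\Lamsh$, $\Appsh$. We read "term" as the inductive datatype of de Bruijn terms. Its elements are finite well-founded trees by construction, so the base case is the hypothesis on the root term.

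For the inductive step we examine each operation that produces a new term during weak-head reduction, as driven by Algorithm~\ref{alg:whnfmap}. There are three.
\begin{itemize}
\item \emph{Taking an immediate sub-term.} This covers the function $f$ or argument $a$ of $\Appsh\,f\,a$, the body $b$ of $\Lamsh\,b$, and the children of an exposed layer. A sub-term of a finite acyclic tree is again finite and acyclic, since its node set is a subset.
\item \emph{Substitution} $\Subst{b, a}$. This is the only case that builds new structure.
\item \emph{Exposing a layer.} By Lemma~\ref{lem:mono-llt} this returns the term itself or its immediate sub-terms, which falls under the first case.
\end{itemize}

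Substitution is the main point. I would define $\Subst{b,a}$ structurally on $b$ with an auxiliary shift, as usual for de Bruijn indices:
\begin{itemize}
\item a variable becomes either a shifted copy of $a$ or a re-indexed variable;
\item $\Lamsh$ and $\Appsh$ recurse into their children, raising the cutoff under $\Lamsh$.
\end{itemize}
Shifting is likewise structural on $a$ and preserves its shape. Both functions are defined by structural recursion on well-founded trees, so they terminate. Each returns a tree of size at most $|b|\cdot|a| + |b|$, which is finite. The result is built only from inductive constructors, so it is acyclic. Under hash-consing the result may share nodes, forming a DAG, but it has no cycles. This is because each newly built node's children are already-built nodes, so creation order is a topological order.

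Decidability then follows immediately. Structural equality of two finite trees is decided by simultaneous recursion, which terminates because each recursive call is on strictly smaller trees. Equivalently, interning each finite term bottom-up terminates and reduces the identity test to a pointer comparison, as the footnote in Section~\ref{sec:bridge} notes.

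The one real obstacle is a conceptual rather than technical point. One might fear that tabling itself introduces cycles. It does, but only in the \emph{graph} of layers, through back edges between table entries, and never inside a term. Terms are never mutated or closed into loops: a back edge is a reference from the layer assignment to a term key, not a constructor node. I would state this separation explicitly, so the lemma concerns the key space of $\mathit{cache}$ and $\mathit{approx}$ and not the output graph.
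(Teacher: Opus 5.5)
Your proposal is correct and follows essentially the paper's argument. Substitution is the only operation that builds new structure; it copies finite acyclic terms into finitely many variable positions, so it preserves finiteness and acyclicity, and structural identity is then decided by a position-by-position comparison that terminates because the terms are well-founded. Your explicit de Bruijn shift, size bound, and remark that hash-consing yields a DAG rather than a cycle are extra detail the paper leaves implicit, while the paper instead stresses that finitely many nodes alone would not suffice, since a finite cyclic graph has infinitely many positions.
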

  \begin{proof}
    \ifTablambdaChinese
    项按构造是有限且无环的,而弱头归约所用的唯一构项操作是替换,它把一个有限、无环的项中有限多个变量出现各自替换为一个有限、无环的项,产生全新的副本、绝不产生指回某个祖先的边,因而返回一个有限、无环的项。两个这样的项的结构同一性于是通过逐位置比较来判定,而这个比较之所以会终止,正因为这些项是无环的:仅靠节点有限并不够,因为一个有限的\emph{有环}图节点有限、位置却无穷多,在其上比较(以及实现它的自底向上的内化)将不会停机。无环性正是使一个状态落在可判定一侧的东西;一个有限的环图是一个解,而非一个状态。
    \else
    Terms are finite and acyclic by construction, and the only term-former weak-head reduction applies is substitution, which replaces each of finitely many variable occurrences in a finite, acyclic term by a finite, acyclic term, producing fresh copies and never an edge back to an ancestor, and so returns a finite, acyclic term. Structural identity of two such terms is then decided by comparing them position by position, a comparison that terminates only because the terms are acyclic: finitely many nodes alone would not suffice, since a finite \emph{cyclic} graph has finitely many nodes yet infinitely many positions, on which the comparison, and the bottom-up interning that realizes it, would not halt. Acyclicity is what keeps a state on the decidable side; a finite cyclic graph is a solution, not a state.
    \fi
  \end{proof}

  \noindent
  \ifTablambdaChinese
  这三条前提都成立:$D$ 是一个域(引理~\ref{lem:order-llt}),展开 $\Psi$ 是单调的、以 $\outm$ 为其最小不动点(引理~\ref{lem:mono-llt}),而状态是有限、无环的一阶项、具有可判定的结构同一性(引理~\ref{lem:finite})。引理~\ref{lem:finite} 也回答了环究竟是如何形成的:一个状态是一个有限、无环的项,绝非无限对象,因此一个再次出现的项凭其可判定的同一性被认出并折叠为一条回边,而求值器所构造的有限环图展开为那棵无限树。
  \else
  The three prerequisites hold: $D$ is a domain (Lemma~\ref{lem:order-llt}), the unfolding $\Psi$ is monotone with $\outm$ its least fixpoint (Lemma~\ref{lem:mono-llt}), and the states are finite, acyclic first-order terms with a decidable structural identity (Lemma~\ref{lem:finite}). Lemma~\ref{lem:finite} is also what answers how a cycle is formed at all: a state is a finite, acyclic term, never an infinite object, so a recurring term is recognized by its decidable identity and folded into a back edge, and the finite cyclic graph the evaluator builds unfolds to the infinite tree.
  \fi

  \subsection{\bilingual{The Solution Is Unique}{解是唯一的}}\label{sec:order}

  \ifTablambdaChinese
  $\Psi$ 在查看其参数之前就先确定每棵树的根,因此两个候选解只能在越来越深处才相异。在「两棵树之间的距离随其首个相异处的深度而减半」的度量下,$\Psi$ 是一个压缩映射,迫使它的各不动点相等。
  \else
  $\Psi$ commits the root of each tree before consulting its argument, so two candidate solutions can disagree only ever deeper. Under the metric in which the distance between two trees halves with the depth of their first disagreement, $\Psi$ is a contraction, forcing its fixpoints equal.
  \fi

  \begin{definition}[\bilingual{Tree metric}{树度量}]\label{def:metric}
    \ifTablambdaChinese
    对 $t, u \in D$,令 $\mathrm{d}(t, u) = 0$ 若 $t = u$,否则 $\mathrm{d}(t, u) = 2^{-\ell}$,其中 $\ell$ 是 $t$ 与 $u$ 相异的最小深度。通过 $\mathrm{d}(g, h) = \sup_x \mathrm{d}(g(x), h(x))$ 把它提升到候选解上。
    \else
    For $t, u \in D$ let $\mathrm{d}(t, u) = 0$ if $t = u$ and otherwise $\mathrm{d}(t, u) = 2^{-\ell}$, where $\ell$ is the least depth at which $t$ and $u$ differ. Lift it to candidate solutions by $\mathrm{d}(g, h) = \sup_x \mathrm{d}(g(x), h(x))$.
    \fi
  \end{definition}

  \begin{lemma}[\bilingual{Completeness}{完备性}]\label{lem:metric}
    \ifTablambdaChinese
    $(D, \mathrm{d})$ 与候选空间 $(X \to D, \mathrm{d})$ 都是完备度量空间,且一条在 $\mathrm{d}$ 下收敛的递增链的极限就是它的最小上界。
    \else
    $(D, \mathrm{d})$ and the candidate space $(X \to D, \mathrm{d})$ are complete metric spaces, and the limit of an increasing chain that converges in $\mathrm{d}$ is its least upper bound.
    \fi
  \end{lemma}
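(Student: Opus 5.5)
The plan is to prove completeness of $(D,\mathrm{d})$ directly, lift it pointwise to the candidate space, and then check that on an increasing chain the metric limit coincides with the order-theoretic join of Lemma~\ref{lem:order-llt}.

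\textbf{Step 1: $\mathrm{d}$ is an ultrametric.} Positivity and symmetry are immediate from Definition~\ref{def:metric}. For the strong triangle inequality $\mathrm{d}(t,v)\le\max(\mathrm{d}(t,u),\mathrm{d}(u,v))$, suppose $t$ and $u$ agree to depth below $\ell$ and $u$ and $v$ agree to depth below $\ell'$. Then $t$ and $v$ agree to depth below $\min(\ell,\ell')$. Throughout, I read trees as partial labelings of positions, as in the proof of Lemma~\ref{lem:order-llt}, with $\bot$ counted as a label. "Agree at depth $k$" then means the same label at every position of depth $k$.

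\textbf{Step 2: completeness of $(D,\mathrm{d})$.} Take a Cauchy sequence $(t_n)$. For each depth $k$ there is an $N_k$ such that all $t_n$ with $n\ge N_k$ agree on every position of depth at most $k$. Define the limit $t$ by giving each position of depth $k$ the common label of the $t_n$ with $n\ge N_k$. These choices are coherent across depths, because the agreement at depth $k+1$ contains the agreement at depth $k$. Hence $t$ is a well-formed tree: the children of a node labeled with a constructor are exactly the positions labeled at the next depth. By construction, $\mathrm{d}(t_n,t)\le 2^{-(k+1)}$ for $n\ge N_k$, so $t_n\to t$.

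\textbf{Step 3: the candidate space.} Here $\mathrm{d}$ is the supremum metric. A Cauchy sequence $(g_n)$ is uniformly Cauchy in $x$, so the thresholds $N_k$ can be chosen independently of $x$. Take the pointwise limit $g(x)=\lim_n g_n(x)$ from Step 2. The same uniform bound $\mathrm{d}(g_n,g)\le 2^{-(k+1)}$ holds for all $n\ge N_k$, so $g_n\to g$ in the supremum metric. The supremum is finite because distances are bounded by $1$.

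\textbf{Step 4: the limit of a converging increasing chain is its join.} This is the step needing care. Let $t_0\sqsubseteq t_1\sqsubseteq\cdots$ converge to $t$ in $\mathrm{d}$. Consider any position $p$.

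If some $t_m$ labels $p$ with a constructor, then every later $t_n$ keeps that label, since refinement never relabels a constructor. So the eventual label, which is the metric limit's label, equals the join's label from Lemma~\ref{lem:order-llt}.

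If every $t_n$ labels $p$ with $\bot$ (or $p$ is absent), then the limit also has $\bot$ there, matching the join.

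So the two trees agree at every position, and the metric limit is the least upper bound. The candidate-space case follows pointwise, because the order on $X\to D$ is pointwise.

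The main obstacle is the convergence hypothesis in Step 4. An increasing chain need not converge in $\mathrm{d}$: for example, $\bot$ leaves can be filled at ever shallower depths in one argument $x$ while the supremum over $x$ stays large. The lemma only claims the equality for chains that do converge, so I will take convergence as given. I will then make sure the argument uses only the stabilization of each position's label, and never uses the order alone to construct the limit.
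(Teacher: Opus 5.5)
Your proof is correct and follows the paper's own route. You take a levelwise limit for Cauchy sequences in $D$, use an $x$-independent Cauchy modulus for the supremum metric on $X \to D$, and show that the metric limit agrees positionwise with the levelwise join of Lemma~\ref{lem:order-llt}; the paper asserts that last step in one line and does not check the ultrametric inequality, both of which you spell out.

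One correction to your closing remark, which is not part of the proof: within a single argument $x$, each depth has only finitely many positions, each filled at most once, so every increasing chain in $D$ itself converges, and an increasing chain in $X \to D$ can fail to converge only through fills at bounded depth across infinitely many distinct arguments.
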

  \begin{proof}
    \ifTablambdaChinese
    一列树的 Cauchy 序列在每个固定深度上最终恒定,因此逐层极限在 $D$ 中存在,序列收敛于它。候选空间继承这一点:一个在上确界度量下 Cauchy 的序列在 $x$ 上是一致 Cauchy 的,因此它是逐点 Cauchy 的、且模长不依赖于 $x$;故其逐点极限 $g$ 满足 $\sup_x \mathrm{d}(g_n(x), g(x)) \to 0$(在一致 Cauchy 界中令 $m \to \infty$),于是 $g_n \to g$ 于 $\mathrm{d}$,空间完备。对一条递增链,逐层极限与引理~\ref{lem:order-llt} 的逐层最小上界是同一棵树。
    \else
    A Cauchy sequence of trees is eventually constant at every fixed depth, so the levelwise limit exists in $D$ and the sequence converges to it. The candidate space inherits this: a sequence Cauchy in the supremum metric is uniformly Cauchy in $x$, so it is pointwise Cauchy with a modulus independent of $x$; its pointwise limit $g$ therefore satisfies $\sup_x \mathrm{d}(g_n(x), g(x)) \to 0$ (let $m \to \infty$ in the uniform Cauchy bound), so $g_n \to g$ in $\mathrm{d}$ and the space is complete. For an increasing chain, the levelwise limit and the levelwise least upper bound of Lemma~\ref{lem:order-llt} are the same tree.
    \fi
  \end{proof}

  \begin{lemma}[\bilingual{Guardedness}{有护卫性}]\label{lem:guarded}
    $\mathrm{d}(\Psi(g), \Psi(h)) \le \tfrac{1}{2}\,\mathrm{d}(g, h)$.
  \end{lemma}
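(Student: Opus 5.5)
I will prove the bound pointwise, term by term, and then take the supremum. By Definition~\ref{def:metric}, $\mathrm{d}(\Psi(g), \Psi(h)) = \sup_t \mathrm{d}(\Psi(g)(t), \Psi(h)(t))$. It therefore suffices to show, for every term $t$,
\[
  \mathrm{d}(\Psi(g)(t), \Psi(h)(t)) \le \tfrac{1}{2}\,\mathrm{d}(g, h).
\]
I split on the shape of $\outm(t)$, exactly as $\Psi$ is defined in Lemma~\ref{lem:mono-llt}.

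\emph{Case $\outm(t) = \bot$.} Both $\Psi(g)(t)$ and $\Psi(h)(t)$ are the lone $\bot$ leaf. Their distance is $0$, and the bound holds trivially.

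\emph{Case $\outm(t) = \sigma(t_1, \dots, t_k)$ with $\sigma \in \{\Varsh_i, \Lamsh, \Appsh\}$.} Here $\Psi(g)(t) = \sigma(g(t_1), \dots, g(t_k))$ and $\Psi(h)(t) = \sigma(h(t_1), \dots, h(t_k))$. The two trees carry the same root label $\sigma$ at depth $0$, because the root is fixed by $\outm(t)$ alone and not by the argument. That is the guardedness observation already used in the proof of Lemma~\ref{lem:mono-llt}.

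\emph{Depth shift.} If the two trees differ at all, their first disagreement lies inside some child $j$. A position at depth $\ell$ in $g(t_j)$ sits at depth $\ell+1$ in $\Psi(g)(t)$. So the least depth of disagreement in $\Psi(g)(t)$ versus $\Psi(h)(t)$ is $1 + \min_j \ell_j$, where $\ell_j$ is the first depth at which $g(t_j)$ and $h(t_j)$ differ; children that agree contribute nothing. Hence
\[
  \mathrm{d}(\Psi(g)(t), \Psi(h)(t)) = \tfrac{1}{2} \max_j \mathrm{d}(g(t_j), h(t_j)) \le \tfrac{1}{2} \sup_x \mathrm{d}(g(x), h(x)) = \tfrac{1}{2}\,\mathrm{d}(g, h).
\]
For $\Varsh_i$ we have $k=0$, the trees are equal, and the distance is $0$.

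\emph{Taking the supremum.} Taking the supremum over $t$ gives the lemma.

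The main obstacle is the depth-shift step: it must hold uniformly even though a $\bot$ leaf may sit anywhere in the trees. The point to verify is that "differ at depth $\ell$" compares labels including $\bot$. A position labelled $\bot$ in one tree and by a constructor in the other counts as a disagreement at that depth, with no special treatment. With depth counted from the root as $0$, the shift by exactly one is immediate.

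A second, minor point is that the sup over terms $x$ may range over infinitely many terms, but the inequality holds for each $x$ separately, so the sup preserves it. No finiteness is needed; Lemma~\ref{lem:metric} supplies completeness only later, when Banach's fixpoint theorem is applied.
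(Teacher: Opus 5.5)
Your proof is correct and follows the paper's own argument: $\outm(t)$ alone fixes the root of $\Psi(g)(t)$, so $\Psi(g)(t)$ and $\Psi(h)(t)$ can disagree only one level down, inside some child. This halves the distance pointwise, and taking the supremum over $t$ preserves the bound. Your case split and the identity $\mathrm{d}(\Psi(g)(t),\Psi(h)(t)) = \tfrac{1}{2}\max_j \mathrm{d}(g(t_j),h(t_j))$ spell out in detail what the paper compresses into a single step.
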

  \begin{proof}
    \ifTablambdaChinese
    固定一个项 $t$。$\Psi(g)(t)$ 的根仅由 $\outm(t)$ 决定:$\bot$,或被暴露的构造子。因此 $\Psi(g)(t)$ 与 $\Psi(h)(t)$ 共享它们的根,而它们只能在某个子节点内部相异,那里的差异是某个 $g(t_i)$ 与 $h(t_i)$ 之差,深一层。因此最小相异深度增加一,距离减半。
    \else
    Fix a term $t$. The root of $\Psi(g)(t)$ is determined by $\outm(t)$ alone: $\bot$, or the exposed constructor. So $\Psi(g)(t)$ and $\Psi(h)(t)$ share their root, and they can differ only inside some child, where the difference is one of $g(t_i)$ against $h(t_i)$, one level deeper. Hence the least depth of difference grows by one, halving the distance.
    \fi
  \end{proof}

  \begin{theorem}[\bilingual{The solution is unique}{解是唯一的}]\label{thm:unique-coalgebra}
    \ifTablambdaChinese
    $\Psi$ 恰有一个不动点,它同时是序论意义下的最小不动点 $\bigsqcup_k \Psi^k(\lambda t.\bot)$ 与最大的、余归纳的那个。特别地,一个项的 Lévy--Longo 树是其 tabled 弱头归约的唯一解(定理~\ref{thm:unique})。
    \else
    $\Psi$ has exactly one fixpoint, and it is simultaneously the order-theoretic least fixpoint $\bigsqcup_k \Psi^k(\lambda t.\bot)$ and the greatest, coinductive one. In particular the L\'evy--Longo tree of a term is the unique solution of its tabled weak-head reduction (Theorem~\ref{thm:unique}).
    \fi
  \end{theorem}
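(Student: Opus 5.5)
The plan is to prove uniqueness with the Banach fixed-point theorem on the complete metric space $(X \to D, \mathrm{d})$ of Lemma~\ref{lem:metric}. Existence of the least fixpoint will come from Kleene's theorem on the domain of Lemma~\ref{lem:order-llt}. The two are then identified, which makes every fixpoint, and in particular the greatest (coinductive) one, coincide with the least.

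\textbf{Step 1: existence and uniqueness.} By Lemma~\ref{lem:guarded}, $\Psi$ is a $\tfrac12$-contraction on $(X \to D, \mathrm{d})$. This space is complete and nonempty, since it contains $\lambda t.\bot$. Banach's theorem then gives a unique fixpoint $g^\ast$. Moreover, for any starting point $g_0$ the iterates $\Psi^k(g_0)$ converge to $g^\ast$ in $\mathrm{d}$. Uniqueness also follows directly: if $g = \Psi(g)$ and $h = \Psi(h)$, then $\mathrm{d}(g,h) \le \tfrac12 \mathrm{d}(g,h)$, which forces $\mathrm{d}(g,h) = 0$.

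\textbf{Step 2: the least fixpoint.} Take $g_0 = \lambda t.\bot$. The sequence $\Psi^k(g_0)$ is increasing: $g_0 \sqsubseteq \Psi(g_0)$ holds trivially, and monotonicity of $\Psi$ (Lemma~\ref{lem:mono-llt}) propagates this up the chain. By Step~1 the chain converges in $\mathrm{d}$, and Lemma~\ref{lem:metric} identifies its limit with the least upper bound $\bigsqcup_k \Psi^k(\lambda t.\bot)$. So that least upper bound equals $g^\ast$ and is a fixpoint. It is below every fixpoint $h$, because induction gives $\Psi^k(\bot) \sqsubseteq \Psi^k(h) = h$. Hence it is the order-theoretic least fixpoint. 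It is also the greatest: every fixpoint equals $g^\ast$, so in particular any coinductive solution (any $g$ with $g \sqsubseteq \Psi(g)$ that is maximal, or the final-coalgebra unfolding of the cyclic graph) is a fixpoint and coincides with it. If one wants the greatest fixpoint as the join of post-fixpoints, one more observation is needed: any post-fixpoint $g \sqsubseteq \Psi(g)$ iterates upward to a fixpoint above it, namely the unique one. So every post-fixpoint lies below $g^\ast$.

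\textbf{Step 3: identification with the L\'evy--Longo tree.} I would check that $t \mapsto \mathrm{LLT}(t)$ is itself a fixpoint of $\Psi$, and then apply Step~1. There are two cases.
\begin{itemize}
\item If $\outm(t)=\bot$, then by Lemma~\ref{lem:mono-llt} the head reduction exposes no constructor, so $t$ has no whnf and $\mathrm{LLT}(t)=\bot$.
\item Otherwise $\outm(t)$ exposes $\sigma(t_1,\dots,t_k)$. By definition the tree is $\sigma(\mathrm{LLT}(t_1),\dots,\mathrm{LLT}(t_k))$. This uses the remark in Lemma~\ref{lem:mono-llt} that taking the rigid application itself, rather than its normal form, as the representative leaves trees unchanged, because a term shares its whnf with its head reduct.
\end{itemize}

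I expect Step~3 to be the main obstacle, specifically the rigid-application case. There the exposed children are the original function $f$ and argument $a$, not the whnf's spine. One must show that $\mathrm{LLT}(f)$ is indeed the LLT of the function part of the whnf. The approach I would try is induction on the length of the head reduction to the rigid whnf. The key point is that weak-head reduction of $\Appsh(f,a)$ with a rigid head only reduces inside $f$ (by Algorithm~\ref{alg:whnfmap}, $\textsf{Resolve}(f)$ is rigid), so $f$ and the reduced function part have equal trees. The remaining steps are routine.
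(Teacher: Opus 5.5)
Your proposal is correct and follows the paper's proof. The shared steps are Banach's theorem on the complete candidate space of Lemma~\ref{lem:metric} via the contraction of Lemma~\ref{lem:guarded}, the increasing chain $\Psi^k(\lambda t.\bot)$ whose $\mathrm{d}$-limit is its least upper bound, and the observation that the L\'evy--Longo tree (the coinductive unfolding) is a fixpoint of $\Psi$ and hence the unique one; your Step~3 just spells out, via invariance of trees under head reduction inside $f$ in the rigid case, what the paper compresses into one sentence plus the representative remark of Lemma~\ref{lem:mono-llt}.
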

  \begin{proof}
    \ifTablambdaChinese
    由引理~\ref{lem:guarded},$\Psi$ 在引理~\ref{lem:metric} 的完备空间上是一个压缩映射,故由 Banach 不动点定理~\cite{banach1922-operations},它恰有一个不动点,且每条迭代序列都收敛于它。特定的序列 $\Psi^k(\lambda t.\bot)$ 是递增的,因为 $\lambda t.\bot$ 最小且 $\Psi$ 单调(引理~\ref{lem:mono-llt}),并在 $\mathrm{d}$ 下收敛;由引理~\ref{lem:metric},它的最小上界就是那个极限,因此序论意义下的最小不动点等于这个唯一不动点。余归纳的展开,凭其「从 $\outm$ 读取各层」的定义性质,是 $\Psi$ 的一个不动点,因而就是同一个。由此最小与最大重合,故任何到达某个不动点的过程都到达这个解;这正是允许解释器与编译器采用不同求值次序的依据。
    \else
    By Lemma~\ref{lem:guarded}, $\Psi$ is a contraction on the complete space of Lemma~\ref{lem:metric}, so by Banach's fixed-point theorem~\cite{banach1922-operations} it has exactly one fixpoint, and every iteration sequence converges to it. The particular sequence $\Psi^k(\lambda t.\bot)$ is increasing, since $\lambda t.\bot$ is least and $\Psi$ is monotone (Lemma~\ref{lem:mono-llt}), and converges in $\mathrm{d}$; by Lemma~\ref{lem:metric} its least upper bound is that limit, so the order-theoretic least fixpoint equals the unique fixpoint. The coinductive unfolding, by its defining property of reading layers off $\outm$, is a fixpoint of $\Psi$, hence the same. It follows that least and greatest coincide, so any procedure that reaches a fixpoint reaches the solution; this is what licenses an interpreter and a compiler with different evaluation orders.
    \fi
  \end{proof}

  \subsection{\bilingual{Soundness and Termination}{可靠性与终止性}}\label{sec:tabling}

  \begin{definition}[\bilingual{Rational}{有理}]\label{def:rational}
    \ifTablambdaChinese
    一棵树是\emph{有理的},如果它只有有限多个不同的子树;等价地,它是某个有限图的展开~\cite{courcelle1983-infinite-trees}。
    \else
    A tree is \emph{rational} if it has finitely many distinct subtrees; equivalently, it is the unfolding of a finite graph~\cite{courcelle1983-infinite-trees}.
    \fi
  \end{definition}

  \begin{theorem}[\bilingual{Soundness}{可靠性}]\label{thm:correctness}
    \ifTablambdaChinese
    对每个从根项 $t_0$ 沿已暴露的层遗传可达的项 $u$,$\textsf{TabledWHNF}(u)$(算法~\ref{alg:solvewhnf})返回唯一解赋予 $u$ 的那一层;因此 $t_0$ 的图,即这些项上的层赋值,从 $t_0$ 展开为 $t_0$ 的 Lévy--Longo 树。
    \else
    For every term $u$ hereditarily reachable from a root term $t_0$ through exposed layers, $\textsf{TabledWHNF}(u)$ (Algorithm~\ref{alg:solvewhnf}) returns the layer the unique solution assigns to $u$; the graph of $t_0$, the layer assignment over these terms, therefore unfolds from $t_0$ to the L\'evy--Longo tree of $t_0$.
    \fi
  \end{theorem}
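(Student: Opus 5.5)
The plan is to prove a loop invariant for \textsf{Tabled} instantiated at \textsf{WHNF}, namely that every stored layer is sound, and then read off the unfolding claim from Theorem~\ref{thm:unique-coalgebra}. The invariant is: at every point of every round, and across demands, each entry $\mathit{approx}[u]$ is either $\bot$ or equal to $\outm(u)$, the layer of Lemma~\ref{lem:mono-llt}. Hence $\mathit{approx}[u] \sqsubseteq \outm(u)$ in the flat pointwise order, and the same holds for every $\mathit{cache}$ entry. Because the \textsf{WHNF} layer only ascends from $\bot$ to a single value, the merge $\sqcup$ is the flat join, and the invariant rules out a crash: two non-$\bot$ layers at one term are both $\outm(u)$.

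\textbf{Step 1: the invariant is preserved.} I prove this by induction on the dynamic call structure of \textsf{Resolve}. A $\mathit{cache}$ hit or a $\mathit{stack}$ re-entry returns a stored value, which satisfies the invariant by hypothesis. A fresh call runs \textsf{WHNF}$(u)$, and I split by the shape of $u$.
\begin{itemize}
\item For $\Varsh$ and $\Lamsh$, the result is exactly $\outm(u)$.
\item For $\Appsh\,f\,a$, the recursive $\textsf{Resolve}(f)$ returns $\bot$ or $\outm(f)$ by the induction hypothesis.
  \begin{itemize}
  \item If it returns $\bot$, the layer is $\bot$, which is sound.
  \item If it returns $\Lamsh\,b$, then $u$ head-reduces to $\textsf{Subst}(b,a)$. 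That term shares $u$'s weak-head normal form, so $\outm(u)=\outm(\textsf{Subst}(b,a))$, and the returned value is sound by induction.
  \item If it returns a rigid head, the head of $u$ is rigid, and $\Appsh\,f\,a$ is exactly the representative layer of Lemma~\ref{lem:mono-llt}.
  \end{itemize}
\end{itemize}
Since $\bot\sqcup\outm(u)=\outm(u)$, the merge preserves the invariant.

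\textbf{Step 2: the returned layer is the full layer.} The remaining claim is that on termination $\mathit{approx}[u]=\outm(u)$, not merely something below it. I would show that a round which changes nothing computes a fixpoint of the silent-step operator of Lemma~\ref{lem:mono-llt}, restricted to the terms visited. Each visited term $u$ gets value $\textsf{WHNF}(u)$ evaluated against the stored values of its head reduct, and that value did not change. Given this, suppose $\outm(u)\ne\bot$. Then the head reduction from $u$ reaches a constructor in some $k$ steps, and I induct on $k$. If $k=0$, \textsf{WHNF} exposes the constructor directly. If $k>0$, the head reduct (or the function part $f$) is visited in the final round and has a strictly shorter run to a constructor, so by induction its stored value is its full layer. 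Then \textsf{WHNF}$(u)$ would expose $\outm(u)$, and stability forces $\mathit{approx}[u]$ to equal it.

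\textbf{Main obstacle.} The delicate point is the case where $u$ or its head reduct re-enters through the $\mathit{stack}$ in the final round and therefore reads a stale approximation. I would handle it with the same stability argument: a stale read that differs from the fresh value would set $\mathit{changed}$, so a stable round has no such discrepancy. One subtlety must be checked. The value read on re-entry is $\mathit{approx}$ from before this round's recomputation of that term, so I need the induction on $k$ to use only values fixed in earlier rounds. I expect monotone ascent over the finitely many visited terms to give this.

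\textbf{Step 3: conclude for the graph.} With $\textsf{TabledWHNF}(u)=\outm(u)$ for every hereditarily reachable $u$, the graph is exactly the layer assignment $\outm$ on those terms. Its unfolding $g$ from each node satisfies $g(u)=\sigma(g(u_1),\dots,g(u_k))$ whenever $\outm(u)=\sigma(u_1,\dots,u_k)$, and $g(u)=\bot$ otherwise. So $g$ is a fixpoint of $\Psi$ on the reachable terms, a set that is closed under children. By Theorem~\ref{thm:unique-coalgebra} this fixpoint is the unique one, which is the Lévy--Longo tree. Tabling keys these nodes by structural identity, which is decidable on the reachable terms by Lemma~\ref{lem:finite}, so each node is well defined.
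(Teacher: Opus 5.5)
Your Steps 1 and 3 match the paper's proof. Step 1 is its invariant that every stored layer lies below the least fixpoint, so the conflict branch of $\sqcup$ never fires, and Step 3 is its layer-by-layer unfolding. The gap is in Step 2: your induction measure, the number $k$ of head-reduction steps to a whnf, need not decrease along the dependencies \textsf{WHNF} actually has.

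At an application $u=\Appsh\,f\,a$ whose head is rigid, the head run of $u$ is exactly the head run of $f$, so $k(f)=k(u)$. For $u=(\lambda z.z)\,x\,y$, both $u$ and $f=(\lambda z.z)\,x$ reach a whnf in one step. Yet \textsf{WHNF}$(u)$ exposes $u$ only if the stored layer of $f$ is already $\Varsh\,x$ rather than $\bot$, and a hypothesis on strictly shorter runs does not give you that. The base case has the same defect: $u=x\,y\,z$ has $k=0$, but \textsf{WHNF}$(u)$ does not expose its layer directly; it reads $\textsf{Resolve}(x\,y)$.

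In the $\beta$ case, note also that \textsf{WHNF} visits $f$ and the contractum $\textsf{Subst}(b,a)$ reached after $f$'s whole run, not $u$'s one-step head reduct. Both do have strictly shorter runs, so that case is fine. To repair the argument, induct lexicographically on $(k,|u|)$, since in the rigid case $f$ is a proper subterm of $u$. Alternatively, induct on the Kleene rank of the operator \textsf{WHNF} iterates, which drops strictly to both consulted terms.

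The paper avoids the measure entirely. In a stable round no $\mathit{approx}$ entry is modified, so every read in that round, whether a cache hit or a stack re-entry, returns the final stored value. This already settles your ``main obstacle'' without any appeal to earlier rounds or to monotone ascent. Hence the stored assignment satisfies the one-layer equation at every term visited in the final round, and every term \textsf{WHNF} consults there is itself visited. An induction on $n$ then shows that the $n$-th Kleene iterate from $\bot$ lies below the stored assignment on the visited terms, so the least fixpoint does too. Together with your Step 1 the two coincide; this is the paper's ``a fixpoint lying below the least fixpoint equals it''. The argument treats the function-part and contractum consultations uniformly, which is exactly where your head-step count breaks.
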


  \begin{theorem}[\bilingual{Termination}{终止性}]\label{thm:termination}
    \ifTablambdaChinese
    当从每个自 $t_0$ 沿已暴露的层遗传可达的项出发的静默头部归约都终止(暴露一个构造子或回到一个已访问的项)时,每一次在可达项 $u$ 处的求取 $\textsf{TabledWHNF}(u)$ 都停机;且 $t_0$ 的图有限(从而对它的元语言枚举停机)当且仅当从 $t_0$ 遗传可达的项只有有限多个。这张可能含环的有限图展开为有理的 Lévy--Longo 树。
    \else
    When the silent head reduction from each term hereditarily reachable from $t_0$ terminates, exposing a constructor or returning to a visited term, every demand $\textsf{TabledWHNF}(u)$ at a reachable $u$ halts; and the graph of $t_0$ is finite, so that its metalanguage enumeration halts, iff finitely many terms are hereditarily reachable from $t_0$. The finite graph, possibly cyclic, unfolds to the rational L\'evy--Longo tree.
    \fi
  \end{theorem}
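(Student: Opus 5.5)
Theorem~\ref{thm:termination} has three parts, and I will prove them in this order: first that each single demand halts, then the finiteness equivalence, then rationality of the unfolding.

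\emph{Halting of one demand.} Fix a reachable $u$ and trace the recursion tree of \textsf{Resolve} inside one round of \textsf{Tabled}. \textsf{WHNF} calls \textsf{Resolve} only in two places: on the function part $f$ of an application, and on the contractum $\textsf{Subst}(b,a)$ of a head redex. So every term entered during the round lies on the silent head-reduction run from $u$, or on the run from a head sub-term met along it (the spine). Each term is a finite, acyclic term with decidable identity (Lemma~\ref{lem:finite}), so the $\mathit{stack}$ and $\mathit{cache}$ tests are computable.

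A call that meets a term already in $\mathit{cache}$ or on $\mathit{stack}$ returns at once. The hypothesis says the silent head reduction from $u$ either exposes a constructor or revisits a visited term. I need this to hold for the nested spine runs as well as the outer run. I will read the hypothesis as covering the head run of every term entered, and argue that each nested run is a prefix-segment of the outer weak-head run. That is true because weak-head reduction of $\Appsh(f,a)$ first reduces $f$ along its own head run. So the recursion depth is bounded by the length of that run up to its first exposure or repeat, and one round halts.

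For the number of rounds, I note that for the \textsf{WHNF} instance $\mathit{approx}[v]$ ascends in a flat order: from $\bot$ to a single layer, and never again. Only the finitely many terms visited in a round can change. Each round either raises some $\mathit{approx}[v]$ or stops, so the number of rounds is at most one plus the number of distinct visited terms. Hence $\textsf{TabledWHNF}(u)$ halts, and by Theorem~\ref{thm:correctness} it returns the correct layer.

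\emph{Finiteness equivalence.} The graph of $t_0$ is by definition the layer assignment on the terms hereditarily reachable from $t_0$, with one node per interned term. It is finite exactly when that set of terms is finite. The forward direction is immediate: a finite graph has finitely many nodes, so finitely many terms. For the converse, a finite set of reachable terms means the metalanguage enumeration issues finitely many demands, one per node, each of which halts by the previous part. Cycle detection on node identity then stops the traversal.

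\emph{Rationality.} By Theorem~\ref{thm:correctness}, the unfolding of the finite graph from $t_0$ is the L\'evy--Longo tree of $t_0$. Every subtree of that tree is the unfolding from some node, so there are at most as many distinct subtrees as nodes. By Definition~\ref{def:rational} the tree is rational.

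The main obstacle is the first step: making precise that the nested \textsf{Resolve} calls on function heads stay within the silent head run to which the hypothesis applies. Re-entry on the $\mathit{stack}$ is exactly where a ``return to a visited term'' is detected, so I must check that every revisit the hypothesis promises is one the algorithm sees. I would try an induction on the remaining length of the head run, with the spine position as a secondary measure.
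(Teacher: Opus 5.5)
Your proposal is correct and follows the paper's argument: the demand at $u$ only enters terms drawn from the finite set traced by the silent run of $u$ together with the in-context segments of that run traversed by its spine heads, and the one-time flat ascent of each $\mathit{approx}$ entry bounds the number of rounds. As in the paper, the finiteness equivalence holds because the graph's nodes are by definition the hereditarily reachable terms, and rationality follows because each subtree of the unfolding is the unfolding from one of finitely many nodes, an argument the paper places in the proof of Theorem~\ref{thm:ceiling}.

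The obstacle you flag is lighter than it looks. You do not need to match each revisit of the run to a stack re-entry, nor to bound the depth by the run's length alone, which ignores spine nesting. It suffices that every pushed term lies in that finite set and is pushed at most once per round, which is exactly how the paper concludes.
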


  \begin{proof}[\bilingual{Proof of Theorems~\ref{thm:correctness} and~\ref{thm:termination}}{定理~\ref{thm:correctness} 与~\ref{thm:termination} 的证明}]
    \ifTablambdaChinese
    前提为:$D$ 是一个域(引理~\ref{lem:order-llt}),展开 $\Psi$ 是单调的(引理~\ref{lem:mono-llt})、以 $\outm$ 为其最小不动点,项带有可判定的结构同一性(引理~\ref{lem:finite}),且不动点唯一(定理~\ref{thm:unique-coalgebra})。

    \emph{一次求取算出最小不动点。} \textsf{Tabled} 的每一轮经由 \textsf{Resolve} 对本轮访问到的每个项重算一次 $\outm$,读取 $\outm$ 所咨询各项的运行中逼近,并经 $\sqcup$ 向上并入;重入返回当前逼近,首轮为 $\bot$。对 \textsf{WHNF} 这个实例,一个项的层只会攀升:在头部的 whnf 被暴露之前它是 $\bot$,而它随后暴露的构造子由那个 whnf 决定,后者一旦暴露就不再改变;因此每个项的逼近至多一次从 $\bot$ 升到一个值,$\sqcup$ 的冲突分支不会触发,并且逼近所取的每个值都是 $\outm$ 从低于最小不动点的值算出的,由单调性(引理~\ref{lem:mono-llt})仍低于最小不动点。当某一轮不再改变任何东西时循环结束,此时存下的赋值在每个被访问的项处满足一层方程,且 $\outm$ 所咨询的项都已被访问;一个低于最小不动点的不动点就等于它,故返回的层是唯一解的那一层(定理~\ref{thm:unique-coalgebra})。这些正是 tabled 求值的记忆化与环折叠~\cite{tamaki1986-tabled-resolution, chen1996-tabled-evaluation-delaying}。

    \emph{可靠性(定理~\ref{thm:correctness})。} 由上,在每个沿已暴露的层从 $t_0$ 遗传可达的项处,求取都返回唯一解赋予该项的那一层;因此逐层展开 $t_0$ 的图,即重现 $t_0$ 的 Lévy--Longo 树。

    \emph{终止性(定理~\ref{thm:termination})。} 设从每个遗传可达的项出发的静默头部归约都终止:每条静默运行要么暴露一个构造子,要么在一条确定性路径上重访某个项并返回 $\bot$,两种情形都在有限步内停下。于是在可达项 $u$ 处的一次求取停机:该求取所访问的项都落在 $u$ 的静默运行所经过的有限集合内(所咨询头部的头链是这条运行的片段,收缩后的项延续这条运行),每一轮访问其中有限多个,而每个项的层至多攀升一次,故若干轮后不再有改变,循环退出。若从 $t_0$ 遗传可达的项只有有限多个,则 $t_0$ 的图有限,其枚举(每个节点一次停机的求取)也停机。反过来,图的节点按定义就是从 $t_0$ 遗传可达的项,故图有限即可达项有限。
    \else
    The prerequisites are: $D$ is a domain (Lemma~\ref{lem:order-llt}), the unfolding $\Psi$ is monotone (Lemma~\ref{lem:mono-llt}) with $\outm$ its least fixpoint, the terms carry a decidable structural identity (Lemma~\ref{lem:finite}), and the fixpoint is unique (Theorem~\ref{thm:unique-coalgebra}).

    \emph{A demand computes the least fixpoint.} Each round of \textsf{Tabled}, through \textsf{Resolve}, recomputes $\outm$ once at every term it visits, reading the running approximations of the terms $\outm$ consults and merging the result upward via $\sqcup$; a re-entry returns the current approximation, $\bot$ on the first round. For the \textsf{WHNF} instance a term's layer only ascends: it is $\bot$ until the head's whnf is exposed, and the constructor it then exposes is determined by that whnf, which never changes once exposed; so each term's approximation moves from $\bot$ to one value at most once, the conflict branch of $\sqcup$ never fires, and every value the approximation takes is computed by $\outm$ from values below the least fixpoint, hence lies below the least fixpoint, by monotonicity (Lemma~\ref{lem:mono-llt}). When a round changes nothing the loop ends, and the stored assignment satisfies the one-layer equation at every visited term, with every term $\outm$ consults itself visited; a fixpoint lying below the least fixpoint equals it, so the returned layer is the layer of the unique solution (Theorem~\ref{thm:unique-coalgebra}). These are the memoization and cycle-folding of tabled evaluation~\cite{tamaki1986-tabled-resolution, chen1996-tabled-evaluation-delaying}.

    \emph{Soundness (Theorem~\ref{thm:correctness}).} By the above, the demand at each term hereditarily reachable from $t_0$ through exposed layers returns the layer the unique solution assigns to that term; unfolding the graph of $t_0$ layer by layer therefore reproduces the L\'evy--Longo tree of $t_0$.

    \emph{Termination (Theorem~\ref{thm:termination}).} Suppose the silent head reduction from each hereditarily reachable term terminates: each silent run either exposes a constructor or revisits a term on a deterministic path and returns $\bot$, in either case halting in finitely many steps. A demand at a reachable $u$ then halts: the terms it visits lie in the finite set the silent run from $u$ traverses (the head chain of a consulted sub-term is a segment of that run, and a contractum continues it), each round visits finitely many of them, and each term's layer ascends at most once, so after finitely many rounds nothing changes and the loop exits. If finitely many terms are hereditarily reachable from $t_0$, the graph of $t_0$ is finite, and its enumeration, one halting demand per node, halts. Conversely, the graph's nodes are by definition the terms hereditarily reachable from $t_0$, so a finite graph means finitely many of them.
    \fi
  \end{proof}

  \begin{theorem}[\bilingual{The rational ceiling}{有理性上限}]\label{thm:ceiling}
    \ifTablambdaChinese
    一张其展开为 Lévy--Longo 树的有限图,仅当那棵树有理时才存在。求值器在结构同一性上折叠项,在有理树的一个严格子片段上给出这样一张图,即那些遗传可达项有限多的(定理~\ref{thm:termination});折叠树相等的项能把到整个有理片段的残余缺口补上,但树相等是行为性的,在 $\lambda$-项上不可判定。
    \else
    A finite graph whose unfolding is the L\'evy--Longo tree exists only when that tree is rational. The evaluator, folding terms on structural identity, yields such a graph on a strict sub-fragment of the rational trees, those with finitely many hereditarily reachable terms (Theorem~\ref{thm:termination}); folding terms with equal trees would close the residual gap to the whole rational fragment, but tree equality is behavioral and undecidable on $\lambda$-terms.
    \fi
  \end{theorem}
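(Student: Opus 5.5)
The statement makes four claims, and I would prove them in order: necessity of rationality, sufficiency on the sub-fragment, strictness of the sub-fragment, and undecidability of the residual step.

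\emph{Necessity of rationality.} Suppose a finite graph $G$ unfolds to the L\'evy--Longo tree $T$. Every subtree of $T$ is the unfolding of $G$ from some vertex, since subtrees sit at positions and each position is reached along a path in $G$. Hence $T$ has at most as many distinct subtrees as $G$ has vertices, so $T$ is rational by Definition~\ref{def:rational}. This is the direction "no procedure can return a finite graph beyond the fragment".

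\emph{Sufficiency on the sub-fragment.} This step invokes Theorem~\ref{thm:termination} together with Theorem~\ref{thm:correctness}. When finitely many terms are hereditarily reachable, the evaluator's graph is finite and unfolds to $T$. It follows that this fragment lies inside the rational trees.

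\emph{Strictness.} I would exhibit a term whose tree is rational but which reaches infinitely many structurally distinct terms. The footnote's example serves: $Y\,F\,\underline{0}$ with $F=\lambda s.\lambda n.\,\mathtt{cons}\,0\,(s\,(\mathtt{succ}\,n))$. Its hereditarily reachable terms include $Y\,F\,\underline{n}$, or rather the reduced forms of $s\,(\mathtt{succ}\,n)$ after unfolding, which carry growing numeral arguments. These are pairwise distinct because the numeral subterm grows: an unreduced $\mathtt{succ}$ nests one level deeper at each step, so term size strictly increases. Yet every one of them has the L\'evy--Longo tree of the stream of zeros, which is rational, so the fragment is strict.

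\emph{Undecidability of the residual step.} Folding terms with equal trees would close the gap, because the quotient of reachable terms by tree equality has at most as many classes as $T$ has distinct subtrees, hence finitely many. To show tree equality is undecidable, I would reduce from the halting problem, or from undecidability of having a weak-head normal form. For a term $M$, compare $M$ against $\Omega$. Their trees are equal iff $M$ has no whnf, because $\Omega$'s tree is $\bot$. Unsolvability is undecidable (Scott--Curry / Barendregt), so tree equality is undecidable.

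The main obstacle is the strictness example. I need the reachable terms to be genuinely pairwise distinct as de Bruijn terms under the exact layer choices of Algorithm~\ref{alg:whnfmap}, which exposes applications themselves rather than their normal forms. I would first try tracking the argument subterm explicitly through two rounds of head reduction, and check that it grows monotonically in size.
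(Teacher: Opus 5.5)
Your proposal is correct and follows the paper's proof essentially step for step: a finite graph unfolds to a tree with at most one distinct subtree per vertex, Theorems~\ref{thm:termination} and~\ref{thm:correctness} give the finite graph on inputs with finitely many hereditarily reachable terms, $Y\,F\,\underline{0}$ witnesses strictness, and comparing a term's tree with $\Omega$'s reduces the existence of a weak-head normal form to tree equality. You are slightly more careful than the paper in two places: you note that the reachable states are really $(\lambda x.\,F\,(x\,x))\,(\lambda x.\,F\,(x\,x))$ applied to ever deeper nests of unreduced $\mathtt{succ}$ rather than literal $Y\,F\,\underline{k}$, and you justify why folding by tree equality leaves only finitely many classes; the one nit is that the undecidable property you need is having no whnf, not unsolvability (no head normal form), though the Scott--Curry theorem covers it equally well, since that set is nontrivial and closed under $\beta$-conversion.
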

  \begin{proof}
    \ifTablambdaChinese
    一张有限图的展开每个节点至多对应一个不同的子树,因而只有有限多个,故它是有理的。反过来,一棵有理树是这样一张有限图的展开:其节点是它的各个不同子树~\cite{courcelle1983-infinite-trees},因此该树存在一张有限图当且仅当它有理。恰好在遗传可达项有限多的输入上,图是有限的(定理~\ref{thm:termination}),这些输入的树全都有理,而这张图展开为输入的树(定理~\ref{thm:correctness})。仍有一个缺口:取 $F = \lambda s.\lambda n.\,\mathtt{cons}\,0\,(s\,(\mathtt{succ}\,n))$ 的 $Y\,F\,\underline{0}$,其状态 $Y\,F\,\underline{k}$($k = 0, 1, 2, \dots$)两两相异,却全都展开为那一条有理的零的流,因此在一个有理的输入上图是无限的,对它的枚举永远不会结束。折叠树相等的项会补上这个缺口,但那种同一性是行为相等,在 $\lambda$-项上不可判定:判定一个项的树是否等于 $\Omega$ 的树,就等于判定该项是否有弱头范式。
    \else
    The unfolding of a finite graph has at most one distinct subtree per node, hence finitely many, so it is rational. Conversely a rational tree is the unfolding of the finite graph whose nodes are its distinct subtrees~\cite{courcelle1983-infinite-trees}, so a finite graph for the tree exists iff the tree is rational. The graph is finite exactly on inputs with finitely many hereditarily reachable terms (Theorem~\ref{thm:termination}), all of whose trees are rational, and it unfolds to the input's tree (Theorem~\ref{thm:correctness}). A gap remains: $Y\,F\,\underline{0}$ with $F = \lambda s.\lambda n.\,\mathtt{cons}\,0\,(s\,(\mathtt{succ}\,n))$ has states $Y\,F\,\underline{k}$ for $k = 0, 1, 2, \dots$, pairwise distinct, yet all unfold to the one rational stream of zeros, so on a rational input the graph is infinite and its enumeration never finishes. Folding terms with equal trees would close this gap, but that identity is behavioral equality, undecidable on $\lambda$-terms: deciding whether a term's tree equals $\Omega$'s would decide whether the term has a weak head normal form.
    \fi
  \end{proof}

  \ifTablambdaChinese
  纯 $\lambda$-演算的弱头归约一步,就是一个呈现一层映射 $\outm$ 的归约函数:变量或抽象暴露一层,而应用要么把其头部 redex(函数脊之首的抽象作用于下一个参数)作为静默步收缩,要么在头部是变量时暴露一层。它的解是该项的 Lévy--Longo 树~\cite{longo1983-set-theoretical-models},即弱头归约的标准树语义;另一种归约函数固定另一套树语义,而每一套\emph{就是}其归约所定义的含义。在这里,一个裸的归约只有解、没有先在的含义可保;但一门编程语言对其项载有标准指称,当该指称就是所选归约展开成的那棵树时(Lévy--Longo 树之于弱头归约正是如此),定理~\ref{thm:correctness} 的保解保证恰恰就是:这个指称语义未被改变。该实例在第~\ref{sec:bridge} 节展开。
  \else
  One step of weak-head reduction in the pure $\lambda$-calculus is a reduction function presenting a one-layer map $\outm$: a variable or an abstraction exposes a layer, and an application either contracts its head redex, the abstraction at the head of its function spine applied to the next argument, as a silent step, or exposes a layer when that head is a variable. Its solution is the L\'evy--Longo tree~\cite{longo1983-set-theoretical-models} of the term, the standard tree semantics of weak-head reduction; a different reduction function fixes a different tree semantics, and each \emph{is} the meaning its reduction defines. Here a bare reduction has only a solution, no prior meaning to preserve, but a programming language carries a standard denotation of its terms; when that denotation is the tree the chosen reduction unfolds to, as the L\'evy--Longo tree is for weak-head reduction, the solution-preserving guarantee of Theorem~\ref{thm:correctness} is exactly that this denotational semantics is unchanged. The instance is developed in Section~\ref{sec:bridge}.
  \fi

  \section{\bilingual{The Bootstrap Compiler}{自举编译器}}\label{app:compiler}
  \ifTablambdaChinese
  我们的自举编译器本身就是一个纯 $\lambda$-项,解释器用解释项时所用的那同一套 tabled 归约在被引用(quoted)的源上求解它,于是程序分析被表达为求值。它把被引用的源程序(即表示为 Scott 编码数据的源程序)的每个子项,送到一棵 Python 抽象语法树中覆于该子项自身子项之上的一个节点,而它的解就是编译后的程序。编译策略是去函数化(defunctionalization)~\cite{reynolds1972-definitional-interpreters-defunctionalization},即把每个高阶值替换为一个具名的一阶数据构造子,并一致地应用于每个源子项:一个应用编译为一个 \texttt{Thunk},即一个携带其被调者与参数的挂起 redex;一个抽象编译为一个内化的 dataclass,其字段是它所捕获的自由变量,其 call 方法在一个参数上运行时,通过对编译后的函数体求值来执行 $\beta$-归约;一个变量编译为相应的参数或捕获字段。
  \else
  Our bootstrap compiler is itself a pure $\lambda$-term, which the interpreter solves on the quoted source by the same tabled reduction it runs to interpret a term, so program analysis is expressed as evaluation. It sends each sub-term of the quoted source program, represented as a Scott-encoded datum, to a node of a Python abstract syntax tree over that sub-term's own sub-terms, and its solution is the compiled program. The compilation strategy is defunctionalization~\cite{reynolds1972-definitional-interpreters-defunctionalization}, the replacement of each higher-order value by a named first-order data constructor, applied uniformly to every source sub-term: an application compiles to a \texttt{Thunk}, a suspended redex carrying its callee and argument; an abstraction compiles to an interned dataclass whose fields are the free variables it captures and whose call method, run on an argument, performs the $\beta$-reduction by evaluating the compiled body; a variable compiles to the corresponding argument or capture field.
  \fi

  \paragraph{\bilingual{The runtime and content-addressed closures}{运行时与内容寻址的闭包}}
  \ifTablambdaChinese
  运行时是极小的:一个 \texttt{Closure} 抽象基类、一个 \texttt{Thunk} 载体,以及一个内化算子;编译器发出的每个闭包类都是 \texttt{Closure} 的子类。一个 \texttt{Thunk} 的弱头范式由解释器用于静默归约的那同一个记忆化最小不动点机制计算,因此 \texttt{Thunk} 内化镜像了项内化:一个再次出现的 redex 是同一个内化 \texttt{Thunk},只归约一次。环图折叠与把非生产性循环判定为 $\bot$,是对解释器证明的(附录~\ref{app:faithful});对编译后的程序,它们没有被重新证明,而是被测试检验:下文报告的断言核对编译执行在所测试的每个输入上产生与解释执行相同的有限图。闭包类是内容可寻址的:每个类以其编译后函数体的散列值命名,其捕获字段按位置而非按绑定深度编号,因此两个函数体形状相同、但在不同深度捕获变量的源抽象,编译为同一个类。类名携带其规范化后的类体的 SHA-256 摘要的 64 位;必须不碰撞的是互不相同的编译后类体——在自举最大的工件里约五千个——而不是表~\ref{tab:defun} 里的内化对象(编译后的自举运行中约一百万个),后者是按类与捕获为键的实例;即使类体多至其十倍,64 位名字发生生日碰撞的概率仍低于 $10^{-10}$,我们据此以散列相等判定编译后函数体相等。这就是第~\ref{sec:design-space} 节那个更粗、却仍可判定的同一性:它折叠解释器逐节点内化所分开的状态。
  \else
  The runtime is minimal: a \texttt{Closure} abstract base class, a \texttt{Thunk} carrier, and an interning operator, every closure class the compiler emits being a subclass of \texttt{Closure}. The weak-head normal form of a \texttt{Thunk} is computed by the same memoized least-fixpoint mechanism the interpreter uses for silent reduction, so \texttt{Thunk} interning mirrors term interning: a recurring redex is one interned \texttt{Thunk}, reduced once. The cyclic-graph folding and the decision of non-productive loops as $\bot$ are proved for the interpreter (Appendix~\ref{app:faithful}); for the compiled program they are not re-proved but tested, by the assertion, reported below, that the compiled run produces the same finite graph as the interpreted run on every input tested. Closure classes are content-addressable: each class is named by a hash of its compiled body, with its capture fields numbered by position rather than by binding depth, so two source abstractions whose bodies have the same shape but capture variables at different depths compile to one class. The name carries 64 bits of a SHA-256 digest of the class's canonicalized body, and what must not collide are the distinct compiled bodies, about five thousand in the bootstrap's largest artifact, not the interned objects of Table~\ref{tab:defun}, a million of them in the compiled bootstrap run, which are instances keyed by class and captures; the birthday probability of a collision among even ten times that many 64-bit names stays below $10^{-10}$, on which basis equal hashes decide equal compiled bodies. This is the coarser, still decidable, identity of Section~\ref{sec:design-space}: it folds states the interpreter's per-node interning keeps apart.
  \fi

  \paragraph{\bilingual{The self-compilation bootstrap}{自编译自举}}
  \ifTablambdaChinese
  编译器的第一个基准是把这套构造应用于它自身。编译器是一个纯 $\lambda$-项;解释器用同一套 tabling 在它自己被引用的源上求解它;输出是一个 Python 程序。一个纯 $\lambda$-项把一张图(源项)变换成另一张图(编译后的抽象语法树),既不借助附加的递归构造、也不观察指针同一性,这正是图变换栖身于演算之内的证据。表~\ref{tab:defun} 连同若干更轻的项一并报告这次自举,每一项都在 CPython 3.11 上以解释执行对照编译执行。编译形式在轻量项上大约快两倍半到近八倍,在自举上约快五倍,且自举的峰值内存约低三倍;而在两个最小的编译任务上,编译形式的峰值内存反而略高(表中内存比 0.83 与 0.89),已加载的运行时与闭包类的固定开销在该规模下尚未被摊销。这些时间与内存数字是单个解释器的一次快照、会随解释器而变,而制表对象的计数是确定的:编译形式在自举上物化出同样少五倍的制表对象。
  \else
  The compiler's first benchmark is the construction applied to itself. The compiler is a pure $\lambda$-term; the interpreter solves it on its own quoted source by the same tabling; the output is a Python program. That a pure $\lambda$-term transforms one graph, the source term, into another, the compiled abstract syntax tree, without an added recursion construct and without observing pointer identity, is the evidence that graph transformation lives inside the calculus. Table~\ref{tab:defun} reports this bootstrap together with lighter terms, each run interpreted against compiled on CPython 3.11. The compiled form is roughly two and a half to nearly eight times faster on the light terms and about five times faster on the bootstrap, with lower peak memory on the bootstrap by a factor of about three; on the two smallest compilations its peak memory is instead slightly higher (memory ratios 0.83 and 0.89 in the table), the fixed cost of the loaded runtime and closure classes not yet amortized at that scale. These time and memory figures are a snapshot of a single interpreter and shift across interpreters, whereas the tabled-object counts are deterministic: the compiled form materializes five times fewer tabled objects on the bootstrap.
  \fi

  \begin{table*}
    \caption{\bilingual{Interpreted versus compiled execution. Each row runs one application interpreted (interned
        term nodes) against compiled (interned \texttt{Thunk}s and closures). \emph{Speedup}, \emph{memory
    ratio}, and \emph{tabled ratio} compare interpreted to compiled; a ratio above one means the compiled form wins. Time and peak memory are a single measured snapshot on CPython 3.11, and the compiled column runs the program already compiled, so compilation time is excluded. The last row is the compiler compiling its own source.}{解释执行对比编译执行。每一行把一次应用以解释执行(内化的项节点)对照编译执行(内化的 \texttt{Thunk} 与闭包)。\emph{加速比}、\emph{内存比}、\emph{制表比}比较解释执行与编译执行;比值大于一意味着编译形式更优。时间与峰值内存是在 CPython 3.11 上单次测量的快照,而编译那一列运行的是已经编译好的程序,故编译时间被排除在外。最后一行是编译器编译它自己的源代码。}}
    \label{tab:defun}
    \centering\footnotesize
    \setlength{\tabcolsep}{4pt}
    \resizebox{\textwidth}{!}{
\begin{tabular}{lrrrrrrrrr}
\hline
App / input & \shortstack[r]{Interp\\time} & \shortstack[r]{Defun\\time} & Speedup & \shortstack[r]{Interp\\mem} & \shortstack[r]{Defun\\mem} & \shortstack[r]{Mem\\ratio} & \shortstack[r]{Interp\\tabled} & \shortstack[r]{Defun\\tabled} & \shortstack[r]{Tabled\\ratio} \\
\hline
  edit-distance kitten/sitting & 0.158 & 0.064 & 2.47 & 25 & 25 & 1.00 & 10636 & 6443 & 1.65 \\
  edit-distance intention/execution & 0.241 & 0.090 & 2.67 & 26 & 26 & 1.01 & 14747 & 8986 & 1.64 \\
  DEFUN on identity & 0.147 & 0.019 & 7.68 & 25 & 30 & 0.83 & 11754 & 1394 & 8.43 \\
  DEFUN on S & 0.273 & 0.043 & 6.40 & 28 & 31 & 0.89 & 22922 & 3919 & 5.85 \\
  DEFUN on DEFUN (bootstrap) & 61.168 & 12.502 & 4.89 & 1093 & 347 & 3.15 & 5003844 & 981693 & 5.10 \\
\hline
\end{tabular}
}
  \end{table*}

  \ifTablambdaChinese
  制表对象那几列统计每种形式所物化(materialize)的内化对象数,即整个运行触及的全部节点,库项与位级算术状态都在内,因此它们落实的是第~\ref{sec:application-sharing} 节代价界中的常数与低阶项,而非其 $(m{+}1)(n{+}1)$ 个子问题状态;编译形式物化得更少,约少一倍半到八倍多,在自举上约少五倍,这是更粗同一性带来的更高命中率。在解释执行的程序中,每个子项都是一个不同的内化节点,因此两个函数体形状相同、却在不同 de Bruijn 深度绑定其捕获的抽象是不同的节点。在编译执行的程序中,每个抽象是一个类的实例,该类以其函数体在位置式捕获字段上的散列值命名,因此那两个抽象共享同一个类,而当它们还闭合于相同的值时,其实例也重合。编译表示把这些结构等价的抽象坍缩为一个对象,而解释器的逐节点内化做不到;\texttt{Thunk} 内化又加重了这一效果,因为一个 \texttt{Thunk} 以其被调者与参数的同一性为键,而这两者已经被粗化。在所测试的每个输入上,编译执行产生与解释执行相同的有限图,这由随附的测试断言。
  \else
  The tabled-object columns count the interned objects each form materializes, the whole run's nodes, library terms and bit-level arithmetic states included, so they realize the constants and lower-order terms of the cost bound of Section~\ref{sec:application-sharing} rather than its $(m{+}1)(n{+}1)$ subproblem states; the compiled form materializes fewer, between about one and a half and more than eight times fewer, about five times fewer on the bootstrap, the higher hit rate of the coarser identity. In the interpreted program every sub-term is a distinct interned node, so two abstractions whose bodies have the same shape but bind their captures at different de Bruijn depths are distinct nodes. In the compiled program each abstraction is an instance of a class named by the hash of its body over positional capture fields, so those two abstractions share one class, and their instances coincide whenever they also close over the same values. The compiled representation collapses these structurally equivalent abstractions to one object, which the interpreter's per-node interning does not, and \texttt{Thunk} interning compounds the effect, since a \texttt{Thunk} is keyed by the identity of its callee and argument and those have already coarsened. The compiled run produces the same finite graph as the interpreted run on every input tested, asserted by the accompanying test.
  \fi

  \paragraph{\bilingual{What the calculus cannot read back}{演算无法读回的东西}}
  \ifTablambdaChinese
  自举表明,产生并变换一张环图是演算内部之事。与之互补的极限是从一张环图中把显式共享读回出来。求解器通过在求解器一侧识别出某个状态曾被见过,把一个递推折叠为一条回边,但一个项做不出这种识别:检测两个子项是否同一个节点是一次指针同一性测试,正是演算所禁止的那种非纯。这一不可能性按构造成立,而非有待证明:文法没有共享构造子、也没有同一性原语,项就是一棵有限树,而本文的每个语义函数($\outm$、$\Psi$、Lévy--Longo 树)都是项的函数,因此一个项算出的任何东西都不可能依赖元语言如何共享其表示;能依赖的只有成本。因此一个纯项可以持有并遍历一张有限的环图,却无法把它序列化成一个带显式共享的表示,即一个 \texttt{letrec} 或一张具名节点的列表,因为它分辨不出一个共享节点与两份相等副本。编译器之所以能避开这个极限,只是因为它消费的是被引用的源,即一棵它按结构读取的有限树,而绝非一张图的指针结构。
  \else
  The bootstrap shows that producing and transforming a cyclic graph is internal to the calculus. The complementary limit is reading explicit sharing back out of one. The solver folds a recurrence into a back edge by recognizing, on the solver's side, that a state has been seen before, but a term cannot make that recognition: detecting that two sub-terms are the same node is a pointer-identity test, the very impurity the calculus forbids. The impossibility holds by construction rather than by proof: the grammar has no sharing constructor and no identity primitive, a term is a finite tree, and every semantic function of this paper, $\outm$, $\Psi$, the L\'evy--Longo tree, is a function of the term, so nothing a term computes can depend on how the metalanguage shares its representation; only cost can. So a pure term can hold and traverse a finite cyclic graph yet cannot serialize it into a representation with explicit sharing, a $\mathtt{letrec}$ or a list of named nodes, because it cannot tell a shared node from two equal copies. The compiler escapes this limit only because it consumes the quoted source, a finite tree it reads structurally, never the pointer structure of a graph.
  \fi

  \paragraph{\bilingual{Compiled output for canonical terms}{若干典型项的编译输出}}\label{app:compiler-examples}
  \ifTablambdaChinese
  下面的列表是编译器对若干典型项的逐字输出,由机器生成、未经编辑;与上面的表~\ref{tab:defun} 一样,它们可由\anon[补充材料]{随附的代码~\cite{yang2026-tablambda}}重现。一个应用编译为一个 \texttt{Thunk},一个抽象编译为一个内化的类(其字段是它的捕获、其 call 方法是编译后的函数体),一个变量编译为一个参数或一个捕获字段。每个类以其函数体在位置式捕获字段上的散列值命名,因此一个在不同绑定深度被复用的项产生同一个类;在「不同深度的捕获」那个列表里,两个内层抽象从相差一层的深度捕获同一个变量,编译成同一个类,两处调用点都指名它。
  \else
  The listings below are the compiler's verbatim output for a few canonical terms, machine-generated and unedited; like Table~\ref{tab:defun} above, they can be reproduced from\anon[ the supplementary material]{the accompanying code~\cite{yang2026-tablambda}}. An application compiles to a \texttt{Thunk}, an abstraction to an interned class whose fields are its captures and whose call method is the compiled body, and a variable to an argument or a capture field. Each class is named by a hash of its body over positional capture fields, so a term reused at different binding depths produces one class; in the depth-differing-captures listing the two inner abstractions, capturing the same variable from depths one apart, compile to the one class both call sites name.
  \fi


\medskip\noindent\textbf{Identity.}\quad $\lambda x.\, x$
\begin{lstlisting}
@interned
class vg_5790a50754fc667f(Closure):

    def __call__(self, a):
        return a
compiled = vg_5790a50754fc667f()
\end{lstlisting}

\medskip\noindent\textbf{The constant combinator $K$.}\quad $\lambda x.\, \lambda y.\, x$
\begin{lstlisting}
@interned
class vg_267c5a013e0be552(Closure):
    cap_0: Closure

    def __call__(self, a):
        return self.cap_0

@interned
class vg_2f7005dfc64f22d3(Closure):

    def __call__(self, a):
        return vg_267c5a013e0be552(a)
compiled = vg_2f7005dfc64f22d3()
\end{lstlisting}

\medskip\noindent\textbf{Church numeral $2$.}\quad $\lambda x.\, \lambda y.\, x\, (x\, y)$
\begin{lstlisting}
@interned
class vg_1393e1124049719f(Closure):
    cap_0: Closure

    def __call__(self, a):
        return Thunk(self.cap_0, Thunk(self.cap_0, a))

@interned
class vg_1ca9b7d68a0fb084(Closure):

    def __call__(self, a):
        return vg_1393e1124049719f(a)
compiled = vg_1ca9b7d68a0fb084()
\end{lstlisting}

\medskip\noindent\textbf{Successor.}\quad $\lambda x.\, \lambda y.\, \lambda z.\, y\, (x\, y\, z)$
\begin{lstlisting}
@interned
class vg_504ccc190e93e6eb(Closure):
    cap_0: Closure

    def __call__(self, a):
        return vg_f03ad1e4ac91d757(a, self.cap_0)

@interned
class vg_956ac5818c92855f(Closure):

    def __call__(self, a):
        return vg_504ccc190e93e6eb(a)

@interned
class vg_f03ad1e4ac91d757(Closure):
    cap_0: Closure
    cap_1: Closure

    def __call__(self, a):
        return Thunk(self.cap_0, Thunk(Thunk(self.cap_1, self.cap_0), a))
compiled = vg_956ac5818c92855f()
\end{lstlisting}

\medskip\noindent\textbf{Addition.}\quad $\lambda x.\, \lambda y.\, \lambda z.\, \lambda w.\, x\, z\, (y\, z\, w)$
\begin{lstlisting}
@interned
class vg_09245224e9a7f871(Closure):

    def __call__(self, a):
        return vg_96d17350f4830bfe(a)

@interned
class vg_7dcf3312a46a289a(Closure):
    cap_0: Closure
    cap_1: Closure

    def __call__(self, a):
        return vg_d7831256c99dd51a(self.cap_0, a, self.cap_1)

@interned
class vg_96d17350f4830bfe(Closure):
    cap_0: Closure

    def __call__(self, a):
        return vg_7dcf3312a46a289a(self.cap_0, a)

@interned
class vg_d7831256c99dd51a(Closure):
    cap_0: Closure
    cap_1: Closure
    cap_2: Closure

    def __call__(self, a):
        return Thunk(Thunk(self.cap_0, self.cap_1), Thunk(Thunk(self.cap_2, self.cap_1), a))
compiled = vg_09245224e9a7f871()
\end{lstlisting}

\medskip\noindent\textbf{Multiplication.}\quad $\lambda x.\, \lambda y.\, \lambda z.\, x\, (y\, z)$
\begin{lstlisting}
@interned
class vg_68c11b23b73fbba6(Closure):
    cap_0: Closure

    def __call__(self, a):
        return vg_770fab1dc9b2bce6(self.cap_0, a)

@interned
class vg_770fab1dc9b2bce6(Closure):
    cap_0: Closure
    cap_1: Closure

    def __call__(self, a):
        return Thunk(self.cap_0, Thunk(self.cap_1, a))

@interned
class vg_f255f0e2d2822ac2(Closure):

    def __call__(self, a):
        return vg_68c11b23b73fbba6(a)
compiled = vg_f255f0e2d2822ac2()
\end{lstlisting}

\medskip\noindent\textbf{Zero test.}\quad $\lambda x.\, x\, (\lambda y.\, \lambda z.\, \lambda w.\, w)\, (\lambda y.\, \lambda z.\, y)$
\begin{lstlisting}
@interned
class vg_267c5a013e0be552(Closure):
    cap_0: Closure

    def __call__(self, a):
        return self.cap_0

@interned
class vg_2f7005dfc64f22d3(Closure):

    def __call__(self, a):
        return vg_267c5a013e0be552(a)

@interned
class vg_5790a50754fc667f(Closure):

    def __call__(self, a):
        return a

@interned
class vg_6e6aafd047ea9acf(Closure):

    def __call__(self, a):
        return vg_9dd83a0b38353a6c()

@interned
class vg_9dd83a0b38353a6c(Closure):

    def __call__(self, a):
        return vg_5790a50754fc667f()

@interned
class vg_bb1e0c653b7009af(Closure):

    def __call__(self, a):
        return Thunk(Thunk(a, vg_6e6aafd047ea9acf()), vg_2f7005dfc64f22d3())
compiled = vg_bb1e0c653b7009af()
\end{lstlisting}

\medskip\noindent\textbf{Self-application.}\quad $\lambda x.\, x\, x$
\begin{lstlisting}
@interned
class vg_1b49bc69f60c1e7c(Closure):

    def __call__(self, a):
        return Thunk(a, a)
compiled = vg_1b49bc69f60c1e7c()
\end{lstlisting}

\medskip\noindent\textbf{Depth-differing captures.}\quad $\lambda x.\, (\lambda y.\, x)\, (\lambda y.\, \lambda z.\, x)$
\begin{lstlisting}
@interned
class vg_267c5a013e0be552(Closure):
    cap_0: Closure

    def __call__(self, a):
        return self.cap_0

@interned
class vg_b96f08cff84409ba(Closure):
    cap_0: Closure

    def __call__(self, a):
        return vg_267c5a013e0be552(self.cap_0)

@interned
class vg_e2357023df13e3d1(Closure):

    def __call__(self, a):
        return Thunk(vg_267c5a013e0be552(a), vg_b96f08cff84409ba(a))
compiled = vg_e2357023df13e3d1()
\end{lstlisting}

  \section{\bilingual{Graph Computations as Pure Terms}{作为纯项的图计算}}\label{app:dsl}
  \ifTablambdaChinese
  下面每个例子都是一个纯 $\lambda$-项;该项到达有限多个状态,每个状态都经由一个会终止的归约到达,因此解释器把它渲染为一张有限图;每个例子都作为一个测试再现\anon[~(补充材料)]{~\cite{yang2026-tablambda}}:测试用解释器求值该项,并断言其结果,即读回的值(从图上读出的布尔值或数码)或共享/成环节点的内化同一性,而非生产性的结果就是求解器的 $\bot$ 自身。布尔值是 Church 编码的,$\mathtt{T} = \lambda a.\lambda b.\,a$ 与 $\mathtt{F} = \lambda a.\lambda b.\,b$,并有 $\mathtt{or} = \lambda p.\lambda q.\,p\,\mathtt{T}\,q$ 与 $\mathtt{and} = \lambda p.\lambda q.\,p\,q\,\mathtt{F}$,而 $\mathtt{cons}$ 与 $Y$ 同第~\ref{sec:bridge} 节。每个例子都局限于一个可处理的核心:正则词法分析而非一般的上下文无关解析,有理树合一而非一般的最一般合一子,有限状态模型检验而非无限状态。
  \else
  Each example below is a pure $\lambda$-term that reaches finitely many states, each by a terminating reduction, so the interpreter renders it as a finite graph; each is reproduced as a test\anon[~(supplementary material)]{~\cite{yang2026-tablambda}}: the test evaluates the term with the interpreter and asserts its outcome, the read-back value (a Boolean or numeral read off the graph) or the interned identity of a shared or cyclic node, an unproductive outcome being the solver's $\bot$ itself. The Booleans are Church, $\mathtt{T} = \lambda a.\lambda b.\,a$ and $\mathtt{F} = \lambda a.\lambda b.\,b$, with $\mathtt{or} = \lambda p.\lambda q.\,p\,\mathtt{T}\,q$ and $\mathtt{and} = \lambda p.\lambda q.\,p\,q\,\mathtt{F}$, and $\mathtt{cons}$ and $Y$ are as in Section~\ref{sec:bridge}. Each example is confined to a tractable core: regular lexing but not general context-free parsing, rational-tree unification but not a general most-general unifier, finite-state model checking but not infinite-state.
  \fi

  \paragraph{\bilingual{Mapping a cyclic list}{映射一个环状列表}}
  \ifTablambdaChinese 普通的 $\mathtt{map}$,其中没有任何具备环感知的东西, \else The ordinary $\mathtt{map}$, with nothing cycle-aware in it, \fi
  \[
    \mathtt{map}\,f = Y\,\big(\lambda\mathit{self}.\,\lambda\mathit{lst}.\;
    \mathit{lst}\,(\lambda h.\lambda t.\,\mathtt{cons}\,(f\,h)\,(\mathit{self}\,t))\;\mathtt{nil}\big),
  \]
  \ifTablambdaChinese 作用于 $r = Y\,(\mathtt{cons}\,0)$ 时折叠成一个有限的环形列表,即一个持有 $f\,0$ 的单元;该单元的尾部指向它自身;在有限列表上它就是教科书里的 map。 \else applied to $r = Y\,(\mathtt{cons}\,0)$ folds to a finite circular list, one cell holding $f\,0$ whose tail points to itself; on a finite list it is the textbook map. \fi

  \paragraph{\bilingual{Dynamic programming}{动态规划}}
  \ifTablambdaChinese 一棵深度为 $n$、其节点共享两个子节点的完美二叉树,是一张有 $n+1$ 个状态的图;该图展开为 $2^{n}$ 个叶子: \else A perfect binary tree of depth $n$ whose nodes share both children is a graph of $n+1$ states that unfolds to $2^{n}$ leaves: \fi
  \[
    \mathtt{node} = \lambda l.\lambda r.\lambda m.\lambda f.\,m\,l\,r, \qquad
    \mathtt{leaf} = \lambda v.\lambda m.\lambda f.\,f\,v,
  \]
  \[
    \mathtt{any} = Y\,\big(\lambda\mathit{self}.\lambda t.\;
    t\,(\lambda l.\lambda r.\,\mathtt{or}\,(\mathit{self}\,l)\,(\mathit{self}\,r))\,(\lambda v.\,v)\big),
    \qquad t_0 = \mathtt{leaf}\;\mathtt{F},\quad t_{k+1} = \mathtt{node}\;t_k\;t_k,
  \]
  \ifTablambdaChinese 因此 $\mathtt{any}\,t_n$ 读出为 $\mathtt{F}$,其 $n+1$ 个共享子树各求值一次,在 $n$ 上是线性的,而朴素递归会访问 $2^{n}$ 个叶子。 \else so $\mathtt{any}\,t_n$ reads to $\mathtt{F}$ with each of the $n+1$ shared subtrees evaluated once, linear in $n$ where a naive recursion would visit $2^{n}$ leaves. \fi

  \paragraph{\bilingual{Game search}{博弈搜索}}
  \ifTablambdaChinese
  极小化极大是一棵与或树:一个极大化局面是其各走法之上的析取,一个极小化局面则是合取,一个终局是一个布尔值;一个\emph{置换}(transposition),即由不同走法次序到达的同一个局面,是同一个内化状态,只计算一次。
  \else
  Minimax is an and-or tree: a maximizing position is the disjunction over its moves, a minimizing position the conjunction, a terminal a Boolean; a \emph{transposition}, the same position reached by a different move order, is the same interned state, computed once.
  \fi
  \[
    \mathtt{max} = \lambda l.\lambda r.\lambda x.\lambda n.\lambda f.\,x\,l\,r, \quad
    \mathtt{min} = \lambda l.\lambda r.\lambda x.\lambda n.\lambda f.\,n\,l\,r, \quad
    \mathtt{leaf} = \lambda v.\lambda x.\lambda n.\lambda f.\,f\,v,
  \]
  \[
    \mathtt{value} = Y\,\big(\lambda\mathit{self}.\lambda p.\;
      p\,(\lambda l.\lambda r.\,\mathtt{or}\,(\mathit{self}\,l)(\mathit{self}\,r))\,
    (\lambda l.\lambda r.\,\mathtt{and}\,(\mathit{self}\,l)(\mathit{self}\,r))\,(\lambda v.\,v)\big).
  \]

  \paragraph{\bilingual{Graph reachability and program analysis}{图可达性与程序分析}}
  \ifTablambdaChinese 有向图上的可达性(含环)是直接结论算子的最小不动点~\cite{vanemden1976-predicate-logic-semantics}。Andersen 式的指向(别名)分析~\cite{andersen1994-program-analysis-c, bravenboer2009-doop},即由 $\mathtt{assign}(p,q)$ 与 $\mathtt{pointsTo}(q,o)$ 推出 $\mathtt{pointsTo}(p,o)$,是单调 Datalog。一个 Herbrand 基有 $k$ 个原子的 ground 程序就是一个纯项:模型是 $k$ 个布尔值的 Church 元组,直接结论算子 $T_P$ 是一个元组到元组的项(每个原子一个析取,遍历推出它的各子句体的合取),而最小 Herbrand 模型是把 $T_P$ 从全假元组出发迭代 $k$ 次,即一次 Church 数码迭代,$k$ 界住格的高度;查询一个目标原子是一次投影,而推导环从不成为求解器的环:界住运行的是迭代次数,不是重入。认出一个重访的状态需要一次同一性测试,而演算没有同一性;单调算子在有限格上的稳定只需要一个计数(新事实还能出现多少次),而计数演算有,即一个数码。环状的输入由此成为无环的计算:闭合环的那条边,不过是一个原子下一轮的真值咨询另一个原子这一轮的真值。此处 tabling 的收益就是导出事实表本身:每一轮的模型是一个内化的项,其各胞元只被计算一次,无论有多少子句体咨询它们;而普通归约每咨询一次就重算一次。环状图 $a\to b\to c\to a$ 加上 $c\to d$ 与孤立的 $e$,以及针对 $a=\mathtt{new}\,o_1;\,b=a;\,c=b$、对象为 $o_1, o_2$ 的指向程序,是以下子句集 \else Reachability over a directed graph, cycles included, is the least fixpoint of the immediate-consequence operator~\cite{vanemden1976-predicate-logic-semantics}. Andersen-style points-to (alias) analysis \cite{andersen1994-program-analysis-c, bravenboer2009-doop}, $\mathtt{pointsTo}(p,o)$ from $\mathtt{assign}(p,q)$ and $\mathtt{pointsTo}(q,o)$, is monotone Datalog. A ground program over a Herbrand base of $k$ atoms is one pure term: a model is a Church tuple of $k$ Booleans, the consequence operator $T_P$ is a tuple-to-tuple term, one disjunction per atom over the conjunctions of its clauses' bodies, and the least Herbrand model is $T_P$ iterated $k$ times from the all-false tuple, a Church-numeral iteration, $k$ bounding the lattice's height; a goal atom is a projection, and a derivation cycle never becomes a solver cycle, the iteration count rather than a re-entry bounding the run. Recognizing a revisited state takes an identity test, which the calculus lacks; stabilization of a monotone operator on a finite lattice takes only a count of how many times a new fact can appear, and a count the calculus has, as a numeral. The cyclic input thereby becomes an acyclic computation: the edge closing the cycle is no more than one atom's next-round truth consulting another's current-round truth. The tabling win here is the derived-fact table itself: each round's model is one interned term whose cells are computed once, however many clause bodies consult them, where plain reduction recomputes a cell at every consultation. The cyclic graph $a\to b\to c\to a$ with $c\to d$ and an isolated $e$, and the points-to program for $a=\mathtt{new}\,o_1;\,b=a;\,c=b$ over the objects $o_1, o_2$, are the clause sets \fi
  \[
    \begin{aligned}
      &\mathtt{reach}(a),\quad \mathtt{reach}(b)\leftarrow\mathtt{reach}(a),\quad
      \mathtt{reach}(c)\leftarrow\mathtt{reach}(b),\\
      &\mathtt{reach}(a)\leftarrow\mathtt{reach}(c),\quad \mathtt{reach}(d)\leftarrow\mathtt{reach}(c);\\
      &\mathtt{pt}(a,o_1),\quad \mathtt{pt}(b,X)\leftarrow\mathtt{pt}(a,X),\quad
      \mathtt{pt}(c,X)\leftarrow\mathtt{pt}(b,X),
    \end{aligned}
  \]
  \ifTablambdaChinese
  其 Herbrand 基包含不可推导的原子 $\mathtt{reach}(e)$ 与 $\mathtt{pt}(c,o_2)$;读回断言 $\mathtt{reach}(d)$ 与 $\mathtt{pt}(c,o_1)$ 成立,而那两个不成立,读出的是 FALSE 而非 $\bot$。
  \else
  whose Herbrand bases include the underivable atoms $\mathtt{reach}(e)$ and $\mathtt{pt}(c,o_2)$; the readback asserts that $\mathtt{reach}(d)$ and $\mathtt{pt}(c,o_1)$ hold while those two do not, reading FALSE rather than $\bot$.
  \fi

  \paragraph{\bilingual{Lexical analysis}{词法分析}}
  \ifTablambdaChinese 一个确定性有限自动机是一张有限的环状转移图,而运行它是对输入的一次折叠;该折叠把状态穿引其中。把偶状态与符号 $a$ 编码为 $\mathtt{T}$,奇状态与 $b$ 编码为 $\mathtt{F}$,二者都是双向选择子, \else A deterministic finite automaton is a finite cyclic transition graph, and running it is a fold over the input that threads the state. Encoding the even state and the symbol $a$ as $\mathtt{T}$, the odd state and $b$ as $\mathtt{F}$, both two-way selectors, \fi
  \[
    \delta = \lambda s.\lambda x.\,s\,(x\,\mathtt{F}\,\mathtt{T})\,(x\,\mathtt{T}\,\mathtt{F}), \qquad
    \mathtt{accepts} = \lambda s.\,s\,\mathtt{T}\,\mathtt{F},
  \]
  \[
    \mathtt{run} = Y\,\big(\lambda\mathit{self}.\lambda s.\lambda w.\;
    w\,(\lambda h.\lambda t.\,\mathit{self}\,(\delta\,s\,h)\,t)\,s\big),
  \]
  \ifTablambdaChinese 而 $\mathtt{accepts}\,(\mathtt{run}\,\mathtt{T}\,w)$ 为 $\mathtt{T}$ 当且仅当 $w$ 中 $a$ 的个数为偶数。 \else and $\mathtt{accepts}\,(\mathtt{run}\,\mathtt{T}\,w)$ is $\mathtt{T}$ exactly when $w$ has an even number of $a$s. \fi

  \paragraph{\bilingual{Unification and recursive types}{合一与递归类型}}
  \ifTablambdaChinese
  两张解图所展开成的行为之间的相等,即有理树合一~\cite{colmerauer1984-trees} 与等递归($\mu$-)类型相等~\cite{amadio1993-subtyping-recursive-types, brandt1998-coinductive-type-equality} 的可判定核心,是余归纳的~\cite{courcelle1983-infinite-trees}:一次重访到自身的比较未曾发现分歧,应当回答相等,这是一个最大不动点;而求值器算的是最小不动点,于是裸布尔相等在相等的环上停在 $\bot$。给比较加上护卫便补上这个缺口:裁决流 $\mathtt{eqS}\,a\,b = \mathtt{cons}\,(\mathtt{eq}\,h_a\,h_b)\,(\mathtt{eqS}\,t_a\,t_b)$ 在递归之前先暴露一层裁决,于是求解器像折叠任何生产性的项一样折叠它。在 $Y\,(\mathtt{cons}\;0)$ 对 $Y\,(\lambda s.\,\mathtt{cons}\;0\;s)$ 上,裁决流是单个 $\mathtt{T}$ 构成的有限环;对 $Y\,(\mathtt{cons}\;1)$,第一层裁决就是 $\mathtt{F}$。「没有 $\mathtt{F}$ 出现」从有限的裁决图上读出,这与画出任何一张图是同一种元语言观察。
  \else
  Equality of the behaviors two solution graphs unfold to, the decidable core of rational-tree unification \cite{colmerauer1984-trees} and equirecursive ($\mu$-) type equality \cite{amadio1993-subtyping-recursive-types, brandt1998-coinductive-type-equality}, is coinductive~\cite{courcelle1983-infinite-trees}: a comparison that revisits itself has found no disagreement and should answer equal, a greatest fixpoint, while the evaluator computes least ones, so a bare Boolean equality stalls at $\bot$ on an equal cycle. Guarding the comparison closes the gap: the verdict stream $\mathtt{eqS}\,a\,b = \mathtt{cons}\,(\mathtt{eq}\,h_a\,h_b)\,(\mathtt{eqS}\,t_a\,t_b)$ exposes one per-level verdict before recursing, so the solver folds it like any productive term. On $Y\,(\mathtt{cons}\;0)$ against $Y\,(\lambda s.\,\mathtt{cons}\;0\;s)$ the verdict stream is the finite cycle of a single $\mathtt{T}$; against $Y\,(\mathtt{cons}\;1)$ the first verdict is $\mathtt{F}$. That no $\mathtt{F}$ occurs is read off the finite verdict graph, the same metalanguage observation that draws every graph.
  \fi

  \paragraph{\bilingual{The metalanguage needs tabling too}{元语言也需要 tabling}}
  \ifTablambdaChinese
  解释器自身的宿主语言项构造器反身地说明了这一点。通过复用一个构造器来写出一个共享子项,会得到一个项;该项的构造把共享指数式地展开,而一个环状的构造器则不会终止——直到该构造器按绑定子深度被记忆化,这是 tabling 的宿主语言对应物。直觉的代码恰恰在元语言不做 tabling 之处变慢或不终止。这里的备忘键是句法构造器,而非演算据以折叠的一阶行为同一性,但道理是一样的。
  \else
  The interpreter's own host-language term builder illustrates the point reflexively. Writing a shared subterm by reusing a builder gives a term whose construction unfolds the sharing exponentially, and a cyclic builder would not terminate, until the builder is memoized by binder depth, the host-language analogue of tabling. Intuitive code is slow or non-terminating exactly where the metalanguage does not table. The memo key here is the syntactic builder rather than the first-order behavioral identity the calculus folds on, but the moral is the same.
  \fi

  \paragraph{\bilingual{Further instances}{更多实例}}
  \ifTablambdaChinese
  我们预期同一个有理核心也支撑着此处未予构造的其他例子:e-图上的相等饱和(equality saturation)~\cite{willsey2021-egg, zhang2023-egglog} 与二元决策图~\cite{bryant1986-bdd},二者都是对一张共享图的内化;环状堆上的垃圾回收可达性;以及作为有限抽象域上不动点的抽象解释。每一个都是有限状态、正则、或有限格上不动点的计算,即该构造所读取的那种形状。
  \else
  We expect the same rational core to underlie others not constructed here: equality saturation over e-graphs
  \cite{willsey2021-egg, zhang2023-egglog} and binary decision diagrams \cite{bryant1986-bdd}, both of which are interning of a shared graph; garbage-collection reachability over a cyclic heap; and abstract interpretation as a fixpoint over a finite abstract domain. Each is a finite-state, regular, or finite-lattice-fixpoint computation, the shape the construction reads.
  \fi

  \section{\bilingual{Worked Examples}{实例}}\label{app:examples}
  \ifTablambdaChinese
  第~\ref{sec:application} 节三个子节背后的生成材料:每个项连同它由以构建的库,以及求解器的一段 trace。所有这些都由实现生成、而非手工誊写,trace 由运行解释器得到、列表由源项得到,且可由\anon[补充材料]{随附的代码~\cite{yang2026-tablambda}}重现。
  \else
  The generated material behind the three subsections of Section~\ref{sec:application}: each term with the library it is built from, and a trace of the solver. All of it is generated from the implementation, not transcribed by hand, the traces by running the interpreter and the listings from the source terms, and can be reproduced from\anon[ the supplementary material]{the accompanying code~\cite{yang2026-tablambda}}.
  \fi

  \subsection{\bilingual{Edit Distance: The Term and Its Library}{编辑距离:项及其库}}\label{app:editdistance-code}
  \ifTablambdaChinese
  编辑距离项 $\mathtt{ed}$ 及其由以构建的纯 lambda 库。每条定义只用到它上方的名字,并由源项渲染,因此该列表不会与实现脱节:绑定变量名就是源所携带的那些,而一个库常量按名字显示、而非展开。递归是 $Y$ 组合子,字符码是 BinNat,而 $\mathtt{nil}$ 也是 BinNat 零,因此整个列表是闭合的纯 $\lambda$-演算。
  \else
  The edit-distance term $\mathtt{ed}$ and the pure-lambda library it is built from. Each definition uses only the names above it, and is rendered from the source term, so the listing cannot drift from the implementation: the bound-variable names are the ones the source carries, and a library constant is shown by name rather than expanded. Recursion is the $Y$ combinator, the character codes are BinNats, and $\mathtt{nil}$ is also the BinNat zero, so the whole listing is closed pure $\lambda$-calculus.
  \fi

\smallskip\noindent\textit{fixed point}\par\nobreak
\begin{align*}
\begin{autobreak}
Y = \lambda f.\,
(\lambda x.\,
f\,
(x\,
x))\,
(\lambda x.\,
f\,
(x\,
x))
\end{autobreak}
\end{align*}
\smallskip\noindent\textit{Scott booleans}\par\nobreak
\begin{align*}
\begin{autobreak}
\mathtt{true} = \lambda a.\,
\lambda b.\,
a
\end{autobreak}
\end{align*}
\begin{align*}
\begin{autobreak}
\mathtt{false} = \lambda a.\,
\lambda b.\,
b
\end{autobreak}
\end{align*}
\begin{align*}
\begin{autobreak}
\mathtt{and} = \lambda p.\,
\lambda q.\,
p\,
q\,
\mathtt{false}
\end{autobreak}
\end{align*}
\begin{align*}
\begin{autobreak}
\mathtt{or} = \lambda p.\,
\lambda q.\,
p\,
\mathtt{true}\,
q
\end{autobreak}
\end{align*}
\smallskip\noindent\textit{Scott lists (nil is also the BinNat zero)}\par\nobreak
\begin{align*}
\begin{autobreak}
\mathtt{nil} = \lambda c.\,
\lambda n.\,
n
\end{autobreak}
\end{align*}
\begin{align*}
\begin{autobreak}
\mathtt{cons} = \lambda h.\,
\lambda t.\,
\lambda c.\,
\lambda n.\,
c\,
h\,
t
\end{autobreak}
\end{align*}
\smallskip\noindent\textit{comparison verdicts (a verdict picks one of less, equal, greater)}\par\nobreak
\begin{align*}
\begin{autobreak}
\mathtt{less} = \lambda \mathit{less}.\,
\lambda \mathit{equal}.\,
\lambda \mathit{greater}.\,
\mathit{less}
\end{autobreak}
\end{align*}
\begin{align*}
\begin{autobreak}
\mathtt{equal} = \lambda \mathit{less}.\,
\lambda \mathit{equal}.\,
\lambda \mathit{greater}.\,
\mathit{equal}
\end{autobreak}
\end{align*}
\begin{align*}
\begin{autobreak}
\mathtt{greater} = \lambda \mathit{less}.\,
\lambda \mathit{equal}.\,
\lambda \mathit{greater}.\,
\mathit{greater}
\end{autobreak}
\end{align*}
\smallskip\noindent\textit{BinNat bit operations}\par\nobreak
\begin{align*}
\begin{autobreak}
\mathtt{one} = \mathtt{cons}\,
\mathtt{true}\,
\mathtt{nil}
\end{autobreak}
\end{align*}
\begin{align*}
\begin{autobreak}
\mathtt{not} = \lambda \mathit{bit}.\,
\mathit{bit}\,
\mathtt{false}\,
\mathtt{true}
\end{autobreak}
\end{align*}
\begin{align*}
\begin{autobreak}
\mathtt{xor} = \lambda \mathit{left}.\,
\lambda \mathit{right}.\,
\mathit{left}\,
(\mathtt{not}\,
\mathit{right})\,
\mathit{right}
\end{autobreak}
\end{align*}
\begin{align*}
\begin{autobreak}
\mathtt{maj} = \lambda \mathit{first}.\,
\lambda \mathit{second}.\,
\lambda \mathit{third}.\,
\mathtt{or}\,
(\mathtt{and}\,
\mathit{first}\,
\mathit{second})\,
(\mathtt{or}\,
(\mathtt{and}\,
\mathit{first}\,
\mathit{third})\,
(\mathtt{and}\,
\mathit{second}\,
\mathit{third}))
\end{autobreak}
\end{align*}
\begin{align*}
\begin{autobreak}
\mathtt{biteq} = \lambda \mathit{left}.\,
\lambda \mathit{right}.\,
\mathit{left}\,
\mathit{right}\,
(\mathtt{not}\,
\mathit{right})
\end{autobreak}
\end{align*}
\begin{align*}
\begin{autobreak}
\mathtt{bitcmp} = \lambda x.\,
\lambda y.\,
\mathtt{biteq}\,
x\,
y\,
\mathtt{equal}\,
(x\,
\mathtt{greater}\,
\mathtt{less})
\end{autobreak}
\end{align*}
\smallskip\noindent\textit{BinNat arithmetic and comparison}\par\nobreak
\begin{align*}
\begin{autobreak}
\mathtt{iszero} = Y\,
(\lambda \mathit{iszero}.\,
\lambda n.\,
n\,
(\lambda \mathit{bit}.\,
\lambda \mathit{rest}.\,
\mathit{bit}\,
\mathtt{false}\,
(\mathit{iszero}\,
\mathit{rest}))\,
\mathtt{true})
\end{autobreak}
\end{align*}
\begin{align*}
\begin{autobreak}
\mathtt{addc} = Y\,
(\lambda \mathit{addc}.\,
\lambda \mathit{carry}.\,
\lambda a.\,
\lambda b.\,
a\,
(\lambda x.\,
\lambda \mathit{xs}.\,
b\,
(\lambda y.\,
\lambda \mathit{ys}.\,
\mathtt{cons}\,
(\mathtt{xor}\,
(\mathtt{xor}\,
x\,
y)\,
\mathit{carry})\,
(\mathit{addc}\,
(\mathtt{maj}\,
x\,
y\,
\mathit{carry})\,
\mathit{xs}\,
\mathit{ys}))\,
(\mathtt{cons}\,
(\mathtt{xor}\,
x\,
\mathit{carry})\,
(\mathit{addc}\,
(\mathtt{and}\,
x\,
\mathit{carry})\,
\mathit{xs}\,
\mathtt{nil})))\,
(b\,
(\lambda y.\,
\lambda \mathit{ys}.\,
\mathtt{cons}\,
(\mathtt{xor}\,
y\,
\mathit{carry})\,
(\mathit{addc}\,
(\mathtt{and}\,
y\,
\mathit{carry})\,
\mathtt{nil}\,
\mathit{ys}))\,
(\mathit{carry}\,
\mathtt{one}\,
\mathtt{nil})))
\end{autobreak}
\end{align*}
\begin{align*}
\begin{autobreak}
\mathtt{add} = \lambda a.\,
\lambda b.\,
\mathtt{addc}\,
\mathtt{false}\,
a\,
b
\end{autobreak}
\end{align*}
\begin{align*}
\begin{autobreak}
\mathtt{succ} = \lambda n.\,
\mathtt{add}\,
n\,
\mathtt{one}
\end{autobreak}
\end{align*}
\begin{align*}
\begin{autobreak}
\mathtt{cmp} = Y\,
(\lambda \mathit{cmp}.\,
\lambda a.\,
\lambda b.\,
a\,
(\lambda x.\,
\lambda \mathit{xs}.\,
b\,
(\lambda y.\,
\lambda \mathit{ys}.\,
\mathit{cmp}\,
\mathit{xs}\,
\mathit{ys}\,
\mathtt{less}\,
(\mathtt{bitcmp}\,
x\,
y)\,
\mathtt{greater})\,
(\mathtt{iszero}\,
(\mathtt{cons}\,
x\,
\mathit{xs})\,
\mathtt{equal}\,
\mathtt{greater}))\,
(b\,
(\lambda y.\,
\lambda \mathit{ys}.\,
\mathtt{iszero}\,
(\mathtt{cons}\,
y\,
\mathit{ys})\,
\mathtt{equal}\,
\mathtt{less})\,
\mathtt{equal}))
\end{autobreak}
\end{align*}
\begin{align*}
\begin{autobreak}
\mathtt{eq} = \lambda a.\,
\lambda b.\,
\mathtt{cmp}\,
a\,
b\,
\mathtt{false}\,
\mathtt{true}\,
\mathtt{false}
\end{autobreak}
\end{align*}
\begin{align*}
\begin{autobreak}
\mathtt{min} = \lambda a.\,
\lambda b.\,
\mathtt{cmp}\,
a\,
b\,
a\,
a\,
b
\end{autobreak}
\end{align*}
\smallskip\noindent\textit{edit distance}\par\nobreak
\begin{align*}
\begin{autobreak}
\mathtt{len} = Y\,
(\lambda \mathit{len}.\,
\lambda \mathit{items}.\,
\mathit{items}\,
(\lambda \mathit{head}.\,
\lambda \mathit{tail}.\,
\mathtt{succ}\,
(\mathit{len}\,
\mathit{tail}))\,
\mathtt{nil})
\end{autobreak}
\end{align*}
\begin{align*}
\begin{autobreak}
\mathtt{ed} = Y\,
(\lambda \mathit{ed}.\,
\lambda a.\,
\lambda b.\,
a\,
(\lambda \mathit{head}_{a}.\,
\lambda \mathit{tail}_{a}.\,
b\,
(\lambda \mathit{head}_{b}.\,
\lambda \mathit{tail}_{b}.\,
\mathtt{eq}\,
\mathit{head}_{a}\,
\mathit{head}_{b}\,
(\mathit{ed}\,
\mathit{tail}_{a}\,
\mathit{tail}_{b})\,
(\mathtt{succ}\,
(\mathtt{min}\,
(\mathit{ed}\,
\mathit{tail}_{a}\,
\mathit{tail}_{b})\,
(\mathtt{min}\,
(\mathit{ed}\,
\mathit{tail}_{a}\,
b)\,
(\mathit{ed}\,
a\,
\mathit{tail}_{b})))))\,
(\mathtt{len}\,
a))\,
(\mathtt{len}\,
b))
\end{autobreak}
\end{align*}

  \subsection{\bilingual{Edit Distance: The Trace}{编辑距离:trace}}\label{app:editdistance-trace}
  \ifTablambdaChinese
  算法~\ref{alg:solvewhnf} 的求解器在 $\mathtt{ed}\,(\mathtt{cons}\,a\,(\mathtt{cons}\,b\,\mathtt{nil}))\,(\mathtt{cons}\,c\,(\mathtt{cons}\,d\,\mathtt{nil}))$ 上的完整运行,由对解释器插桩记录而来。每一行是求解器到达的一个子问题:第一次是 \texttt{compute},此时它求解该项并把其层写入表;之后每次出现是 \texttt{hit},此时该项是同一个内化节点,其制表的层被直接返回而无需再次求解。缩进是调用深度,而 $a,b,c,d$ 是 BinNat 字符码。九个子问题被计算,即动态规划表的 $(m{+}1)(n{+}1)$ 个状态,而其余每一次出现都是一次命中。一行 \texttt{hit} 记录的是对一个制表条目的一次读取,而非图~\ref{fig:editdistance} 的一条边。图中的十二条边正是 trace 中互不相同的调用对,即哪个调用者请求哪个子问题:其中八条作为根以下的 \texttt{compute} 行出现,即每个子问题的首次到达;四条作为某个调用对的首次 \texttt{hit} 出现,即进入高亮节点的额外入边。其余八行 \texttt{hit} 重复一个已经出现过的调用对:把一次调用的三个结果组合起来的后继与三路最小运算,会再次读取同样的制表条目,而每一次读取都被记录。
  \else
  The complete run of the solver of Algorithm~\ref{alg:solvewhnf} on $\mathtt{ed}\,(\mathtt{cons}\,a\,(\mathtt{cons}\,b\,\mathtt{nil}))\,(\mathtt{cons}\,c\,(\mathtt{cons}\,d\,\mathtt{nil}))$, recorded by instrumenting the interpreter. Each line is a subproblem the solver reaches: \texttt{compute} the first time, when it solves the term and writes its layer to the table, and \texttt{hit} on every later occurrence, when the term is the same interned node and its tabled layer is returned without solving it again. Indentation is the call depth, and $a,b,c,d$ are the BinNat character codes. Nine subproblems are computed, the $(m{+}1)(n{+}1)$ states of the dynamic-programming table, and every other occurrence is a hit. A \texttt{hit} line records a read of a tabled entry, not an edge of Figure~\ref{fig:editdistance}. The figure's twelve edges are the distinct calling pairs of the trace, which caller requests which subproblem: eight appear as the \texttt{compute} lines below the root, the first reach of each subproblem, and four as the first \texttt{hit} of a pair, the further edges into the highlighted nodes. The remaining eight \texttt{hit} lines repeat a pair that has already appeared: the successor and three-way minimum that combine a call's three results read the same tabled entries again, and every read is logged.
  \fi

\begin{lstlisting}[language=]
compute ed (cons a (cons b nil)) (cons c (cons d nil)) = 2
  compute ed (cons b nil) (cons d nil) = 1
    compute ed nil nil = 0
    compute ed nil (cons d nil) = 1
    compute ed (cons b nil) nil = 1
    hit ed nil (cons d nil) = 1
    hit ed nil nil = 0
  compute ed (cons b nil) (cons c (cons d nil)) = 2
    hit ed nil (cons d nil) = 1
    compute ed nil (cons c (cons d nil)) = 2
    hit ed (cons b nil) (cons d nil) = 1
    hit ed (cons b nil) (cons d nil) = 1
    hit ed nil (cons d nil) = 1
  compute ed (cons a (cons b nil)) (cons d nil) = 2
    hit ed (cons b nil) nil = 1
    hit ed (cons b nil) (cons d nil) = 1
    compute ed (cons a (cons b nil)) nil = 2
    hit ed (cons b nil) (cons d nil) = 1
    hit ed (cons b nil) nil = 1
  hit ed (cons b nil) (cons c (cons d nil)) = 2
  hit ed (cons b nil) (cons d nil) = 1
\end{lstlisting}

  \subsection{\bilingual{The Stream of Zeros: The Term and Its Library}{零的流:项及其库}}\label{app:cyclic-zeros-code}
  \ifTablambdaChinese
  环状流 $r = Y\,(\mathtt{cons}\,0)$ 及其由以构建的小型库,由源项渲染。元素 $0$ 是 Church 零,$\mathtt{cons}$ 是 Scott 构造子,而递归是 $Y$ 组合子,因此整个列表是闭合的纯 $\lambda$-演算。
  \else
  The cyclic stream $r = Y\,(\mathtt{cons}\,0)$ and the small library it is built from, rendered from the source terms. The element $0$ is the Church zero, $\mathtt{cons}$ is the Scott constructor, and recursion is the $Y$ combinator, so the whole listing is closed pure $\lambda$-calculus.
  \fi

\smallskip\noindent\textit{fixed point}\par\nobreak
\begin{align*}
\begin{autobreak}
Y = \lambda f.\,
(\lambda x.\,
f\,
(x\,
x))\,
(\lambda x.\,
f\,
(x\,
x))
\end{autobreak}
\end{align*}
\smallskip\noindent\textit{Scott list constructor}\par\nobreak
\begin{align*}
\begin{autobreak}
\mathtt{cons} = \lambda h.\,
\lambda t.\,
\lambda c.\,
\lambda n.\,
c\,
h\,
t
\end{autobreak}
\end{align*}
\smallskip\noindent\textit{the element}\par\nobreak
\begin{align*}
\begin{autobreak}
0 = \lambda s.\,
\lambda z.\,
z
\end{autobreak}
\end{align*}
\smallskip\noindent\textit{the cyclic stream}\par\nobreak
\begin{align*}
\begin{autobreak}
r = Y\,
(\mathtt{cons}\,
0)
\end{autobreak}
\end{align*}

  \subsection{\bilingual{The Stream of Zeros: The Trace}{零的流:trace}}\label{app:cyclic-zeros-trace}
  \ifTablambdaChinese
  对 $r = Y\,(\mathtt{cons}\,0)$ 的图的枚举,由解释器记录;枚举本身是元语言观察(第~\ref{sec:bridge} 节)。一个状态上的每次 \texttt{Out} 向解释器求取该状态的层,即运行弱头步骤 \texttt{WHNF}、收缩头部 redex 以暴露 cons 单元,然后跟随尾部;记 $W = \lambda x.\,\mathtt{cons}\,0\,(x\,x)$,展开 $Y$ 把 $r$ 改写成自应用 $W\,W$;该 $W\,W$ 暴露出 $\mathtt{cons}\,0\,(W\,W)$。尾部 $W\,W$ 在该节点仍处于枚举路径上时被到达,于是回边在那里闭合。每一步都从真实的内化项读出,所暴露的层与解释器的弱头范式核对一致。
  \else
  The enumeration of the graph of $r = Y\,(\mathtt{cons}\,0)$, recorded from the interpreter; the enumeration itself is metalanguage observation (Section~\ref{sec:bridge}). Each \texttt{Out} on a state demands its layer from the interpreter, running the weak head step \texttt{WHNF} that contracts head redexes to expose the cons cell, then follows the tail; writing $W = \lambda x.\,\mathtt{cons}\,0\,(x\,x)$, unfolding $Y$ rewrites $r$ to the self-application $W\,W$, which exposes $\mathtt{cons}\,0\,(W\,W)$. The tail $W\,W$ is reached while that node is still on the enumeration's path, so the back edge closes there. Every step is read off the real interned terms, the exposed layers checked against the interpreter's weak head normal form.
  \fi

\begin{lstlisting}[language=]
Out r:
  WHNF:  Y (cons 0)  ->  W W  ->  cons 0 (W W)
  compute r => cons 0 (W W)
  Out W W:
    WHNF:  W W  ->  cons 0 (W W)
    compute W W => cons 0 (W W)
    tail W W: on the stack -> back edge (cycle closes)
\end{lstlisting}

  \subsection{\bilingual{$\Omega$: The Term and Its Library}{$\Omega$:项及其库}}\label{app:omega-code}
  \ifTablambdaChinese
  发散项 $\Omega = \omega\,\omega$ 及其由以构建的自应用 $\omega = \lambda x.\,x\,x$,由源项渲染。
  \else
  The divergent term $\Omega = \omega\,\omega$ and the self-application $\omega = \lambda x.\,x\,x$ it is built from, rendered from the source terms.
  \fi

\smallskip\noindent\textit{self-application}\par\nobreak
\begin{align*}
\begin{autobreak}
\omega = \lambda x.\,
x\,
x
\end{autobreak}
\end{align*}
\smallskip\noindent\textit{the divergent term}\par\nobreak
\begin{align*}
\begin{autobreak}
\Omega = \omega\,
\omega
\end{autobreak}
\end{align*}

  \subsection{\bilingual{$\Omega$: The Trace}{$\Omega$:trace}}\label{app:omega-trace}
  \ifTablambdaChinese
  求解器在 $\Omega$ 上的运行,由解释器记录,与零的流那次相平行。\texttt{Out} 运行弱头步骤 \texttt{WHNF},它收缩头部 redex;收缩结果是 $\Omega$ 本身,即同一个内化项,因此求解器在它仍处于栈上、未暴露任何层时重入它,并返回其运行中逼近 $\bot$;重新计算不改变它,因此迭代已收敛,求解器停机。该归约从真实的项读出,收缩结果核对为 $\Omega$,结果核对为 $\bot$。
  \else
  The solver's run on $\Omega$, recorded from the interpreter and parallel to the stream's. \texttt{Out} runs the weak head step \texttt{WHNF}, which contracts the head redex; the contractum is $\Omega$ itself, the same interned term, so the solver re-enters it while it is still on the stack with no layer exposed and returns its running approximation, $\bot$; recomputation leaves it unchanged, so the iteration has converged and the solver halts. The reduction is read off the real terms, the contractum checked to be $\Omega$ and the result $\bot$.
  \fi

\begin{lstlisting}[language=]
Out Omega:
  WHNF:  (\x. x x) (\x. x x)  ->  (\x. x x) (\x. x x)   (head redex contracts to Omega itself, the same interned term)
  re-enter Omega: on the stack, no layer exposed  ->  bottom
Omega = bottom
\end{lstlisting}

  \fi 

  \ifTablambdaBody\else
  \bibliographystyle{ACM-Reference-Format}
  \bibliography{references}
  \fi

  \ifTablambdaChinese\clearpage
\end{CJK*}\fi
\end{document}